\documentclass[11pt,a4paper]{article}
\usepackage[T1]{fontenc}

\usepackage[right=3cm, a4paper]{geometry}

\usepackage[english]{babel}

\usepackage[a-2u]{pdfx}

\usepackage{lmodern}

\usepackage{amsmath}        % extensions for typesetting of math
\usepackage{amsfonts}       % math fonts
\usepackage{amsthm}         % theorems, definitions, etc.
\usepackage{amssymb}
\usepackage{bbding}         % various symbols (squares, asterisks, scissors, ...)
\usepackage{bm}             % boldface symbols (\bm)
\usepackage{bbm}

\usepackage{graphicx}       % embedding of pictures
\usepackage{fancyvrb}       % improved verbatim environment
\usepackage[
style=alphabetic,
sorting=nyt,
maxcitenames=99,
mincitenames=99,
minbibnames=99,
maxbibnames=99,
minalphanames=5,
maxalphanames=10,
backend=biber
]{biblatex} %Imports biblatex package
\usepackage[nottoc]{tocbibind} % makes sure that bibliography and the lists
\usepackage{dcolumn}        % improved alignment of table columns
\usepackage{booktabs}       % improved horizontal lines in tables
\usepackage{paralist}       % improved enumerate and itemize
\usepackage{enumitem}

\usepackage{csquotes}
\usepackage{dirtytalk}
\usepackage{relsize}
\usepackage{mathtools}
\usepackage{xfrac}
\usepackage{tikz}
\usepackage{tikz-cd}
\usepackage{exscale}
\usepackage{xcolor}         % typesetting in color
\usepackage{comment} 
\usepackage{titlesec}

\usepackage{quiver}

\mathtoolsset{showonlyrefs}  %This only shows equation numbers for equations that are referred to.

\makeatletter
\def\@makechapterhead#1{
  {\parindent \z@ \raggedright \normalfont
   \Huge\bfseries \thechapter. #1
   \par\nobreak
   \vskip 20\p@
}}
\def\@makeschapterhead#1{
  {\parindent \z@ \raggedright \normalfont
   \Huge\bfseries #1
   \par\nobreak
   \vskip 20\p@
}}
\makeatother

\newtheoremstyle{theoremnew}% name
  {.5\baselineskip\@plus.2\baselineskip\@minus.2\baselineskip}% Space above
  {.5\baselineskip\@plus.2\baselineskip\@minus.2\baselineskip}% Space below
  {\slshape}% Body font
  {}%Indent amount (empty = no indent, \parindent = para indent)
  {\bfseries}%  Thm head font
  {.}%       Punctuation after thm head
  { }%      Space after thm head: " " = normal interword space;
  {}%       Thm head spec

\theoremstyle{plain}
\newtheorem{theorem}{Theorem}[section]
\newtheorem{prop}[theorem]{Proposition}
\newtheorem{lemma}[theorem]{Lemma}

\newtheorem{cor}[theorem]{Corollary}
\newtheorem{remark}[theorem]{Remark}

\theoremstyle{remark}

\theoremstyle{definition}
\newtheorem{newdefn}[theorem]{Definition} 
\newenvironment{definition}
{\renewcommand{\qedsymbol}{$\blacktriangle$}%
\pushQED{\qed}\begin{newdefn}}
{\popQED\end{newdefn}}
\newtheorem{example}[theorem]{Example}

\renewcommand\qedsymbol{$\blacksquare$}

\definecolor{hyperlink}{RGB}{11,0,128}
\hypersetup{colorlinks, allcolors=hyperlink, linktoc=all}

\makeatletter
\newdimen\scalemath@axis
\newcommand*{\scalemath}[3]{%
  #1{%
    \mathpalette{\scalemath@aux{#2}}{#3}%
  }%
}
\newcommand*{\scalemath@aux}[3]{%
  \begingroup
    \everyvbox{}%
    \settoheight\scalemath@axis{$#2\vcenter{}$}%
    \raisebox{\scalemath@axis}{%
      \scalebox{#1}{%
        \raisebox{-\scalemath@axis}{%
          $\m@th#2#3$%
        }%
      }%
    }%
  \endgroup
}
\makeatother 

\newcommand{\email}[1]{\href{mailto:#1}{#1}}

\definecolor{MyRed}{rgb}{0.75, 0.00, 0.2}
\definecolor{MyBlue}{rgb}{0.4, 0.1, 0.70}
\definecolor{JanColor}{rgb}{0.07, 0.50, 0.00}
\definecolor{MartinColor}{rgb}{0.00, 0.31, 0.75}
\definecolor{BranoColor}{rgb}{7.00, 0.60, 0.00}
\definecolor{Gray}{HTML}{888888}

         \DeclareMathAlphabet{\mathscr}{U}{BOONDOX-cal}{m}{n}
         \SetMathAlphabet{\mathscr}{bold}{U}{BOONDOX-cal}{b}{n}
         \DeclareMathAlphabet{\mathbscr} {U}{BOONDOX-cal}{b}{n}
         \DeclareMathAlphabet{\mathpzc}{OT1}{pzc}{m}{it}

\DeclareMathAlphabet\mathbfcal{OMS}{cmsy}{b}{n}
  \newcommand{\restr}[2]{{% we make the whole thing an ordinary symbol
  \left.\kern-\nulldelimiterspace % automatically resize the bar with \right
  #1 % the function
  \vphantom{\big|} % pretend it's a little taller at normal size
  \right|_{#2} % this is the delimiter
  }}
  
  \newcommand{\restrsmall}[2]{{% we make the whole thing an ordinary symbol
  \left.\kern-\nulldelimiterspace % automatically resize the bar with \right
  #1 % the function
  \vphantom{|} % pretend it's a little taller at normal size
  \right|_{#2} % this is the delimiter
  }}

\newcommand{\F}[1]{\mathcal{F}  #1 }
\newcommand{\Fun}{\mathcal{F}}

\newcommand{\qGF}{q^\mathrm{GF}}
\newcommand{\QGF}{Q^\mathrm{GF}}

\newcommand{\K}{\mathcal{K}}

\newcommand{\Z}{\mathbb{Z}}

\newcommand{\BV}{\Delta}

\newcommand{\dhalf}{\operatorname{d}^{\frac12} \!}
\newcommand{\Ber}[1]{\operatorname{Ber} \left(#1\right)}
\newcommand{\Sfree}{{S_{\mathrm{free}}}}
\newcommand{\Sfr}[1]{{S_{\mathrm{free}}^{#1}}}

\newcommand{\Sint}{S_{\mathrm{int}}}

\renewcommand{\d}{\operatorname{d} \!}
\newcommand{\dif}{\operatorname{d} }

\newcommand{\proj}{\operatorname{proj}}

\newcommand{\Lie}{\mathfrak{L}}

\renewcommand{\Im}{\operatorname{Im}}
\renewcommand{\ker}{\operatorname{Ker}}
\newcommand{\Ker}{\operatorname{Ker}}

\newcommand{\define}{ \coloneqq }

\renewcommand{\deg}[1]{ \left\vert #1 \right\vert }
\newcommand{\after}{\circ}

\newcommand{\flip}[1]{\overline{#1}}

\renewcommand{\tilde}[1]{\widetilde{#1}}

\newcommand{\Sym}{\operatorname{Sym}}

\newcommand{\HalfDens}{\operatorname{Dens}^{\frac12}}

\title{Homotopies in Batalin-Vilkovisky Formalism}

\author{Branislav Jurčo$^{a}$, Ján Pulmann$^{b}$, and Martin Zika$^{c}$\\
\\
         Mathematical Institute, Faculty of Mathematics and Physics, Charles University, \\
        Prague, Czech Republic \\ \\
        \small $^{a}$ \email{Branislav.Jurco@mff.cuni.cz}, $^{b}$ \email{Jan.Pulmann@gmail.com}, $^{c}$ \email{martin.zika.mail@gmail.com}
}

\date{} %leave blank

\begin{document}

\maketitle

\begin{abstract}
    We review the notion of homotopy of quantum master actions in geometric Batalin-Vilkovisky formalism. Then we construct new examples of such homotopies, coming from renormalization group flow and non-infinitesimal changes of gauge fixing.  Finally, we use the field redefinitions given by these homotopies to construct spans of quantum master actions with isomorphic effective actions.
\end{abstract}

\section{Introduction}
Batalin-Vilkovisky (BV) formalism was introduced as a tool to handle perturbative gauge quantum field theories \cite{BatalinVilkovisky}. Afterwards, Schwarz and Khudaverdian \cite{schwarz:geometry_of_bv, khudaverdian} interpreted it in terms of odd symplectic geometry, and Ševera combined it with Weinstein's idea of the symplectic category \cite{weinstein:symplecticgeometry, severa:qosc}.

In BV formalism, a quantum field theory is given by a half-density $\rho$ on an odd symplectic manifold of BV fields, which is closed with respect to the BV operator $\Delta$. Consequently, one of  many insights of \emph{homological algebra} into BV formalism is a meaningful definition of homotopy of such theories $\rho_{1}, \rho_{2}$: namely a half-density $\varphi$ such $\rho_2 -\rho_1 = \Delta \varphi$. In physics terminology, $\rho$ is the exponential of a quantum master action $S$ and a change of $\rho$ by homotopy corresponds to a canonical transformation in the sense of BV formalism.

The goal of this paper is to explain some of the consequences of this notion of homotopy, and to give many examples of such homotopies. These will come from symplectic flows, the renormalization group flow and homological perturbation theory. Moreover, we investigate the space of BV transfers in the setting of linear finite-dimensional vector spaces, where we prove that ``strict'' transfers onto cohomology are all homotopy equivalent. Finally, we use these methods to construct a \say{fiber product} of theories with isomorphic minimal models.

In this way, we extend the ideas of Hamiltonian flows giving homotopic theories \cite{severa:qosc}, infinitesimal deformations of Lagrangian relations \cite{mnev:simplicialBF, mnev:discreteBF, cattaneo_mnev:remarks_on_cs}, renormalization group techniques of Costello \cite{costello:renormalisation_and_bv, costelloBook} and spans of classical BV theories \cite{farahani_saemann_wolf:spans_of_Linfty}.

The geometric perspective on BV is now gaining popularity. In addition to the references above we would like to cite the recent paper of Ševera \cite{severa:halfdens}, which, among other things, sorts out the issues regarding orientations and distributions; as well as \cite{cattaneo_mnev:BVpushforward} which seems to suggest a physical origin to the homological perturbation formulas, which we use to construct some of our homotopies. We hope this work supports the spread of these beautiful geometric ideas underpinning quantum field theory.

\paragraph{Content of the paper.}
We start by giving a quick overview of BV formalism
in Section \ref{sec:background}. In Section \ref{sec:linearBV}, we review the description of linear perturbative BV formalism in terms of homological perturbation theory.

In Section \ref{sec:ergf}, we show that a version of Costello's renormalization group flow can be seen as giving homotopic actions. In Section \ref{section:parametrizing-gauge-fixings.}, we characterize the space of gauge-fixings in linear BV formalism, proving that it's contractible if we transfer onto cohomology. In Section \ref{sec:spans} given two theories with isomorphic minimal models, we construct them as reductions of a single, bigger theory.

We end with an appendix explaining our conventions for graded geometry and containing some calculations related to suspensions of symplectic flows.

\paragraph{Note on rigor.}
We tried to write this paper for a mixed audience of mathematical and theoretical physicists. For readers worried about rigor, let us note that we are sloppy about $\mathbb Z$-graded geometry in the background Section \ref{sec:background} and the last section on spans Section \ref{sec:spans}, and functional analysis for infinite-dimensional heat kernels in Section \ref{sec:ergf}. The rest of the claims are mathematical propositions.

\begin{comment}
\begin{itemize}
    \item We don't have to introduce BV in the introduction much, as we give a very general overview in the next section
    \item We should rather explain that in BV: propagators are given by gauge fixings, which are given by homotopy operators, and how HPL fits here
    \item What we do new relative to Costello and Cattaneo-Mnev
    \item A note on rigor: the section 2 and maybe 4 is physicsy, issues with Z graded manifolds and maybe functional analysis is section 4. Section 5, 6 is rigorious (theorems).
\end{itemize}
\end{comment}

\subsection{Acknowledgements}
We would also like to think Roman Golovko for discussions about Lagrangian cobordisms. J.P. would also like to thank Pavol Ševera for discussions about homotopies and degrees in the context of BV pushforward. J.P. is also grateful to Jian Qiu, Fridrich Valach and Michal Vorobel for further discussions.

The work of J.P. was supported by GAČR grant PIF-25-17640I.  B.J. thanks for support to GAČR grant 24-10031K. M.Z. was supported by the GAUK 283723 and PRIMUS/25/SCI/018 grants.

\section{Background on Batalin-Vilkovisky formalism}\label{sec:background}

In this section, we review BV formalism, emphasizing a geometric perspective. We begin with a lighting overview of the BV fiber integral and explain how it implies the usual ``gauge invariance'' properties of the BV fiber integral. Then we expand on the details about half-densities and homotopies in BV formalism.  We formulate directly the \emph{quantum} BV formalism, the classical BV formalsim is the $\hbar \to 0$, tree level approximation to the perturbative fiber integral.

Most of the material in this section is known to experts, see for example \cite{severa:qosc, costello:renormalisation_and_bv, MikhailovSchwarz2017, cattaneo_mnev_reshetikhin:perturbative_gauge_theories, GetzlerPohorence, , costelloBook, cattaneo_mnev:BVpushforward, severa:halfdens}, we also tried to give more granular references throughout the section. However, we believe that some aspects of this geometric perspective are not widely appreciated. From textbook treatments,  we recommend especially \cite[Ch.~4]{mnev:book}, but see also the reviews \cite{qiu_zabzine:intro_to_bv, Cattaneo2023BVsemidensities, cattaneo_mnev_schiavinna:Enc}. 

\subsection{Geometric Batalin-Vilkovisky formalism: an overview}\label{sec:overview}
BV formalism is a framework for extracting (formal) claims about (infinite-dimensional) path integrals. These integrals range over the BV space of fields $M$, which is a $\mathbb Z$-graded manifold equipped with a symplectic form of degree $-1$. There are standard ways to obtain such BV spaces: For example, starting from the BRST formulation of a theory one adds antifields to fields and ghosts \cite[Sec.~15.9]{weinberg}; another option is the AKSZ construction \cite{aksz}, which defines the BV space as consisting of fields of a graded $\sigma$-model with a dg shifted symplectic target. 
\smallskip

A physical theory is specified by a half-density\footnote{Half-densities are sections of the square root of the density line bundle over $M$, i.e.\ in local coordinates they can be written as
$ \rho(x) \sqrt{dx} $ where the symbol $ \sqrt{dx}$ transforms with the square root of the Jacobian
under a change of coordinates; see Section \ref{sec:halfdens}.} on $M$ which should be interpreted as
\[ \rho = e^{S(x)/\hbar} \sqrt{dx}  \in \HalfDens (M), \]
i.e.\ a combination of the action functional $S$ and a square root of the (formal) measure on $M$. There is a canonical differential $\Delta$ which acts on half-densities, known as the BV operator or BV Laplacian. In local coordinates where $\omega = \omega_{ij} dx^i dx^j$ with $\omega_{ij}$ constant, $\Delta$ is 
\begin{equation}\label{eq:explicitBV}
	\Delta (\rho(x) \sqrt{dx}) = \tfrac 12 \omega^{ij}\frac{\partial^2 \rho(x)}{\partial x^i \partial x^j} \sqrt{dx}.
\end{equation}

The main operation of the BV formalism is the BV fiber integral or BV pushforward. Given a (surjective) \emph{Lagrangian relation} $R\colon M \to N$, the BV fiber integral is a map \cite[Section~6]{severa:qosc} \cite[Remark~2.14]{cattaneo_mnev_reshetikhin:perturbative_gauge_theories} 
\[ \int_R \colon \HalfDens(M) \to \HalfDens(N). \]
Very roughly, this should be understood as integrating over the fibers of the relation $R$
\begin{equation}
\left(\int_R \rho\right) (n) = \mkern-20mu \int\limits_{m \in M \, \mathrm{ s.t.} \,(m, n)\in R}\mkern-30mu \rho(m)\,.
\end{equation}
We explain in  Section \ref{sec:BVFiberIntegral} how to take the half-density factors into account.
\medskip

The fundamental property of the BV fiber integral is that it is a chain map
\begin{equation} \label{eq:BVischainmap}
	\int_R \Delta_M \rho =  \Delta_N \int_R\rho.
\end{equation}
Moreover, it is compatible with composition of relations: for relations $R\colon M \to N$ and $R' \colon N\to O$, one defines
$R'\circ R = \{ (m, o) \mid \exists n \in N \text{ s.t. } (m, n)\in R \text{ and } (n, o) \in R' \}\subset M\times O$; then\footnote{Note that in general, the composition of smooth relations is not necessarily a smooth relation anymore.}
\begin{equation}\label{eq:BVFubini}
	\int_{R'}\int_R \rho =   \int_{R'\circ R}\rho.
\end{equation}
These two properties should be thought of as the Stokes and Fubini theorems in the BV formalism. 
\medskip

\textbf{Let us now relate the above description with the usual presentations of the BV formalism.} 
A simple example of a Lagrangian relation is given by a Lagrangian submanifold $L\subset M$, which can be seen as a relation between $M$ and the point. Half-densities on a point are just numbers, the BV operator is zero, and we get
\begin{equation}\label{eq:integral_of_exact_zero} \int_L \Delta_M \rho = \Delta_\text{pt}\int_L \rho = 0. \end{equation}
\begin{equation}\label{eqtext:exact}
\text{\textit{That is, integrals of $\Delta$-exact functionals over Lagrangians vanish.}} \end{equation}

Let us consider a family of Lagrangian submanifolds $L_t \subset M$, and moreover assume they combine into a Lagrangian submanifold\footnote{Such Lagrangians are called \emph{Lagrangian cobordisms} by symplectic topologists \cite{Arnold1980}. The space $T^*[-1]\mathbb R$ has a degree $0$ coordinate $t$ and a degree $-1$ (fiber) coordinate $\tau$. The BV operator is $\partial^{2}/\partial t \partial \tau$. $T^*[-1]\mathbb R$ is the natural interval object in BV formalism. The standard naming conventions of BV would dictate that $\tau$ should be called the \emph{antitime}.} 
\begin{equation} \label{eq:LagrCob1} \tilde{L} = \left\{ (l_t, t, g(t, l_t)) \mid l_t \in L_t \right\}\subset \overline{M} \times T^*[-1]\mathbb R \end{equation}
 with some function $g$ on $\mathbb R \times M$. Then we can perform the BV fiber integral of a half-density on $M$ to get a half-density on\footnote{For $R\colon M \to N$, the degree of the map $\rho \mapsto \int_R \rho$ is \[ \left| \int_R - \right| = \sum_i i \dim R_i - \dim N /2.\]} $T^*[-1]\mathbb R$
 \[ \int_{\tilde{L}}\rho := \left(\mu(t) + \nu(t)\tau \right)\sqrt{dtd\tau}.\]
We can compute $\nu(t)$ by Berezin integral over the fiber $T_t^*[-1]\mathbb R$ at $t$, by \eqref{eq:BVFubini} it is \begin{equation} \label{eq:nutcomponentLt} \nu(t) = \int_{T_t^*[-1]\mathbb R} \int_{\tilde L} \rho  = \int_{L_t} \rho.\end{equation}
If $\rho$ is $\Delta$-closed, then so is $\int_{\tilde{L}}\rho$. Since $\Delta = \partial^{2}/\partial t \partial \tau$, we get
\begin{equation}
    0=\Delta \int_{\tilde{L}}\rho = \partial_t \partial_\tau \left( \mu(t) + \nu(t) \tau \right)\sqrt{dt d\tau} = \partial_t \nu(t) \sqrt{dt d\tau},
\end{equation}     
i.e.\ $\nu(t)$ is constant.
\begin{equation}\label{eqtext:closed}\parbox{0.8\textwidth}{\textit{In other words,
 for closed integrands $\rho$, the integral  $\int_{L}\rho$ is unchanged under small deformations of the Lagrangian $L$ which extend to a family parametrized by $T^*[-1]\mathbb R$.}}\end{equation}
See Section \ref{sec:forms} for comments on when such $\widetilde{L}$ exist, and Section \ref{sec:Hamflows} for an explicit example of such $\widetilde{L}$ coming from a Hamiltonian flow, giving an interpretation to $g(t)$ and $\mu(t)$.

 \subsection{Half-densities and fiber integrals} \label{sec:halfdens} Let us now supply more details on half-densities. They 
 are written in local coordinates as $\rho_x(x)\sqrt{dx}$. If $\phi$ is a  diffeomorphism expressed in local coordinates by $y(x) = \phi^*(y)$, then the defining property of a half-density is that it pulls back via\footnote{This implies the following patching formula for half-densities: if $\rho$ pulls back to a coordinate patch $U_x$ to $\rho_x(x)\sqrt{dx}$ and similary on $U_y$, then on $U_x \cap U_y$ the coefficient functions are related by \[ \rho_x(x) = \rho_y(y(x)) \left| \operatorname{Ber}{ \frac{\partial y(x)}{\partial x}}\right|^{\tfrac{1}{2}}.\]} 
 \begin{equation}\label{eq:berezinianpullback}
     \phi^*\left(\rho_y(y) \sqrt{dy}\right) = \rho_y(y(x)) \left| \operatorname{Ber}\frac{\partial y}{\partial x}\right|^{\tfrac 12} \sqrt{dx}.
 \end{equation}
Here $\operatorname{Ber}$, the Berezinian of an even matrix $A$, is given by e.g.\footnote{We only use the underlying $\mathbb Z/2$-grading in this definition; for $\mathbb Z$-graded manifolds the Berezinian has degree $0$.}
 \begin{equation}\label{eq:Berezinianexplicit}
    \operatorname{Ber}\begin{pmatrix} A_{\text{even}\to \text{even}} & A_{\text{odd}\to \text{even}} \\ A_{\text{even}\to \text{odd}} & A_{\text{odd}\to \text{odd}}\end{pmatrix} 
    = \frac{\det(A_{\text{even}\to \text{even}} - A_{\text{odd}\to \text{even}} A_{\text{odd}\to \text{odd}}^{-1} A_{\text{even}\to \text{odd}})}{\det(A_{\text{odd}\to \text{odd}})}
\end{equation}
 and satifies
 \begin{equation}\label{eq:Berezinianstr}
      \operatorname{Ber}(e^{A}) = e^{\operatorname{str}{A}}.
 \end{equation}
 Even though we wrote the BV operator using local coordinates in \eqref{eq:explicitBV}, it is defined on half-densities on an odd symplectic manifold without any additional choices \cite[Sec.~2]{khudaverdian}, \cite{severa:origin_of_bv}. 
 If we choose a nowhere-vanishing $\Delta$-closed half-density $\rho_{0}$, we can induce a BV operator $\Delta_{0}$ on functions on $M$ by
 \begin{equation}\label{eq:Delta0}
      (\Delta_{0}F) \rho_{0} = \Delta(F \rho_{0}).
 \end{equation}
 This endows the algebra of functions on $M$ with a BV operator $\Delta_0$ in the usual sense,\footnote{This is the BV algebra of observables
 which appears often in literature, and it depends on a choice of a half-density. If $M$ is a graded vector space, all translation-invariant
 half-densities induce the same BV operator (since they differ by a constant); thus we get a canonical BV operator on functions in this case.
 Half-densities are a more natural object for integration, but they don't form an algebra; together with the Gerstenhaber algebra of functions they form a \emph{Tamarkin--Tsygan calculus} \cite[Def.~3.2.2]{TamarkinTsygan2005}.}
 i.e.\ $\Delta_{0}$ is a degree 1 \emph{second order differential operator}, squares to zero, $\Delta_{0}(1) = 0$ and it is related to the 1-shifted Poisson bracket on functions via
 \begin{equation}\label{eq:Deltasecondorder}
      \Delta_{0}(FG) = \Delta_{0}F G + (-1)^{\deg{F}} F \Delta_{0} G + (-1)^{\deg{F}} \{ F, G\}.
 \end{equation} 
 For a function $H$ on $M$, we have for Hamiltonian flows of half-densities the following very useful identity for the Lie derivative \cite[Eq.~2.16]{khudaverdian}
\cite[Thm.~2.3]{severa:qosc} 
\begin{equation}\label{eq:CartanMagicFormula}
      \Lie_{\{H, -\}} \rho = (-1)^{|H|}[\Delta, H\cdot]\rho, \quad \text{where $H\cdot$ is the operator of multiplying  by $H$}.
 \end{equation} 
 Applying this to the reference $\Delta$-closed half-density $\rho_{0}$ gives the usual definition of $\Delta_{0}$ as the 
divergence of Hamiltonian vector fields
\begin{equation}\label{eq:DivergenceBV}
    \Lie_{\{H, -\}} \rho_{0} = (-1)^{|H|}\Delta_{0}(H) \rho_{0}.
\end{equation}
From \eqref{eq:CartanMagicFormula}, we also get that $\Delta$ commutes with Hamiltonian flows, as by $\Delta^2 = 0$ 
\[ [ \Delta, \Lie_{\{H, -\}} ] = (-1)^{|H|}  [\Delta, [\Delta, H]] = 0.\]

\subsubsection{Physical interpretation}
In physics, the integrands we want to consider are $\rho=e^{S/\hbar}\sqrt{dx}$ (for computing the partition function) and $\rho = F e^{S/\hbar}\sqrt{dx}$ (for computing 
 the expectation value of the observable $F$).  These half-densities are $\Delta$-closed if 
\begin{equation}\label{eq:QME}
\hbar \Delta_0 S + \frac 12 \{S, S\} = 0 \; \text{ and } \; \hbar \Delta_0 F + \{S, F\} = 0, \; \text{respectively, with $\rho_0 =\sqrt{dx}$}.
\end{equation} 
The first equation is the \emph{quantum master equation}, and the second equation tells us that $F$ is a cocycle with respect to the 
twisted BV operator $\hbar \Delta_0  + \{S, -\}$. By \eqref{eq:integral_of_exact_zero}, we see that only the \emph{twisted BV cohomology class of} $F$ is measured by the path integral. 

 \smallskip
 
 The two emphasized statements \eqref{eqtext:exact} and \eqref{eqtext:closed} in Section \ref{sec:overview}, about integrals of $\Delta$-exact and $\Delta$-closed half-densities, are a generalization of gauge-invariance. In the BV formulation of gauge theories, gauge-invariant
 observables are $\Delta$-closed, $\Delta$-exact observables are pure gauge and changes of gauge fixing are encoded in changing the Lagrangian $L$.
 \smallskip

The quantum master equation \eqref{eq:QME} is therefore a fundamental property of a theory. It ensures that expectation values, given by integrals over $L\subset M$, are well defined. If we instead perform a BV fiber integral over a relation $R \colon M \to N$, the chain map property \eqref{eq:BVischainmap} tells us that $\int_R\rho$ is also closed if $\rho$ is closed; or that the effective action $W$ defined as \cite[Eq.~(8)]{KrotovLosev}
\[ e^{W/\hbar} \sqrt{dy}:= \int_R e^{S/\hbar} \sqrt{dx}  \]
satisfies the quantum master equation if $S$ does so.

\subsubsection{Odd Fourier transform}
Another way to approach half-densities is by identifying them with differential forms as follows \cite[Sec.˜3]{khudaverdian}. If $M = T^{*}[-1]X$ where $X$
is an oriented manifold concentrated in even degrees, then there is a canonical isomorphism between half-densities and (inhomogeneous) forms of degree $\dim X$ 
\begin{equation}
    \HalfDens(T^{*}[-1]X) \cong \Omega^{\bullet} X,
\end{equation}
which in local coordinates $q^{i}$ on $X$ (and $\pi_{i}$ for the $T^*[-1]$-fiber coordinates) is given by
\begin{equation}\label{eq:OFT}
    \sum_{I \text{ multi-index}} \mkern-24mu f^{I}(q) \pi_{I} \sqrt{dq d\pi} \mapsto \pm\sum_I f^{I}(q) dq^{\widehat{I}}  \text{ where ${\widehat{I}}$ is the complement to $I$}
\end{equation}
where the sign is the product of Koszul signs for rearranging $\pi_I \pi_{\widehat{I}}$ to $\pi_1\dots \pi_{\dim{X}}$ and for reversing of $\pi_I$ \cite[Eq.~65]{albert_bleile_frohlich:bv_integrals}. 

Under this isomorphism, the operator $\Delta$ corresponds exactly to the de Rham differential \cite{witten:antibracket}, and multiplication of half-densities with functions on $M=T^*[-1]X$
corresponds to contracting with polyvector fields on $X$ under $\mathcal O(T^{*}[-1]X) \cong \Gamma (\Lambda^{\bullet} TX)$. The relation \eqref{eq:CartanMagicFormula} becomes the Cartan magic formula.
Since the isomorphism \eqref{eq:OFT} can be written as \cite[Eq.~18]{schwarz:geometry_of_bv}
\begin{equation}
    \rho \mapsto \widehat{\rho} = \int e^{\pi_{i} dx^{i}} \rho(x, \pi) d\pi \quad \text{(the Berezin integral)}
\end{equation}
it is called the \emph{odd Fourier transform} \cite{schwarz:geometry_of_bv}.

\begin{example}Let us make explicit the odd Fourier transform for $T^*[-1]\mathbb R$. Let $t$ and $s$ be two different coordinates 
on $\mathbb R$ of the same orientation, and $\tau$, $\sigma$ the respective fiber coordinates; then $\tau$ transforms as $\partial/\partial{t}$ which implies
$\tau = \tfrac{\partial s}{\partial t} \sigma$. The Berezinian of the coordinate transform is
\begin{equation}
	\operatorname{Ber} \begin{pmatrix}
	    \tfrac{\partial t}{\partial s} &    \tfrac{\partial t}{\partial \sigma} \\[3pt]
           \tfrac{\partial \tau}{\partial s} &    \tfrac{\partial \tau}{\partial \sigma}
	\end{pmatrix}   = \operatorname{Ber}\begin{pmatrix} \tfrac{\partial t}{\partial s} & 0 \\[3pt]
	   - \tfrac{\partial^2 t}{\partial s^2} \left(\tfrac{\partial s}{\partial t} \right)^2 & \tfrac{\partial s}{\partial t} \end{pmatrix} \stackrel{\eqref{eq:Berezinianexplicit}}{=} \left(\frac{\partial t}{\partial s}\right)^2.
\end{equation}
A half-density $(\mu(t)+\nu(t)\tau)\sqrt{dt d\tau}$ transforms to
\[ \left(\mu(t(s)) + \nu(t(s)) \frac{\partial s}{\partial t} \sigma \right) \sqrt{\operatorname{Ber} \left( \frac{\partial (t, \tau)}{\partial (s, \sigma)}\right) } \sqrt{ds d\sigma} = \left(\frac{\partial t}{\partial s} \mu(t(s)) + \nu(t(s))\sigma \right) \sqrt{ds d\sigma}. \]
Under the odd Fourier transform \eqref{eq:OFT}, these half-densities are sent to
\begin{equation}
	\mu(t)dt + \nu(t) \quad \text{and} \quad  \mu(t(s)) \frac{\partial t}{\partial s} ds + \nu(t(s))
\end{equation}
which is consistent, i.e.\ the odd Fourier transform of a half-density is independent of coordinates. We see why we need to choose an orientation on $X$: allowing orientation-reversing change of coordinates would mean that $\widehat\rho$ is only defined up to a sign.
\end{example}

\subsubsection{Forms and BV formalism} \label{sec:forms}
It is common to parametrize choices in BV formalism by extending to differential forms on some parameter space, for example $\mathbb R$, or the interval, or higher simplices \cite{costello:renormalisation_and_bv, GetzlerPohorence} and \cite[Ch~5.10]{costelloBook}. These forms should be understood as odd Fourier transforms of half-densities on $T^*[-1]$ of this parameter space; the de Rham differential being the BV operator $\Delta$. This interpretation of the de Rham differential is useful for homological perturbations.

For example, in \cite[Sect~2]{MikhailovSchwarz2017}, the space of parameters is denoted $\Lambda\owns \lambda$, and one considers a a $\lambda$-dependent (exponentiated) action $\exp S(x, \lambda)\sqrt{dx}$ as extending to a form $\exp[S + B_a d \lambda^a]\sqrt{dx}$. This is the odd Fourier transform of the half-density
\[ e^S\prod_a (B_a + l_a) \sqrt{dx d\lambda dl} \in \HalfDens(M\times T^*[-1]\Lambda), \]
where is $x$ are coordinates on a $(-1)$-shifted symplectic manifold $M$, while $l_a$ are the antifields to $\lambda$ on $T^*[-1]\Lambda$. Similarly, the form on the space $LAG$ in \cite[Sec.~3]{MikhailovSchwarz2017} should be seen as a half-density on $T^*[-1]LAG$, and similarly in \cite{GetzlerPohorence}. The integrals \cite[eq.~(14)]{MikhailovSchwarz2017} and \cite[Thm.~4.1]{GetzlerPohorence} should be seen as BV fiber integrals, see \cite{severa:halfdens} for more details.

Extending a family of Lagrangians $L_{\lambda}\subset M$ to a Lagrangian in $M\times T^*[-1]M$ is a cohomological problem, which is discussed in \cite{schwarz:geometry_of_bv, GetzlerPohorence, severa:halfdens}.

We expect the BD algebras linear over $\Omega(\Delta^k)$ of Costello and Gwilliam \cite[Sec~8.2.2]{costello_gwilliam:fact_II} to have a similar origin, although we remark that their action $I[\phi]$ (defined on top of page 123 of \cite{costello_gwilliam:fact_II}) has no higher form components.

\subsubsection{BV fiber integral} \label{sec:BVFiberIntegral}
With more knowledge of half-densities, let us now go back and define the BV fiber integral along a surjective Lagrangian relation $R \colon M \to N$. Recall that a Lagrangian relation between symplectic spaces $M\to N$ is a Lagrangian subspace of $\overline{M}\times N$. Here, $\overline{M}$ denotes $M$ with the symplectic form multiplied by $-1$. A (Lagrangian) relation $M\to N$ is surjective if each point $n\in N$ is contained in some pair $(m, n)\in R$. 
\medskip

Each coisotropic subspace $C\subset M$ induces such surjective Lagrangian relation, the \emph{coisotropic reduction} $M \to C/C^\omega$ given by $\{(c, c\mod I) \mid c \in C\}$. Here, $I\subset TM$ is the isotropic distribution defined on $C$, given by $I_c \equiv(T_c C)^\omega\subset T_c C \subset T_c M$. Conversely, a surjective Lagrangian $R\subset \overline{M}\times N$ is given by
\[ \{ (c, \phi(c) \mid c \in C \subset M \} \]
where $C = \operatorname{proj}_M R\subset M$ is the domain of definition of $R$ and $\phi \colon C \to N$ is a surjective map, which we will assume is a submersion. The subspace $C\subset M$ is coisotropic\footnote{$C$ is coisotropic since it is the composition of coisotropic relations $\text{pt} \xrightarrow{N} N \xrightarrow{R^T} M$. For $R\subset M \times N$, its transpose $R^T$ is the relation $\{(n, m) \mid (n, m)\in R\} \subset N\times M$.}  and the isotropic distribution $I_c = (T_c C)^\omega$ spans the fibers of $\phi$. By choosing suitable complements to $I \subset T_cC \subset T_c M$, we can decompose the tangent space as 
\[ T_c M \,\cong \,\underbrace{T_{\phi(c)} N \oplus I_c}_{T_c C} \oplus \, T_c M/T_c C \,\cong\,  T_{\phi(c)} N \oplus T^*[-1] I_c \]
where we used the symplectic pairing on $T_c M$ to identify $T_cM/T_c C$ with the shifted dual of $I_c$. A half-density on $T_c M$ therefore induces a half-density on $T_{\phi(c)}N$  and a \emph{density} on $I_c$; these are independent of the choices involved in the decomposition.\footnote{The density factor comes from the fact that if a basis of $I$ transforms with a matrix $A$, the basis of the fiber $I^*[-1]$ transforms with the inverse of $A$. The total Berezinian of the transformation is \[ \Ber{\begin{matrix}
    A & 0 \\ 0 & A^{-1}
\end{matrix}}  = \det{A}^2, \]which follows because both the inverse and the parity shift invert the Berezinian. This is the reason why half-densities can be integrated over Lagrangians only in \emph{odd} symplectic geometry. See \cite{khudaverdianvoronovDFOSG} and \cite[Lemma 3.5]{jpz:lagr_rel_and_quantum_Linfty}. } The BV integral $\int_R \rho$ is given by the integral of this density factor along the fibers of $\phi \colon C \to M$. 
\medskip

Non-surjective relations can also be used for transfer, but the resulting half-densities will have Dirac-delta-like singularities \cite{severa:qosc, jpz:lagr_rel_and_quantum_Linfty} or our \eqref{eq:LagrCob1}. If the singularity is along an even direction $t$, it can be equivalently expressed by the half-density $\tau \sqrt{dt d\tau}$, where $\tau$ is the \say{antifield} to $t$. See also \cite{severa:halfdens}, which also includes a careful discussion of orientations; we quietly assume our manifolds are orientable.

\subsubsection{Computing the BV fiber integral}
Let us give some practical recipes for computing BV fiber integrals.
First, one can use the \emph{odd Fourier transform}: given a half-density $\rho$ on $T^*[-1]X$, the integral of $\rho$ over the Lagrangian $X$ is equal to the integral of the differential form $\widehat{\rho}\in \Omega^\bullet(X)$ over the base $X$
    \[ \int_X \rho = \int_X \widehat{\rho}.\]
    
    One can also construct Lagrangian submanifolds of $T^*[-1]X$ as odd conormal bundles of $Y\subset X$: if we denote $C[-1]Y = \{ (\alpha, y) \mid y\in Y \text{ and } \alpha \in \operatorname{Ann}T_y Y \subset T_y^*[-1]X \}$ we have \cite[Lemma~3]{schwarz:geometry_of_bv}
    \[ \int_{C[-1]Y} \rho = \int_Y \widehat\rho.\]

	More generally, we can consider Darboux coordinates $q^i, \pi_i$ with  $q^i$ \emph{not necessarily even}. Then the integral of $\rho(q^i, \pi_i)\sqrt{dq^id\pi_i}$ over the Lagrangian given by $\pi_i = 0$ is given by (Lebesgue-Berezin) integral
	\begin{equation} \label{eq:integraloverLagrangian}\int \rho(q^i, 0) dq^i.\end{equation}

	Let us now show how to use \eqref{eq:integraloverLagrangian} to compute the change of $\int_L \rho$ under a small Hamiltonian deformation of $L$. The result will be, unsurprisingly, given by the integral of  $\Lie_{\{H, -\}}\rho$, but we perform the calculation for demonstration purposes.
	\begin{example}[Deformations of Lagrangians] \label{ex:ComputationFlow}
		 Let $H$ be an infinitesimal function, of degree $-1$, the Hamiltonian. The corresponding Hamiltonian vector field $\{H, -\}$ is of degree 0. Let us choose Darboux coordinates $x^a = (q^i, \pi_i)$ for which $L = \{ \pi_i = 0,  \forall i\}$. Then $L_H = \Phi^{\{H, -\}}(L)$, the time $1$ flow of $L$, is given by vanishing of $(\Phi^{\{H, -\}})_*(\pi_i) = \pi_i - \{H, \pi_i\}$. Let us thus introduce new Darboux coordinates $\tilde{x}^a = x^a-\{H, x^a\}$, in which we can integrate over the Lagrangian as described in \eqref{eq:integraloverLagrangian} just above this example. The Berezinian of this coordinate transformation is 
		\[ \operatorname{Ber}\frac{\partial \tilde{x}^a}{\partial x^b} = \operatorname{Ber} \left( \delta_{ab} -  \frac{\partial}{\partial x^b} \{ H, x^a \} \right) \stackrel{\eqref{eq:Berezinianstr}, \eqref{eq:Hamiltonian}}{=} 1 + \omega^{ab}\frac{\partial^2 H}{\partial x^a\partial x^b}.\]  
	 This is by \eqref{eq:explicitBV} equal to $1+2\Delta_0 H$. Denote the coefficient function $\rho_{\tilde{x}}$ of the half-density $\rho$ in coordinates $\tilde{x}$, and similarly $\rho_{x}$, which satisfy
		\[ \rho_{\tilde x} \sqrt{d\tilde x} = \rho_x \sqrt{dx} \quad \implies \quad \rho_{\tilde{x}}(\tilde{x}) = \rho_x(x(\tilde x)) \sqrt{\operatorname{Ber}\frac{\partial x}{\partial \tilde x}} = \rho_x(x(\tilde x))(1-\Delta_0 H).\]

		We can now compute the integral over the deformed Lagrangian $L_H$
		\[ \int_{L_H}\rho = \int_{\tilde x^a} \rho_{\tilde x} d\tilde x^a=\int_{\tilde x^a}\rho_x(x(\tilde x))(1-\Delta_0 H)d\tilde x^a.\] 
		Expressing $x$ in terms of $\tilde x$ gives
		\[ \rho_x(x(\tilde x)) = \rho_x(\tilde x^a + \{H, x^a\})= \rho_x(\tilde{x})+\{ H, \rho_x \}(\tilde{x}) \]
		and together 
		\[ \int_{L_H}\rho = \int_{\tilde{x}} \left(\rho_x - \rho_x \Delta_0 H + \{H, \rho_x\} \right) d\tilde{x}=\int_{\tilde{x}} (\rho_x - \Delta_0(H \rho_x)-H \Delta_0 \rho_x).\]
		Even though the integration variable is called $\tilde{x}$, this is an integral over the original Lagrangian $L$. Converting it back to an integral in terms of half-densities using \eqref{eq:Delta0} with $\rho_0 = \sqrt{dx}$ and \eqref{eq:Deltasecondorder}, we get
		\begin{equation}\label{eq:exampleresultH}\int_{L_H} \rho = \int_L \left(  \rho - \Delta (H \rho) - H \Delta \rho\right) \stackrel{\eqref{eq:CartanMagicFormula}}{=} \int_L \rho + \int_L \Lie_{\{H, -\}}\rho.\end{equation}
	\end{example}
\bigskip
    
     Let us finish with some tips on how to perform the BV fiber integral over a general relation $R\colon M \to N$. For example, we can choose a test half-density $\sigma$ on $N$ and integrate against $\int_R \rho \in \HalfDens(N)$ 
    \[ \int_N \sigma \int_R \rho = \int_{R\subset \overline{M} \times N} \sigma \otimes \rho. \]
    Let us emphasize that in this formula, the integral over $N$ is the integral of a \emph{density} on $N$, while the integral on the RHS is the integral of a half-density over a Lagrangian submanifold $R\subset \overline{M} \times N$, which can be performed as in \eqref{eq:integraloverLagrangian}. We use this in e.g.\ \eqref{eq:referenceforintegralagainstest}.
\smallskip
    
     Alternatively, we can choose a Lagrangian $L \subset N$ and use compatibility of BV fiber integrals with composition of relations
    \[ \int_L \int_R \rho = \int_{L\circ R} \rho, \]
    where $L\circ R$ is a Lagrangian submanifold of $M$ and on the right we integrate a half-density on $M$. This was used to isolate the component $\nu(t)$ of the transferred half-density on $T^*[-1]\mathbb R$ in \eqref{eq:nutcomponentLt}.

\subsubsection{Convergence of integrals} \label{sec:convergence} Of course, the convergence of the BV fiber integral is not guaranteed even for finite-dimensional $M$. One option is to consider compactly-supported or rapidly decaying half-densities \cite[Ch.~4]{costelloBook}. If $M = V$ is a vector space (or working asymptotically), we can define the integral $\int_L e^{S/\hbar}$ perturbatively if the quadratic part $\Sfree$ of the action functional is non-degenerate on $L$. This is the basis of perturbative BV formalism, where a Lagrangian $L$ is chosen in order to make $\Sfree$ non-degenerate. We will be working in this linearized set-up from Section \ref{sec:linearBV} onwards.
\smallskip

It is not always possible to find a Lagrangian $L\subset V$ on which $\Sfree$ is non-degenerate. This is parametrized by the cohomology of the differential $\{\Sfree, -\}$. This cohomology is the smallest possible target for a Lagrangian surjection from $V$ such that the BV fiber integral is perturbatively defined.

\subsection{Homotopies in Batalin-Vilkovisky formalism}
We have seen that BV integrals of exact half-densities vanish 
\eqref{eq:integral_of_exact_zero}, thus it makes sense to consider two $\Delta$-closed half-densities to be equivalent if they differ by an exact half-density. On the other hand, two half-densities on $M$ which can be interpolated by a $\Delta$-closed half-density on $M\times T^*[-1]\mathbb R$ or $M\times T^*[-1][0,1]$ should also be seen as equivalent.\footnote{We will stick to $T^*[-1]\mathbb R$ for typographical reasons, with tacit understanding that some of these half-densities might defined only on an interval inside $\mathbb R$. 
\medskip

See also Section \ref{sec:forms} and references therein for a version of this definition parametrized by differential forms.
}
These two conditions are in fact equivalent: if 
\[ \rho_1 - \rho_0 = \Delta \varphi, \]
then 
\begin{equation}
    \label{eq:interpolationHalfDens}\rho(t) = [\tau\left((1-t) \rho_0 + t \rho_1\right) -  \varphi]\sqrt{dt d\tau}
\end{equation} 
is a $\Delta$-closed half-density on $M\times T^*[-1]\mathbb R$. Conversely, a closed half-density $ (\tau \rho_t + \varphi_t) \sqrt{dt d\tau} $ on the product $M\times T^*[-1]\mathbb R$ tells us that the time derivative of its $\tau$ component is $\Delta$-exact, since
\begin{equation}
\begin{split}
    \Delta_{M\times T^*[-1]\mathbb R} (\tau \rho_t + \varphi_t) \sqrt{dt d\tau} &= (\tau \Delta \rho_t + \Delta \varphi_t - \partial_t \rho_t\ )\sqrt{dtd\tau} \\
    \text{which vanishes iff } \quad  \Delta \rho_t &= 0 \quad \text{and} \quad \partial_t \rho_t = \Delta \varphi_t.
\end{split}
\end{equation}

\begin{definition}\label{def:homotopy}
	Let us therefore call two $\Delta$-closed half-densities $\rho_{0,1}$ on $M$ \emph{homotopic} if they satisfy one of these equivalent conditions: they differ by $\Delta\varphi$; or they can be interpolated by a $\Delta$-closed family
	\begin{equation}\label{eq:halfdensinterpolation}
	   (\tau \rho_t + \varphi_t) \sqrt{dt d\tau} \in \HalfDens{(M\times T^*[-1]\mathbb R)}.
	\end{equation}	
\end{definition}

Note that a general interpolation $\tau \rho_t + \varphi_t$ doesn't need to be of the linear form \eqref{eq:interpolationHalfDens}. However, as the interpolation is $\Delta$-closed, we have that $\rho_1 - \rho_0 = \Delta \int_0^1 \varphi_t$ and one can then also write the (in general different) interpolation \eqref{eq:interpolationHalfDens} using this primitive $\int_0^1 \varphi_t$.

Under the assumption that $\rho_t$ is invertible, an existence to a lift \eqref{eq:halfdensinterpolation} is further equivalent to an existence of Hamiltonian flow taking $\rho_0$ to $\rho_1$ {\cite[Ch.~5.~Sec~10.1.]{costelloBook}}. Indeed, using \eqref{eq:CartanMagicFormula} and the $\Delta$-closedness of $\rho_t$,
\begin{equation}
	\tfrac{\partial}{\partial t} \rho_t = 
	\Delta \varphi = \Delta(\varphi \rho_t^{-1} \rho_t) = [\Delta, \varphi\rho_t^{-1} \cdot] \rho_t
	 = -\Lie_{ \{ \varphi\rho_t^{-1}, - \} } \rho_t
\end{equation}
and conversely, the small increment of a closed half-density under Hamiltonian flow is exact, again by \eqref{eq:CartanMagicFormula}. 

In physical literature, homotopy of half-densities is usually written in terms of infinitesimal changes of the  action functional: replacing
\begin{equation}
	\label{eq:physicsycanonicaltransformation}S \leadsto S + \{S, R\} + \hbar \Delta_0 R \end{equation}
transforms the half-density $e^{S/\hbar}\sqrt{dx}$ to
\[ e^{S/\hbar}\sqrt{dx} \leadsto e^{S/\hbar}\sqrt{dx} + e^{S/\hbar}(\{S, R\} + \hbar \Delta_0 R)\sqrt{dx} = e^{S/\hbar}\sqrt{dx} + \hbar \Delta(R e^{S/\hbar}\sqrt{dx}), \]
see e.g.\ \cite{schwarz:symmetry}.

\begin{remark}[Maurer-Cartan elements]The considerations above are of course very similar to the standard theory of Maurer-Cartan elements and their homotopy \cite{dotsenko_shadrin_valette:MC_deformation_theory}. Indeed, a quantum master action $S$ is a Maurer-Cartan element in the dgla \[
	\mathcal{O}(M)[-1], \; \hbar\Delta_0, \; \{-,-\}
	\]
	and the change as in \eqref{eq:physicsycanonicaltransformation} is an infinitesimal gauge transformation by $R$. We note that this interpretation requires a choice of a reference half-density to get the BV operator on functions on $M$.
\end{remark}

\paragraph{Homotopic actions from BV fiber integrals.} It is easy to see that homotopic actions integrate to homotopic actions. Either we use \eqref{eq:BVischainmap} to get
\[ \rho_1 - \rho_0 = \Delta \varphi \quad \implies \quad \int_R \rho_1 - \int_R \rho_0 = \Delta \int_R \varphi \]
for a reduction $R\colon M \to N$, or we consider a closed half-density on $M\times T^*[-1]\mathbb R$ as in \eqref{eq:interpolationHalfDens} and integrate along the relation $R\times \operatorname{id} \colon M\times T^*[-1]\mathbb R\to N\times T^*[-1]\mathbb R$. 
\smallskip

Instead of $T^*[-1]$-dependent half-densities, we can use $T^*[-1]R$-dependent Lagrantian relations such as
\begin{equation} \label{eq:time-dependentrelation} M \to N \times T^*[-1]\mathbb{R}.\end{equation}
Then the BV fiber integral takes a closed half-density on $M$ to a closed family of half-densities on $N$, i.e.\ we get homotopic actions by transferring over a family of Lagrangians. We will now show how to construct such relations.

\subsubsection{Hamiltonian flows and gauge-fixing fermions}\label{sec:Hamflows}
%We will now look again on deformations of Lagrangians by Hamiltonian flows, generalizing and giving a more conceptual version of Example \ref{ex:ComputationFlow} in terms of homotopies. 
 In Example \ref{ex:ComputationFlow}, we considered an infinitesimal Hamiltonian $H$ and its flow carrying a Lagrangian $L\subset M$ to $L_H\subset M$. The result of the example is that $\int_{L_H}\rho - \int_{L}\rho = \int_{L} \Lie_{\{H, -\}} \rho$. For closed $\rho$ the RHS is $\Delta$-exact by the Cartan magic formula \eqref{eq:CartanMagicFormula}, and thus the integrals over $L$ and $L_H$ are homotopic. To demonstrate the use of Lagrangian cobordisms, we will now construct a $T^*[-1]\mathbb R$-parametrized interpolation between $\int_{L_H}\rho$ and $\int_{L}\rho$ not by linear interpolation \eqref{eq:interpolationHalfDens}, but directly from the Hamiltonian flow. 
 \medskip 
 
 Let $H$ be a degree  $-1$ function on $M$, not assumed to be infinitesimal anymore. Then one can construct a Lagrangian relation\footnote{See Appendix \ref{app:Suspensionproof} for a detailed construction.}
\begin{equation}\label{eq:Lagrsuspensionlift}
    \Phi^{\text{rel}}_H= \{ (\Phi^t_H(m), m, t, -H(m)) \mid m \in M, t \in \mathbb R \} \colon M \to M \times T^*[-1]\mathbb R.
\end{equation}
We can use this relation to get a parametrized half-density on $M$ (see \eqref{eq:appendixBVpushforward} for derivation)
\[ \int_{\Phi^\text{rel}_H} \rho = (\tau+H)\cdot (\Phi^t_H)_*(\rho) \sqrt{dt d\tau}  \in \HalfDens (M\times T^*[-1]\mathbb R). \]
Let explain how this generalizes Example \ref{ex:ComputationFlow}. Let
\begin{equation}
    \widetilde{L}_H = L\times \text{id} \, \circ \, \Phi^\text{rel}_H = \{  (\Phi^t_H(l), t, -H(l) \mid l \in L, t \in \mathbb R \} \colon M \to T^*[-1]\mathbb R.
\end{equation}
This relation lifts the family of Lagrangians $\Phi^t_H(L)\subset M$ to a Lagrangian cobordism.\footnote{Such Lagrangian cobordisms are known as \emph{Lagrangian suspensions}. They appeared first in \cite[Sec~2]{AudinLalondePolterovich1994}. } We can perform the transfer of $\rho$ over $\widetilde{L}_H$ using BV Fubini \eqref{eq:BVFubini}
\begin{equation}\label{eq:TdepHalfdensity}
 \int_{\widetilde{L}_H } \rho = \int_{(L\times \text{id})} \int_{\Phi^\text{rel}_H} \rho =\left[
\tau \int_L (\Phi^t_H)^*(\rho) + \int_L H\cdot (\Phi^t_H)^*(\rho)\right]\sqrt{dt d\tau}.\end{equation}
For closed $\rho$, the RHS is a closed family of half-densities on $T^*[-1]\mathbb R$, i.e.\ a homotopy in the sense of Definition \ref{def:homotopy}. For infinitesimal $H$, we get
\begin{equation}\label{eq:TdepHalfdensityinfinitesimal}
 \int_{\widetilde{L}_H } \rho =\left[
\tau \int_L(\rho + t\Lie_{\{H, -\}} \rho) + \int_L H\cdot \rho\right]\sqrt{dt d\tau}.\end{equation}
This is indeed the linear interpolation \eqref{eq:interpolationHalfDens} between $\int_L \rho$ and $\int_{L_H}\rho$ coming from the primitive $\int_L H\cdot \rho$.

\paragraph{Gauge-fixing fermions}
We will now argue that the existence of the relation \eqref{eq:Lagrsuspensionlift} generalizes the fact that turning on the gauge-fixing fermions gives unchanged BV expectation values and equivalent effective BV actions.
\medskip

Indeed, if we flow the Lagrangian $X\subset T^*[-1]X$ by a Hamiltonian $H(q)$, we get a Lagrangian submanifold given by the equations 
\[ \pi_i = \frac{\partial H}{\partial q^i}. \]
Thus, in this case $H$ plays the role of the \emph{gauge-fixing fermion}, see e.g.\ \cite[(15.9.5)]{weinberg}. The version where an (infinitesimal) $H$ depends on $q$ and $\pi$ was considered by \cite[Section~3.3]{hata_zwiebach}.
In these references, it is proven directly that the change of the integral $\int_L \rho$ is proportional to $\Delta \rho$. We can read this off  \eqref{eq:TdepHalfdensity}: 
\begin{align*}
 \Delta_{T^*[-1]\mathbb R} \int_{\widetilde{L}_H} \rho  &\stackrel{\eqref{eq:BVischainmap}}{=}   \int_{\widetilde{L}_H} \Delta_M\rho  , \quad \text{which expands to} \\\partial_t  \int_{\Phi^t_H (L)} \rho &=
   \tau \underbrace{\int_{\Phi^t_H (L)} \Delta\rho}_{=0 \text{ by \eqref{eq:integral_of_exact_zero}}} + \int_{\Phi^t_H (L)} H\Delta\rho = \int_{\Phi^t_H (L)} H\Delta\rho.
\end{align*}

Instead of flows of Lagrangians $L \subset M$, we can consider flows of relations $R \colon M \to N$. Composing with $\Phi^\text{rel}_H$, we get a $T^*[-1]\mathbb R$-dependent Lagrangian relation as in \eqref{eq:time-dependentrelation}
\[\begin{tikzcd}
	M & {M\times T^*[-1] \mathbb R} & {N\times T^*[-1] \mathbb R}
	\arrow["{\Phi^\text{rel}_H}", from=1-1, to=1-2]
	\arrow["{R \times \text{id}}", from=1-2, to=1-3]
\end{tikzcd}.\]
Transferring along this composition, we get a family of half-densities on $N$ parametrized by $T^*[-1]\mathbb R$. In physical terms, this means that e.g.\ the effective actions 
\[ e^{W_t/\hbar} \sqrt{dx_p}:= {\int_{\Phi^t_H (L)}} e^{S/\hbar} \sqrt{dx} \]
are related by \emph{canonical transformations} in the sense of \eqref{eq:physicsycanonicaltransformation}. See e.g.\ \cite[Lemma~2.3.1]{costello:renormalisation_and_bv} and \cite[Eq.~(20)]{cattaneo_mnev:BVpushforward}, or \cite[Prop.~2]{cattaneo_mnev:remarks_on_cs} for a normalized version of this statement.

%\paragraph{Higher categorical structure}
%{\textcolor{gray}Arguably, it is more natural to see half-densities parametrized by $T^*[-1]\mathbb R$ as $2$-morphisms between one-morphisms. The calculation above becomes
%}% https://q.uiver.app/#q=WzAsNSxbMCwwLCJcXGJ1bGxldCJdLFsxLDAsIk0iXSxbMiwwLCJNIl0sWzMsMCwiTiJdLFsyLDEsIlxcYnVsbGV0Il0sWzAsMSwiXFxyaG8iXSxbMSwyLCJcXHRleHR7Z3JhcGh9X3tcXFBoaV9IfSIsMix7ImN1cnZlIjoyfV0sWzEsMiwiXFx0ZXh0e2lkfSIsMCx7ImN1cnZlIjotMn1dLFsyLDMsIlIiXSxbNyw2LCJcXFBoaV5cXHRleHR7cmVsfV9IIiwwLHsib2Zmc2V0IjoyLCJzaG9ydGVuIjp7InNvdXJjZSI6MjAsInRhcmdldCI6MjB9fV1d
%\[\begin{tikzcd}[sep=large]%
%	\bullet & M & M & N \\
%	\arrow["\rho", from=1-1, to=1-2]
%	\arrow[""{name=0, anchor=center, inner sep=0}, "{\text{graph}_{\Phi_H}}"', curve={height=12pt}, from=1-2, to=1-3]
%	\arrow[""{name=1, anchor=center, inner sep=0}, "{\text{id}}", curve={height=-12pt}, from=1-2, to=1-3]
%	\arrow["R", from=1-3, to=1-4]
%	\arrow["{\Phi^\text{rel}_H}", shift right=2, between={0.2}{0.8}, Rightarrow, from=1, to=0]
%\end{tikzcd}\]

\section{Batalin-Vilkovisky fiber integrals over linear relations and homological perturbation theory}\label{sec:linearBV}
We will now specialize from the general BV formalism above to case of \emph{perturbative BV fiber integrals over linear Lagrangian relations between vector spaces\footnote{We would like to note that the restriction to linear Lagrangian relations is too severe to include interesting examples of symplectic flow relations as in Section \ref{sec:Hamflows}.
}}.

As the previous one, also this section is a review of published results, with the expection of the homotopies in Section \ref{sec:homotopiesfromHPL}.

\begin{definition}

We say a $\mathbb Z$-graded real vector space $V$ is \emph{$(-1)$-shifted symplectic} if it comes with an antisymmetric non-degenerate pairing $\omega\colon V \otimes V \to \mathbb R$ of degree $-1$, i.e.\ it vanishes on arguments of total degree different from $1$. We call such space \emph{differential graded} if it is moreover equipped with an anti-self-adjoint differential $q\colon V \to V$ of degree $1$; in other words
\[ \omega(qv, w) + (-1)^{|v|}\omega(v, qw) = 0. \qedhere\]
\end{definition}
See Appendix \ref{app:lineartodifgeom} on how to relate the conventions for dg $(-1)$-shifted symplectic vector spaces with manifolds from the previous section.

\subsection{Cyclic special deformation retracts and Lagrangian relations}

As before, $\overline{V}$ denotes $V$ with $-\omega$ as its symplectic form. 
Each linear Lagrangian relation $R \subset \overline{V}\times U$ which is surjective can be canonically decomposed as (see e.g.\ \cite[Prop.~2.22]{jpz:lagr_rel_and_quantum_Linfty})
\begin{equation} \label{eq:factorizingsurjectiverelation} V \xrightarrow{\text{red}_C} C/C^\omega \xrightarrow{\cong} U \end{equation}
where $C = \text{Im}{R^T} \subset V$, the left image of $R$, is a coisotropic subspace, i.e.\ $C^\omega \subset C$. The relation $\text{red}_C$, the coisotropic reduction, is explicitly given by
\[ \text{red}_C = \{ (c, c \text{ mod } C^\omega) \mid c \in C\} \subset \overline{V}\times C/C^\omega. \]
The symplectic isomorphism $C/C^\omega \cong U$ is determined by $R$: namely by sending $u\in U$ to $(v \text{ mod } \ker R)$ such that $(v, u)\in R$.
\medskip

\begin{definition}
If $V$ is dg $(-1)$-shifted symplectic, we call a Lagrangian relation $R\colon V \to U$ (and also the isotropic subspace $\ker R \subset V$) \emph{non-degenerate} if the symmetric, degree 0 pairing 
\begin{equation}\label{eq:Sfree}
\Sfree(v, w) \equiv (-1)^{|v|}\omega{(qv, w)}
\end{equation} is non-degenerate (see Section \ref{sec:convergence}). Given a non-degenerate surjective Lagrangian relation $R \colon V \to U$, one can define a formal BV fiber integral from half-densities on $V$ to half-densities on $U$ as in \cite[Sec.~3.4]{jpz:lagr_rel_and_quantum_Linfty}.
\end{definition}

\medskip

Non-degenerate surjective Lagrangian relations can be more equivalently, if more verbosely, encoded as \emph{cyclic special deformation retracts\footnote{Called \emph{symplectic} in \cite{jpz:lagr_rel_and_quantum_Linfty}, but after discussions with experts we decided to use \emph{cyclic}.}}, which are standard in homological approaches to BV formalism \cite[Definition~5.12]{kajiura2003} \cite{chuang_lazarev:hodge_decomposition} \cite{cattaneo_mnev:remarks_on_cs}.
\begin{definition} \label{def:SDR}
A special deformation retract (SDR) between dg vector spaces from $V$ to $U$ is given by chain maps $p \colon V \rightleftarrows U\colon i$ and a \emph{homotopy} $k\colon V \to V$ of degree $-1$ such that
\begin{equation}
\begin{array}{wc{6em}wc{6em}wc{6em}}
    1_V - ip = kq + qk,  & & pi = 1_U, \\
    k^2 = 0, & pk =0, & ki =0.
\end{array}
\end{equation}
If $V$ and $U$ are both dg $(-1)$-shifted symplectic (this means that $q$ satisfies $q^\dagger = -q$), we call such SDR \emph{cyclic} if 
\begin{equation}
\begin{split}
    &k \text{ is self-adjoint}, \\
    &i^\dagger = p.\qedhere 
\end{split}
\end{equation}
A special deformation retract is called \emph{strict} if $U$ is a symplectic vector subspace of $V$, $i$ is the inclusion and $p$ is the orthogonal projection.
\end{definition}
Note that $k$ is self-adjoint since it is a partial inverse to an odd anti-self-adjoint operator $q$. Then $ip$ and $kq +qk$ are orthogonal projectors onto complementary symplectic subspaces of $V$ and the images of $kq$ and $qk$ are complementary Lagrangians in the image of $kq+qk$.\footnote{It is important that $k$ and $q$ have opposite behaviour under ${}^\dagger$, as then $(kq)^\dagger= - q^\dagger k^\dagger = q k$, where we have to include a Koszul sign. This is a general statement about symplectic and Lagrangian subspaces: projectors onto symplectic subspaces are self-adjoint, while projectors onto a pair of complementary Lagrangians get exchanged under taking adjoints.} Let us fix the following notation for future reference
\begin{equation}\label{eq:dec}
    V = \underbrace{\Im ip}_{V'} \oplus \underbrace{\Im kq \oplus \Im qk}_{V''},
\end{equation}
the two symplectic subspaces $V'=i(U)$ and $V''$ then induce a decomposition of the BV operator as $\Delta = \Delta' + \Delta''$, and similarly the Poisson bracket.
\medskip

The promised equivalence of SDRs and Lagrangian relations is:
\begin{prop}[{\cite[Ch.~5.~Sec.~2.7]{costelloBook}, \cite[Prop.~3.15]{jpz:lagr_rel_and_quantum_Linfty}}]
There is a bijection between non-degenerate surjective Lagrangian relations $V\to U$ and cyclic SDR's from $V$ to $U$. The space $U$ is isomorphic to cohomology $H(V, q)$ if and only if $qkq = q$, in which case there is a canonical isomorphism given by the composition $i(U)\hookrightarrow \ker q \to H(V, q)$. \end{prop}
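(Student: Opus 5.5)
We construct maps in both directions and then check that they are mutually inverse. Throughout we use the factorization \eqref{eq:factorizingsurjectiverelation}: a surjective Lagrangian relation $R\colon V\to U$ is the same as a coisotropic $C\subset V$ together with a symplectic isomorphism $C/C^\omega\cong U$. We also use the fact that $R$ is non-degenerate exactly when $\Sfree$ is non-degenerate on the isotropic subspace $K:=\ker R = C^\omega$.

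\textbf{From a cyclic SDR to a relation.} Given a cyclic SDR $(i,p,k)$, set
\[ R=\{(v,p v)\mid v\in C\},\qquad C:=\Im ip\oplus \Im kq . \]
Using the decomposition \eqref{eq:dec}, $\Im kq$ is Lagrangian in the symplectic subspace $V''$, and $V'=\Im ip$ is symplectic. Hence $C$ is coisotropic with $C^\omega=\Im kq$, and $p$ identifies $C/C^\omega$ with $U$ symplectically, since $i^\dagger=p$ and $pi=1$. Non-degeneracy is the following check. For $v,w\in \Im kq$, write $w=kqw'$. Then
\[ \Sfree(v,w)=\pm\,\omega(qv,kqw')=\pm\,\omega(kqv,qw'), \]
using that $k$ is self-adjoint. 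Since $q$ restricted to $V''$ maps $\Im kq$ isomorphically onto $\Im qk$, which is the Lagrangian complement of $\Im kq$ in $V''$, the pairing $\Sfree$ is non-degenerate on $\Im kq$.

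\textbf{From a relation to an SDR.} Given $R$, let $K=C^\omega$. We want a Lagrangian complement $L$ to $K$ inside a symplectic complement $W$ of $V'$, where $V'\subset C$ is the subspace lifting $C/K$. The natural candidate is $L:=qK$. Non-degeneracy of $\Sfree|_K$ says that the pairing $\omega(q\,\cdot,\cdot)$ is perfect on $K$. This gives three things: $q$ is injective on $K$; $qK\cap K=0$; and $qK$ is isotropic, because $\omega(qa,qb)=\pm\omega(a,q^2b)=0$. So $W:=K\oplus qK$ is symplectic with $K$ and $qK$ complementary Lagrangians in it. We then set $V':=W^{\omega}$, which lies in $K^\omega=C$ and maps isomorphically onto $C/K\cong U$. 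This defines $i$ as the inverse of that isomorphism and $p:=i^\dagger$. Finally, define $k$ to be zero on $V'\oplus K$ and to be the inverse of $q|_K\colon K\to qK$ on $qK$.

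\textbf{Verifying the SDR identities.} We check the identities on $V'$, on $K$ and on $qK$ separately.
\begin{itemize}
\item $V'$ is $q$-stable. Indeed $qW\subset W$, since $qK\subset W$ and $q(qK)=0$; because $q$ is anti-self-adjoint, it then also preserves $W^\omega=V'$. Hence $i$ and $p$ are chain maps.
\item On $V'$, $kq+qk=0$ and $ip=1$. On $K$, $kq+qk=kq=1$ and $ip=0$. On $qK$, $kq+qk=qk=1$ and $ip=0$. So $1-ip=kq+qk$ holds on all three pieces.
\item $k^2=0$, $pk=0$ and $ki=0$ hold by construction.
\item $k$ is self-adjoint because it is the adjoint-inverse of $q$ on the pair $K,qK$. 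This is the same reasoning as in the footnote preceding \eqref{eq:dec}.
\end{itemize}

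\textbf{Mutual inverses; main obstacle.} Recovering $R$ from the SDR is clear, since $C$ and $p|_C$ are recovered. The main obstacle is the converse: showing that every cyclic SDR with $\ker R=\Im kq$ has $\Im qk = q(\Im kq)$ and $\Im ip=(\Im kq\oplus \Im qk)^\omega$. These follow from $q\,kq=q-qip-kqq\cdot 0$-type identities, namely $qkq=q(1-ip)-q^2k=q-iqp$. From this we get $q(\Im kq)=\Im qkq$, and $\Im qk=\Im qkq$ because $qk=qk(qk+kq)=qkqk$. Consequently the SDR is uniquely determined by $R$.

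\textbf{Cohomology statement.} If $qkq=q$, then $q\,ip=0$, so $V'\subset\ker q$. Also $\ker q=V'\oplus \Im qk$, since $q$ is injective on $\Im kq$, and $\Im qk\subset \Im q$. Hence $V'\cong H(V,q)$ via $i(U)\hookrightarrow\ker q\to H$. Conversely, if $U\cong H$, then $q|_{V'}$ has zero cohomology on a space isomorphic to $H$; counting dimensions forces $q|_{V'}=0$, and therefore $qkq=q-iqp=q$.
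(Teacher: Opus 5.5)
Your proposal is correct and follows the paper's sketch. Your relation $\{(c,pc)\mid c\in \Im ip\oplus\Im kq\}$ is exactly the paper's $R(k)$ (note $\ker k=\Im ip\oplus\Im kq$ and $\Im k=\Im kq$), and your inverse construction is the paper's: $k$ is the inverse of $q\colon \ker R\to q(\ker R)$, and $i,p$ come from $(\ker R\oplus q\ker R)^\omega\cong C/C^\omega\cong U$. Beyond the sketch, you also check uniqueness and prove the cohomology criterion. Three small repairs are needed:

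\begin{itemize}
\item $q$-stability of $V'$ makes $i,p$ chain maps only because the identification $C/C^\omega\cong U$ is taken to intertwine $q|_{V'}$ with $q_U$ (the datum mentioned in the Remark after the proposition).
\item For $V'\cong H(V,q)$ you need $\Im q=\Im qk$, not just $\Im qk\subset\Im q$; the missing inclusion follows from $q=qkq$.
\item In the converse, $q|_{V'}$ has cohomology isomorphic to $H(V,q)$, not zero cohomology, so the dimension count forces $q|_{V'}=0$.
\end{itemize}
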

\begin{proof}[Sketch of proof]
This bijection is given by sending an SDR into the relation
\begin{equation}\label{eq:SDRtoRELATION} R(k)=\{(v-qkv, p(v)) \mid v\in V\}, \end{equation}
This means the non-degenerate isotrope $\ker R(k)$ is $\operatorname{Im} k$ and the corresponding coisotrope is $\ker{k}$.

In the other direction, given a relation $R$, we use that $\ker R \oplus q(\ker R)\subset V$ is a symplectic subspace written as a sum of two complementary Lagrangians, with $q$ being an isomorphism $\ker{R} \to q(\ker R)$; the homotopy operator $k$ is given by
\begin{equation}\label{eq:k}
    k \text{ is the inverse to } q \colon \ker{R} \xrightarrow{\cong} q(\ker R).
\end{equation} The maps $p$ and $i$ are the projection and inclusion to the symplectic complement of $\ker R \oplus q(\ker R)$, composed with the isomorphism $(\ker R \oplus q(\ker R))^\omega \cong C/C^\omega \cong U$, c.f.\ \eqref{eq:factorizingsurjectiverelation}.
\end{proof}
\begin{remark}
    In \cite{jpz:lagr_rel_and_quantum_Linfty}, we spoke about non-degenerate isotropic subspaces. These are in bijection with \emph{strict} cyclic SDR. A general cyclic SDR is, in addition, equipped with the datum of a symplectic chain isomorphism between $\operatorname{Im} ip$ and $U$; the same datum is needed to upgrade a non-degenerate isotropic subspace of $V$ to a non-degenerate surjective Lagrangian relation $V\to U$.
\end{remark}

%A perturbative version of the BV fiber integral can be then defined as in \cite[Definition 3.22.]{jpz:lagr_rel_and_quantum_Linfty}, satisfying the chain map \eqref{eq:BVischainmap} and Fubini  \eqref{eq:BVFubini} properties.

\subsection{Homological perturbation theory}\label{sec:hpl}
Special deformation retracts can be used to transfer deformations of differentials via the homological perturbation lemma \cite[Th.~1]{Shih1962} \cite{BrownLemma}. There is a natural perturbation of the differential $q$, given by $\hbar \Delta$, if we extend the special deformation retract from $V$ and $U$ to formal polynomial functions\footnote{Note that for $V$ a vector space, the space of functions carries a canonical BV operator, induced by choosing any translation-invariant half-density $\rho_0$. We denote this operator by $\Delta$ instead of $\Delta_0$. It would be interesting the obtain the BV fiber integral on half-densities directly from HPL, following \cite{severa:origin_of_bv} and \cite{CCGN}.} of $V$ and $U$, defined as $\F V := \widehat{\operatorname{Sym}}V^* [[\hbar]]$.
This is the standard \emph{tensor trick} \cite{EM, doubek_jurco_pulmann:quantum_L_infty_and_HPL}, we denote the extended maps by capital letters. 
\vspace{-2mm}
\begin{equation}\label{diag:sdr_on_functions}
\begin{tikzcd}
      \arrow["K"', loop, distance=3em, in=215, out=145, shift right=1]  \left( \F V , Q \right) \arrow[r, "P", shift left] & ( \F U , Q_U ) \arrow[l, "I", shift left]
    \end{tikzcd}
\end{equation}
\vspace{-2mm}

\noindent where $I = p^*$, $P = i^*$ are extended to symmetric tensor powers as algebra maps. The transpose of an odd map involves a sign; on $\phi \in V^*$ we define $Q$ by\footnote{This extension of $q$ to $V^*$ makes the evaluation map $V \otimes V^* \to \mathbb{R}$ into a chain map.}
\begin{equation} \label{eq:Qminustranspose}
    Q\phi  = -q^t (\phi) = - (-1)^{\deg{\phi}} \phi \after q
\end{equation}
and extend it to polynomial functions by the Leibniz rule; it agrees with $\{ \Sfree , - \}$, c.f.\ \eqref{eq:Sfree}. Let $K_\mathrm{un}$ denote the extension of\footnote{Note that $a^t b^t = (-1)^{\deg{a}\deg{b}}(ba)^t$ and similarly $(a^t)^{-1} = (-1)^{\deg{a}}(a^{-1})^t$. This means that \[(QK+KQ)(\phi)=\phi \after (qk+kq) = \phi \after (1_V + ip)(\phi) = (1_{\Fun V}+IP) (\phi),\]
which is a special case of one of the axioms of a SDR.} $k^t (\phi) = (-1)^{\deg{\phi}}\phi \after k,$ again by the Leibniz rule; the homotopy $K$ of \eqref{diag:sdr_on_functions} is defined as
\begin{equation} \label{eq:Knorm} K = \frac{1}{\#''} K_\mathrm{un} \end{equation}
where $\#''$ counts the number of coordinates on $V''$ in a monomial, c.f.\ \eqref{eq:dec}.
Homological perturbation lemma gives explicit formulas for a \emph{perturbed} SDR, which simplify to ($I$ is indeed unchanged)
\vspace{-2mm}
 \begin{equation}\label{diag:sdr_perturbed}
\begin{tikzcd}
      \arrow["K'"', loop, distance=3em, in=215, out=145, shift right=6]  ( \F V , Q + \hbar \BV  ) \arrow[r, "P'", shift left] & ( \F U , Q_U+\hbar \BV_U  ). \arrow[l, "I", shift left]
    \end{tikzcd}
\end{equation}
\vspace{-2mm}

Let us relate $P'$ to  the BV fiber integral\footnote{We follow the arguments from \cite[Sec.~4.1.1]{doubek_jurco_pulmann:quantum_L_infty_and_HPL}. See \cite[Sec.~2.3]{costello:renormalisation_and_bv} \cite{gwilliam-thesis} and the references in \cite{doubek_jurco_pulmann:quantum_L_infty_and_HPL, jpz:lagr_rel_and_quantum_Linfty} and also the recent \cite{cattaneo_mnev:BVpushforward}.}. The HPL formula for the perturbed projection can be simplified as  
\begin{equation}\label{eq:perturbed_P}
    P' \stackrel{\text{HPL}}{=} P (1 - \hbar \Delta K + (\hbar \Delta K)^2 -\dots) = P (1 - [\hbar \Delta, K] + ([\hbar \Delta, K])^2 -\dots),
\end{equation}
where we use the side conditions $K^2=0$ and $PK=0$ to introduce the commutators. If $\alpha$, $\beta$ and $\gamma$ are coordinates on the images of $ip$, $qk$ and $kq$, we have schematically $\Delta = \partial_\beta \partial_\gamma$ and $K = \tfrac{1}{\#_{\beta+\gamma}}\beta \partial_\gamma$. We get $[\Delta, K] = \partial_\gamma \partial_\gamma$, and moreover $P(\beta)=P(\gamma)=0$. Thus, the $k$th term of $P'$ is thus nonzero only on monomials with no $\beta$s and $2k$ of $\gamma$'s (and any number of $\alpha$'s); the normalization of $K$ gives a factorial term and we get 
\begin{equation}\label{eq:perturbed_P_commutator}
    P' = Pe^{ -\tfrac 12 {[\hbar \Delta, K_\mathrm{un}]} }.
\end{equation}
Thus, $P'$ should be seen as the following normalized version of the BV fiber integral along $R(k)$ (denoted as in \eqref{eq:SDRtoRELATION})
\[P'(F) = \frac{\int_{R(k)} F e^{\Sfree/\hbar}}{\int_{R(k)} e^{\Sfree/\hbar}}.\]
The factor $e^{\Sfree/\hbar}$ explains why this map intertwines differentials $\{\Sfree, -\} + \hbar \Delta$, since 
\[ (\{\Sfree, F\} + \hbar \Delta F ) e^{\Sfree/\hbar} = \hbar\Delta(F e^{\Sfree/\hbar}).
\]
The effective BV action (inducing a quantum and a cyclic $L_\infty$ algebra on $U$) is thus given by
\[ W = \hbar \log P'(e^{\Sint/\hbar}) \]
and satisfies the quantum master equation in the form $(\{\Sfr{U}, -\} + \hbar \Delta_U) e^{W/\hbar}  =0 $.

To extend the integral to formal half-densities, one has to choose normalization of the integral of linear half-densities with Gaussian weight. One possible choice is \cite[Def.~3.16]{jpz:lagr_rel_and_quantum_Linfty}.

\subsubsection{Homotopies of BV actions from the homological perturbation lemma} \label{sec:homotopiesfromHPL}
Let us now describe two cases when the homological perturbation lemma immediately gives homotopies between formal half-densities.

\paragraph{Homotopy on V.} Let us choose two different cyclic special deformation retracts with the same source $(V, \omega, q)$; denote one of them with a tilde.   From the homological perturbation lemma, we get a homotopy between $\Sfree + I(W)$ and $\Sfree + \tilde{I}(\tilde{W})$ on $V$. Since \eqref{diag:sdr_perturbed} is a special deformation retract, we have
\[ 1 - I P' = [K', Q+\hbar\Delta] \quad \text{ and } \quad 1 - \widetilde{I} \widetilde{P}' = [\widetilde{K}', Q+\hbar\Delta]. \]
Applying on $e^{\Sint/\hbar}$ and taking the difference
        \begin{equation}\label{eq:homotopy_W_tilde_W}
            e^{I(W)/\hbar} - e^{\widetilde{I}(\tilde{W})/\hbar}  =(Q+\hbar\BV )(K'-\widetilde{K}')(e^{\Sint/\hbar}),
        \end{equation}      
        so $F=(K'-\widetilde{K}')(e^{\Sint/\hbar})$ is the desired primitive.

        \smallskip
        
        A homotopy between half-densities is obtained by further twisting by $e^{\Sfree/\hbar}$, i.e.\   
        \[ e^{(\Sfree+I(W))/\hbar}\sqrt{dV} - e^{(\Sfree+\widetilde{I}(\widetilde{W}))/\hbar}\sqrt{dV} = \hbar \Delta \left[ e^{\Sfree/\hbar}(K'-\widetilde{K}')(e^{\Sint/\hbar}) \sqrt{dV}\right]. \]
        A special case is when $W$ is the effective action on cohomology, and $\widetilde{W}=\Sint$, i.e.\ the tilde'd SDR is trivial (the identity relation between $V$ and $V$). Since homotopic actions are related by a symplectic flow, we get a decomposition theorem implemented by a (formal, nonlinear) symplectic diffeomorphism of $V$ \cite[Sec~4.3]{doubek_jurco_pulmann:quantum_L_infty_and_HPL}.

        \paragraph{Homotopy on the retract $U$.} We would also like to compare actions on the \say{smaller} space $U$. We can do this if the two SDRs satisfy $\Im ip = \Im \tilde{i}\tilde{p}$. This is equivalent to $[q, k-\widetilde{k}]=0$.

        \smallskip
        
        Using the chain map property for $P'$, post-composing equation \eqref{eq:homotopy_W_tilde_W} with $P'$ gives
                \begin{equation}
                 P' I (e^{W/\hbar}) - P'\tilde{I}(e^{\tilde{W}/\hbar})  =
                 (Q_{U}+\hbar\BV_{U })P'(K'-\tilde{K}')(e^{\Sint/\hbar})
            \end{equation}
            where $P'K'=0$. Now $\Im ip = \Im \tilde{i}\tilde{p}$, i.e.\ $I = \tilde{I}$ and by the retract property $P' I = 1_U$ of the SDR \eqref{diag:sdr_perturbed},
            \begin{equation}
                 e^{W/\hbar} - e^{\tilde{W}/\hbar}  =
                 (Q_{U}+\hbar\BV_{U})P'\tilde{K}'(e^{\Sint/\hbar}).
            \end{equation}

        \smallskip

        In Corollary \ref{cor:homologyequiv} below, we prove that such homotopies exist for any two effective actions on cohomology.

\section{Renormalization Group Flow as a Homotopy}\label{sec:ergf}

Now we will show that the renormalization group flow  in BV formalism of \cite{costelloBook} can be understood as a homotopy of half-densities. In particular, this gives a new\footnote{This can also be done using the homological perturbation lemma, as in Section \ref{sec:homotopiesfromHPL}.} way to produce a homotopy between a quantum master action and its effective action.

\smallskip

The connection between homotopies and the renormalization group equation was suggested in \cite[Sec~3.2, Sec~4]{zucchini:bv}. We construct this homotopy explicitly; we fix the \say{failure} of exactness of the RG flow referred to in \cite{zucchini:bv} as a \say{seed term} by introducing a certain twist in Section \ref{sec:twist}. See Section~\ref{sec:zucchini} for more details.

\subsection{Scale-dependent BV formalism of Costello}\label{costello}

We will use the exact renormalization group flow techniques in the heat kernel formulation, using the notation and results of Costello \cite[Chap.~6,10]{costello:renormalisation_and_bv}, \cite[Chap.~5.8,5.9]{costelloBook}. This formalism is developed to handle infinite dimensional theories and ultraviolet divergences, we will however suppress the subtleties of functional analysis.\footnote{This essentially boils down to the existence of the heat kernels, which is ensured by the properties of $\qGF$. Note that this relies on the choice of Euclidean signature.} We will specialize to the case of a finite dimensional theory in Section \ref{sec:ergf_findim}. In this section, we will consider $V$ to be the (infinite dimensional) vector space of sections of a $\Z$-graded vector bundle over a smooth manifold.\footnote{One should either consider a compact manifold or work with rapidly decaying functions. We will suppress these analytic details.} We fix a free BV theory, i.e.\ a skew self-adjoint operator $q : V \to V$ of degree $1$ which squares to zero and makes $V$ into an elliptic complex, see \cite[Ch.~5.~Def.~7.0.1]{costelloBook}.
\smallskip

Let us recall from \cite{costelloBook} scale dependent effective quantities, namely a BV operator $\BV_L$, an effective interaction term of $S_L$ and a propagator which intertwines $e^{S_L/\hbar}$ at different scales $L$. This scale should be interpreted as the minimal allowed length of propagators between interaction vertices in the worldline formulation of quantum field theory. At $L=0$, one imagines a high energy microscopic theory (after renormalization), which is problematic to formulate in BV formalism. At large $L$, we are describing the low energy effective limit; the renormalization group flow amounts to \emph{zooming out} by integrating out small-scale interactions.

 \smallskip

To do this, we need to fix a gauge fixing operator\footnote{In \cite{costelloBook}, this is denoted $\QGF$. We reserve the capital letter for its extension to $\Fun V$.} $\qGF : V \to V$ of degree $-1$, which we require to be a self-adjoint operator with respect to $\omega$ of order $\leq 1$ which squares to zero, such that 
    \[
    D \define [q,\qGF]
    \]
    is a generalized Laplacian operator in the sense of \cite{costelloBook}, \cite{BGV}. Given these assumptions, for $L \in [0,\infty)$, there exists $\K_L$, the heat kernel for the operator $e^{-L D}$, i.e.\ the quadratic element $\K_L \in V \otimes V$ such that
   \[
   \K_L * = e^{-L D} : V \to V,
   \]
   where $*$ is a contraction using the shifted symplectic pairing. We use the following sign convention, for $P = \sum P' \otimes P'' \in V\otimes V$, we define $P * v \define (-1)^{\deg{v}} \sum P' \omega( P'' , v ).$ Note that with this convention,
 \begin{align}
     ((q\otimes1+1\otimes q)P)* & = [q,P*], \label{eq:Q_convolution} \\
     ((\qGF\otimes1-1\otimes \qGF)P)* & = [\qGF,P*]. \label{eq:QGF_convolution}
 \end{align}
 The difference in signs is caused by $q$ being skew self-adjoint and $\qGF$ being self-adjoint.

   \smallskip
   For $0 < \varepsilon \leq L < \infty $ we define the propagator from scale $\epsilon$ to scale $L$ as the quadratic element
   \begin{equation}
     P_{\varepsilon,L} \define \int_\epsilon^L (\qGF \otimes 1) \K_l \d l \in V \otimes V.
   \end{equation}
    A $\varepsilon > 0$ is necessary in infinite dimension, where at scale $\varepsilon=0$ both the propagator $\partial_{P_{0,L}}$ and the BV Laplacian are ill-defined because of ultraviolet divergences. This is discussed extensively in \cite{costelloBook}.
     
    \smallskip
    The operators $q$ and $\qGF$ are first extended to $V^*$ as $q$ and $k$ in Section \ref{sec:hpl}, and then to functions as derivatives (unlike the normalized $K$ in \eqref{eq:Knorm}). We will denote the contraction with $v \in V$ with $\phi\in V^*$ by $\partial_v \phi \define (-1)^{\deg{v}} \phi(v)$ and we extend it by the Leibniz rule to $\F{V}$. We denote the second order differential operator defined by a graded symmetric quadratic element $P = \sum P' \otimes P'' \in \Sym^2 V$ by $\partial_P = \frac12 \partial_{P'} \partial_{P''} : \F{V} \to \F{V}$. With these sign conventions,\footnote{These sign conventions differ from \cite{costelloBook}, which produces different signs below, e.g.\ in Lemma \ref{lemma:propagator_chain}.} for any graded symmetric quadratic tensor $P \in \Sym^2 (V)$,
   \begin{align}
    [Q, \partial_P ] & = \partial_{(q \otimes 1 + 1 \otimes q)P}, \label{eq:Q_partialP} \\
    [\QGF , \partial_P ] & = - \partial_{(\qGF \otimes 1 + 1 \otimes \qGF)P}. \label{eq:QGF_partialP}
   \end{align}
   We define the scale $L$ BV Laplacian as
   \begin{equation}
       \BV_L \define -\partial_{\K_L}.
   \end{equation}
   We arrive at the following key result, which essentially affirms that the BV formalism is compatible with the renormalization group equation, see also \cite[Ch.~5.~Lemma~10.2.2]{costelloBook}.
   \begin{lemma}\label{lemma:propagator_chain}
   The operator $e^{-\hbar\partial_{P_{\varepsilon, L}}}$ intertwines twisted BV operators at scales $\varepsilon$ and $L$,
    \begin{equation}\label{eq:propagator_chain}
     (Q+\hbar\BV_L) e^{-\hbar\partial_{P_{\varepsilon,L}}}  =
        e^{-\hbar\partial_{P_{\varepsilon,L}}} (Q+\hbar\BV_\varepsilon). 
    \end{equation}
   \end{lemma}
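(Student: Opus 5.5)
The plan is to reduce the claim to a single commutator identity and then exponentiate it. Since $\partial_{P_{\varepsilon,L}}$ is a second order operator with constant coefficients, it commutes with $\BV_\varepsilon$ and $\BV_L$, which are also constant-coefficient second order operators, and all these operators commute among themselves. The only non-commuting piece is $Q$. The key identity I aim for is
\begin{equation}\label{eq:keycommutator}
[Q, \partial_{P_{\varepsilon,L}}] = \BV_\varepsilon - \BV_L = \partial_{\K_L} - \partial_{\K_\varepsilon}.
\end{equation}
From \eqref{eq:keycommutator}, $[Q,-\hbar\partial_{P}]$ is central with respect to $\partial_P$, since it commutes with $\partial_P$. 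Hence
\[ Q e^{-\hbar\partial_P} = e^{-\hbar\partial_P}\bigl(Q + [Q,-\hbar\partial_P]\bigr) = e^{-\hbar\partial_P}\bigl(Q - \hbar\BV_\varepsilon + \hbar\BV_L\bigr), \]
which follows from the truncated BCH formula $e^{-X}Qe^{X} = Q + [Q,X]$, valid when $[Q,X]$ commutes with $X$. Adding $\hbar\BV_L e^{-\hbar\partial_P} = e^{-\hbar\partial_P}\hbar\BV_L$ to both sides then gives \eqref{eq:propagator_chain}. This requires the signs in \eqref{eq:keycommutator} to come out exactly as written; if they come out reversed, I would revisit the convention $\BV_L = -\partial_{\K_L}$.

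To prove \eqref{eq:keycommutator}, I would use \eqref{eq:Q_partialP}:
\[ [Q,\partial_{P_{\varepsilon,L}}] = \partial_{(q\otimes1+1\otimes q)P_{\varepsilon,L}}, \]
and compute the tensor $(q\otimes 1+1\otimes q)\int_\varepsilon^L (\qGF\otimes1)\K_l\,dl$. The cleanest route is to pass to operators via the contraction $*$ and \eqref{eq:Q_convolution}. The operator associated to $(q\otimes1+1\otimes q)P$ is $[q, P*]$, and $P_{\varepsilon,L}* = \int_\varepsilon^L \qGF e^{-lD}\,dl$, up to a sign from moving $\qGF\otimes 1$ through the contraction, which I need to track. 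Since $q$ commutes with $D=[q,\qGF]$ (using $q^2=0$), we get
\[ [q, \qGF e^{-lD}] = [q,\qGF]e^{-lD} = D e^{-lD} = -\tfrac{d}{dl} e^{-lD}. \]
Integrating from $\varepsilon$ to $L$ gives $e^{-\varepsilon D}-e^{-LD} = (\K_\varepsilon - \K_L)*$. Injectivity of $P\mapsto P*$ then gives $(q\otimes1+1\otimes q)P_{\varepsilon,L} = \K_\varepsilon-\K_L$, hence $[Q,\partial_P] = \partial_{\K_\varepsilon}-\partial_{\K_L} = \BV_L - \BV_\varepsilon$ up to the global sign.

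The main obstacle is the sign bookkeeping. I need to check whether $[Q,\partial_P]$ equals $\BV_\varepsilon - \BV_L$ or its negative, since only the former yields the stated intertwining. The sign depends on
\begin{itemize}
\item the Koszul sign in identifying $((\qGF\otimes1)\K_l)*$ with $\pm\qGF\circ(\K_l*)$, since $\qGF$ is odd and the contraction convention carries $(-1)^{|v|}$;
\item the convention $Q\phi=-q^t\phi$;
\item the definition $\partial_v\phi = (-1)^{|v|}\phi(v)$.
\end{itemize}
I would settle this by checking the identity in a one-dimensional toy model: $V$ spanned by an even $x$ and an odd $\xi$ with $qx=\xi$, $\qGF\xi = x$, $D=1$, and $\K_l = e^{-l}(x\otimes\xi \pm \xi\otimes x)$. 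Evaluating both sides on the quadratic monomial $x^*\xi^*$ fixes the overall sign. I would also verify the finite-dimensional convergence and symmetry of $P_{\varepsilon,L}$ (it lies in $\Sym^2V$ because $\qGF$ is self-adjoint, via \eqref{eq:QGF_convolution}), so that $\partial_{P}$ is well defined.
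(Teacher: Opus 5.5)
Your route is the paper's: reduce to the commutator $[Q,\partial_{P_{\varepsilon,L}}]$, compute it from \eqref{eq:Q_partialP} and \eqref{eq:Q_convolution}, and exponentiate using that $\partial_{P_{\varepsilon,L}}$ is even and commutes with $\BV_\varepsilon$, $\BV_L$ and with that commutator. However, the identity you set as your target has the wrong sign, and the step that derives the lemma from it is an arithmetic slip. You correctly get $Qe^{-\hbar\partial_P}=e^{-\hbar\partial_P}\bigl(Q-\hbar[Q,\partial_P]\bigr)$. Adding $\hbar\BV_L e^{-\hbar\partial_P}=e^{-\hbar\partial_P}\hbar\BV_L$ to both sides then gives
\[
(Q+\hbar\BV_L)e^{-\hbar\partial_P}=e^{-\hbar\partial_P}\bigl(Q+\hbar\BV_L-\hbar[Q,\partial_P]\bigr).
\]
So the lemma is equivalent to $[Q,\partial_{P_{\varepsilon,L}}]=\BV_L-\BV_\varepsilon$, which is exactly the identity used in the paper's proof. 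With your sign $\BV_\varepsilon-\BV_L$, the right-hand side becomes $e^{-\hbar\partial_P}(Q-\hbar\BV_\varepsilon+2\hbar\BV_L)$, and the lemma does not follow.

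Your own computation in the second paragraph already produces the correct sign. It gives $(q\otimes1+1\otimes q)P_{\varepsilon,L}=\K_\varepsilon-\K_L$, hence $[Q,\partial_P]=\partial_{\K_\varepsilon}-\partial_{\K_L}=\BV_L-\BV_\varepsilon$. So there is nothing to ``revisit''; in particular, changing the convention $\BV_L=-\partial_{\K_L}$ would break a correct argument. The Koszul sign you worried about also does not arise. With $P*v=(-1)^{|v|}\sum P'\omega(P'',v)$, the map $\qGF\otimes 1$ acts only on the first factor, so $((\qGF\otimes1)\K_l)*=\qGF\circ(\K_l*)=\qGF e^{-lD}$ exactly. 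Then $[q,\qGF e^{-lD}]=De^{-lD}$ because $[q,D]=[q,[q,\qGF]]=0$, and integrating over $[\varepsilon,L]$ gives $e^{-\varepsilon D}-e^{-LD}=(\K_\varepsilon-\K_L)*$. Once you correct the target identity, your argument is a complete expansion of the paper's one-line proof.
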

   \begin{proof}
       Follows from $[Q,\partial_{P_{\epsilon,L}}]= \Delta_L - \Delta_\varepsilon$, which is proven using equations \eqref{eq:Q_convolution}, \eqref{eq:Q_partialP}.
   \end{proof}

    \paragraph{The effective interaction.} Following Costello's philosophy, a quantum field theory should be understood as a free theory $\Sfree$ together with a family of effective interactions $S_L$, i.e.\ local\footnote{In the sense of \cite[Ch.~5.~Def.~13.4.1]{costelloBook}.} action functionals on $V$ satisfying the \emph{renormalization group equation}
   \begin{equation}\label{eq:ergf}
   e^{S_L / \hbar} =
   e^{-\hbar\partial_{P_{\varepsilon,L}}} e^{S_\varepsilon / \hbar},
   \end{equation}
   and quantum master equation at scale $L$,
   \begin{equation}\label{eq:qme_L}
   (Q+\BV_L) e^{S_L / \hbar} = 0
   \end{equation}
   for all $ L \in (\varepsilon, \infty)$.
   Note that thanks to equation \eqref{eq:propagator_chain}, if $S_L$ satisfies the quantum master equation at any $L$, then it is satisfied for all $L$. In practice, given an action $S=\Sfree+\Sint$ with ultraviolet divergences, we would define the effective interaction as
      \begin{equation}
   e^{S_L / \hbar} = \lim_{\epsilon\to 0}
   e^{-\hbar\partial_{P_{\varepsilon,L}}} e^{\Sint / \hbar - S^{\mathrm{CT}}_\varepsilon /\hbar},
   \end{equation}
   where the counter-terms $S^{\mathrm{CT}}_\varepsilon$ are constructed as usually, inductively in powers of $\hbar$ and polynomial powers.  See \cite[Chap.~2.~Sec.~13]{costelloBook}. 
   
   \smallskip
   The heat kernel $\K_\infty$ and therefore also the effective action at infinite scale $S_\infty$, which restricts to the effective action $W$ on cohomology, exists only if one requires additional \emph{positivity conditions} of $D$ \cite[Ch.~5.~Def.~10.7.1]{costelloBook}.

   \subsection{Twisting}\label{sec:twist}
  In order to obtain homotopic family of actions, we now introduce a twist of the effective interaction part at scale $L$
   \[
   e^{\tilde{S}_L} \define e^{-\frac12 L D}e^{S_L/\hbar}
   \]
   and similarly a twist of the BV Laplacian at scale $L$,
   \[
   \tilde{\BV}_L  \define e^{-\frac12 L D}\BV_L e^{\frac12 L D}.
   \]
    Since $Q$ commutes with $e^{-\frac12 L D}$, $\tilde{S}_L$ satisfies a twisted quantum master equation,
    \begin{equation}\label{eq:twisted_qme}
    (Q+\hbar \tilde{\BV}_L )e^{\tilde{S}_L/\hbar}=0.    
    \end{equation} 

    We will need the following calculation.
    \begin{lemma}\label{lemma:propagator_anticom}   The two second order differential operators $\Delta_L, P_{\varepsilon,L}$ satisfy

       \begin{equation}\label{eq:propagator_anticom}
       \frac{\d}{\d L} \partial_{P_{\epsilon,L}} =  \frac12 [\BV_L,\QGF ].
   \end{equation}
   \end{lemma}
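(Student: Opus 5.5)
The plan is to compute both sides as second order differential operators of the form $\partial_P$ for explicit quadratic tensors $P$, and then compare the tensors.

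\textbf{Left-hand side.} By the fundamental theorem of calculus applied to the definition of $P_{\varepsilon,L}$, we have $\frac{\d}{\d L} P_{\varepsilon,L} = (\qGF\otimes 1)\K_L$. Since $P\mapsto \partial_P$ is linear, the left-hand side is $\partial_{(\qGF\otimes 1)\K_L}$. Strictly, $\partial_P$ was defined for graded symmetric $P$, so we first replace $(\qGF\otimes1)\K_L$ by its symmetrization. Because $\qGF$ is self-adjoint and commutes with $D$, and $\K_L$ is symmetric, $(\qGF\otimes 1)\K_L$ and $(1\otimes\qGF)\K_L$ agree up to the Koszul sign built into the symmetry convention. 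The symmetric part is therefore $\tfrac12(\qGF\otimes1+1\otimes\qGF)\K_L$, and this is the tensor that $\partial_{(\cdot)}$ actually sees.

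\textbf{Right-hand side.} By definition $\BV_L=-\partial_{\K_L}$, so $\tfrac12[\BV_L,\QGF] = -\tfrac12[\partial_{\K_L},\QGF]$. We reorder the graded commutator with the appropriate sign; both operators are odd, so $[\partial_{\K_L},\QGF] = [\QGF,\partial_{\K_L}]$ up to the graded-commutator convention. We then apply \eqref{eq:QGF_partialP}, which gives
\[
[\QGF,\partial_{\K_L}] = -\partial_{(\qGF\otimes1+1\otimes\qGF)\K_L}.
\]
Hence the right-hand side equals $\tfrac12\partial_{(\qGF\otimes1+1\otimes\qGF)\K_L}$, which matches the left-hand side.

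\textbf{Main obstacle.} The difficulty is entirely in the signs. We must check that, with the paper's conventions ($\partial_v\phi=(-1)^{|v|}\phi(v)$, the extension of $\qGF$ to $V^*$ with the transpose sign, and $\partial_P=\tfrac12\partial_{P'}\partial_{P''}$), the following hold:
\begin{itemize}
\item[(i)] $\partial_{(\qGF\otimes1)\K_L}$ really equals $\tfrac12\partial_{(\qGF\otimes1+1\otimes\qGF)\K_L}$, with no relative minus sign;
\item[(ii)] the graded commutator $[\BV_L,\QGF]$ produces exactly the overall sign needed.
\end{itemize}
For (i), we would use $[\qGF,e^{-LD}]=0$, i.e.\ $((\qGF\otimes1-1\otimes\qGF)\K_L)*=[\qGF,\K_L*]=0$ by \eqref{eq:QGF_convolution}. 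This shows $(\qGF\otimes1)\K_L=(1\otimes\qGF)\K_L$, and then the symmetric combination is twice either term. The commutation $[\qGF,D]=[\qGF,[q,\qGF]]=0$ itself follows from $(\qGF)^2=0$ by the graded Jacobi identity.

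For (ii), we would verify the sign of \eqref{eq:QGF_partialP} on a test quadratic monomial, say $\K_L$ replaced by $v\otimes w+(-1)^{|v||w|}w\otimes v$, acting on a product of two linear functions. This fixes all conventions at once.
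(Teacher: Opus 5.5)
Your proposal is correct and is essentially the paper's proof. Both compute $\frac{\d}{\d L}\partial_{P_{\varepsilon,L}}=\partial_{(\qGF\otimes1)\K_L}$, evaluate $[\BV_L,\QGF]=-[\QGF,\partial_{\K_L}]$ via \eqref{eq:QGF_partialP}, and use $[\qGF,e^{-LD}]=0$ together with \eqref{eq:QGF_convolution} to get $(\qGF\otimes1)\K_L=(1\otimes\qGF)\K_L$. The only difference is cosmetic: you symmetrize the left side, whereas the paper collapses the right side to $\partial_{2(\qGF\otimes1)\K_L}$. Your extra sign checks (i)–(ii) are unnecessary once \eqref{eq:QGF_partialP} is taken as given.
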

   \begin{proof}
       
   By definition, $\frac{\d}{\d L} \partial_{P_{\epsilon,L}} = \partial_{(\qGF\otimes1)\K_L}$, and using equation \eqref{eq:QGF_partialP},
   \[
   [\BV_L,\QGF ]=   [\QGF ,\BV_L]= -[\QGF , \partial_{\K_L}] = \partial_{(\qGF \otimes1 + 1\otimes \qGF )\K_L} = \partial_{2(\qGF \otimes1)\K_L}.
   \]
   In the last step, we used that $(\qGF \otimes1 - 1 \otimes \qGF )\K_L* = [\qGF ,e^{-L D}]=0,$ which follows from equation \eqref{eq:QGF_convolution}.
   \end{proof}
\begin{prop}
    The twisted BV Laplacian does not depend on the scale $L$, i.e.\ 
    \[ \frac{\d}{\d L} \tilde{\BV}_L = 0.\]
\end{prop}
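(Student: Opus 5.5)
The plan is to differentiate the twisted operator directly and reduce the statement to a commutator identity for $\BV_L$. Throughout, $D$ in the twist $e^{-\frac12 LD}$ is read as the extension of $D=[q,\qGF]$ to $\F V$, i.e.\ $D=[Q,\QGF]$. This is the same convention under which $Q$ commutes with $e^{-\frac12 LD}$, as used for \eqref{eq:twisted_qme}. By the product rule,
\[
\frac{\d}{\d L}\tilde{\BV}_L = e^{-\frac12 LD}\Big(-\tfrac12[D,\BV_L] + \frac{\d}{\d L}\BV_L\Big)e^{\frac12 LD},
\]
so it suffices to show
\[
\frac{\d}{\d L}\BV_L = \tfrac12 [D,\BV_L].
\]

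To get this, I would avoid recomputing heat-kernel derivatives and instead combine the two lemmas already proven. The proof of Lemma \ref{lemma:propagator_chain} gives $[Q,\partial_{P_{\varepsilon,L}}]=\BV_L-\BV_\varepsilon$, in which $\BV_\varepsilon$ does not depend on $L$. Differentiating in $L$ and applying Lemma \ref{lemma:propagator_anticom} yields
\[
\frac{\d}{\d L}\BV_L = \Big[Q,\frac{\d}{\d L}\partial_{P_{\varepsilon,L}}\Big] = \tfrac12\big[Q,[\BV_L,\QGF]\big].
\]

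The next step is the graded Jacobi identity. All three operators $Q$, $\BV_L$ and $\QGF$ are odd, so
\[
[Q,[\BV_L,\QGF]] = [[Q,\BV_L],\QGF] - [\BV_L,[Q,\QGF]].
\]
The first term vanishes because $[Q,\BV_L]=0$. This holds because, by \eqref{eq:Q_partialP}, $[Q,\partial_{\K_L}]=\partial_{(q\otimes 1+1\otimes q)\K_L}$, and by \eqref{eq:Q_convolution} the tensor $(q\otimes1+1\otimes q)\K_L$ corresponds to $[q,e^{-LD}]=0$; nondegeneracy of $*$ then makes the tensor itself zero. The second term equals $-[\BV_L,D]=[D,\BV_L]$, since $D$ is even. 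Together these give the desired identity $\frac{\d}{\d L}\BV_L=\tfrac12[D,\BV_L]$.

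The main obstacle I expect is sign bookkeeping, in two places. First, the Koszul signs in Jacobi for three odd operators must be tracked carefully. Second, the "$D$" appearing in the twist must really be $[Q,\QGF]$ on $\F V$, and not its negative. The extension of $q$ to $V^*$ carries a minus sign, see \eqref{eq:Qminustranspose}, so this needs checking.

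As a cross-check I would also compute directly. From $\frac{\d}{\d L}\K_L=-(D\otimes1)\K_L$ and the symmetry $(D\otimes1)\K_L=(1\otimes D)\K_L$, we get $\frac{\d}{\d L}\BV_L=\tfrac12\partial_{(D\otimes1+1\otimes D)\K_L}$. This should then be compared with $[[Q,\QGF],\partial_P]$ as computed from \eqref{eq:Q_partialP} and \eqref{eq:QGF_partialP}, where the cross terms cancel. The two computations should agree, which would fix the convention.
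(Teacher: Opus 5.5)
Your proposal is correct and is essentially the paper's proof. The paper also reduces to $\frac{\d}{\d L}\BV_L=\frac12[D,\BV_L]$ via $\frac{\d}{\d L}\BV_L=[Q,\frac{\d}{\d L}\partial_{P_{\varepsilon,L}}]=\frac12[Q,[\BV_L,\QGF]]$, obtaining the first equality directly from the heat equation for $\K_L$ rather than by differentiating the identity behind Lemma~\ref{lemma:propagator_chain}, and then expands the commutator using $[Q,\BV_L]=0$ and $[Q,\QGF]=D$, which is your Jacobi step written out. Your sign worry is settled by the footnote computation in Section~\ref{sec:hpl} with $k$ replaced by $\qGF$: $[Q,\QGF]$ acts on linear functions as $\phi\mapsto\phi\circ D$ with no extra sign.
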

\noindent
We will thus drop the index $L$ and use $\tilde{\BV}$ below.
\begin{proof}
      First, we will need to compute $\frac{\d}{\d L} \BV_L$, for which we will use the following observation,
       \begin{equation}
           \frac{\d}{\d L} \K_L + (q\otimes1+1\otimes q)(\qGF \otimes1 )\K_L = 0.
       \end{equation}
       This means
       \[
       -\frac{\d}{\d L}\BV_L + \partial_{(q\otimes1+1\otimes q)(\qGF \otimes1 )\K_L} = 0
       \]
       and by equation \eqref{eq:Q_partialP} and rewriting $(\qGF \otimes1 )\K_L= \frac{\d}{\d L} P_{\varepsilon,L}$, we have
       \[
       \frac{\d}{\d L}\BV_L = \left[ Q, \frac{\d}{\d L} \partial_{P_{\varepsilon,L}} \right].
       \]
       By equation \eqref{eq:propagator_anticom}, we have
       \begin{align*}
           \frac{\d}{\d L}\BV_L & = \frac12 \left[ Q ,  [\BV_L , \QGF ] \right] \\
           & = \frac12 [Q , \Delta_L \QGF  + \QGF  \Delta_L]   \\
           & = \frac12 (D \BV_L - \BV_L D).
       \end{align*}
       The last equality holds by $[Q,\Delta_L]=-\partial_{(q\otimes1+1\otimes q)\K_L}=0$, which follows from $(q\otimes1+1\otimes q)\K_L*=[q,e^{-L D}]=0$. Finally, we get
       \[
       \frac{\d}{\d L} \tilde{\BV} =(-\frac12 e^{-\frac12 L D} D\BV_L e^{\frac12 L D} + e^{-\frac12 L D}\frac12 [D,\BV_L]e^{\frac12 L D} + \frac12 e^{-\frac12 L D} \BV_L D e^{\frac12 L D})=0.
       \]
\end{proof}

    \subsubsection{Renormalization group flow as homotopy for twisted interaction}
This leads us to the main result of this section.
   \begin{prop} The change of $e^{\tilde{S}_L/\hbar}$ is exact, namely
    \begin{equation}\label{eq:homotopy_ergf_twisted_action}
    \frac{\d}{\d L} e^{\tilde{S}_L/\hbar} =
    -\frac12 (Q+\hbar\tilde{\BV}  ) {Q}^{GF}e^{\tilde{S}_L/\hbar}.    
    \end{equation}
   \end{prop}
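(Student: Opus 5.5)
The plan is a direct computation. I differentiate the definition $e^{\tilde S_L/\hbar}=e^{-\frac12 LD}e^{S_L/\hbar}$ with the product rule, handle the derivative of $e^{S_L/\hbar}$ using the renormalization group equation \eqref{eq:ergf} together with Lemma \ref{lemma:propagator_anticom}, and then move the twist $e^{-\frac12 LD}$ to the left. I read the twist in Section \ref{sec:twist} as $e^{\tilde S_L/\hbar}$, with the factor $1/\hbar$ in the exponent. On functions, $D$ denotes the extension $[Q,\QGF]=Q\QGF+\QGF Q$, where $\QGF$ has odd degree and so the bracket is an anticommutator.

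\textbf{Step 1 (flow of $S_L$).} By \eqref{eq:ergf} we have $e^{S_L/\hbar}=e^{-\hbar\partial_{P_{\varepsilon,L}}}e^{S_\varepsilon/\hbar}$. All the operators $\partial_{P_{\varepsilon,L}}$ for different $L$ are constant-coefficient second order operators, so they commute. Differentiating and applying Lemma \ref{lemma:propagator_anticom} gives
\[ \tfrac{\d}{\d L}e^{S_L/\hbar} = -\hbar\,\tfrac{\d}{\d L}\partial_{P_{\varepsilon,L}}\,e^{S_L/\hbar} = -\tfrac{\hbar}{2}\,(\BV_L\QGF+\QGF\BV_L)\,e^{S_L/\hbar}. \]
Next I use the quantum master equation \eqref{eq:qme_L} in the form $\hbar\BV_L e^{S_L/\hbar}=-Qe^{S_L/\hbar}$ to rewrite the second term. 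This yields
\[ \tfrac{\d}{\d L}e^{S_L/\hbar}=-\tfrac12(\hbar\BV_L\QGF-\QGF Q)e^{S_L/\hbar}. \]

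\textbf{Step 2 (product rule).} Differentiating the twist contributes the extra term $-\frac12 D$. Adding this to Step 1 gives
\[ \tfrac{\d}{\d L}e^{\tilde S_L/\hbar}=e^{-\frac12LD}\Big(-\tfrac12 Q\QGF-\tfrac12\QGF Q-\tfrac{\hbar}2\BV_L\QGF+\tfrac12\QGF Q\Big)e^{S_L/\hbar} = -\tfrac12\, e^{-\frac12LD}(Q+\hbar\BV_L)\QGF e^{S_L/\hbar}. \]

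\textbf{Step 3 (moving the twist through).} Three facts let $e^{-\frac12 LD}$ pass to the right:
\begin{itemize}
\item $Q$ commutes with $D$, since $Q^2=0$;
\item $\QGF$ commutes with $D$, since $[\QGF,[Q,\QGF]]=0$ follows from $(\QGF)^2=0$ by the Jacobi identity;
\item by definition, $e^{-\frac12LD}\BV_L=\tilde\BV\, e^{-\frac12LD}$, and this $\tilde\BV$ is $L$-independent by the preceding proposition.
\end{itemize}
Using these, the right-hand side of Step 2 becomes $-\frac12(Q+\hbar\tilde\BV)\QGF e^{\tilde S_L/\hbar}$, which is \eqref{eq:homotopy_ergf_twisted_action}.

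The main obstacle is sign bookkeeping. I need to confirm two things. First, that the operator $D$ appearing in the twist, when extended to $\Fun V$, really equals $Q\QGF+\QGF Q$ with a plus sign. This depends on the transposition conventions \eqref{eq:Qminustranspose} and on the fact that both $Q$ and $\QGF$ are extended as derivations; I would verify it on linear functions $\phi\in V^*$. Second, that the sign in Lemma \ref{lemma:propagator_anticom} matches the one used in Step 1. If $D$ on functions came out as $-(Q\QGF+\QGF Q)$, the cancellation in Step 2 would fail. In that case I would recheck the convention behind $e^{-LD}$ acting on functions versus its action on $V$, which is the same kind of check already carried out in the proof of the preceding proposition.
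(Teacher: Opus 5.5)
Your proposal is correct and matches the paper's proof. The steps are the same: differentiate the renormalization group equation, apply Lemma~\ref{lemma:propagator_anticom} and the scale-$L$ quantum master equation, use $D=Q\QGF+\QGF Q$ on $\Fun V$ so that the $D$-term from the twist cancels, and commute $e^{-\frac12 LD}$ past $Q$ and $\QGF$ to produce $\tilde{\BV}$. The only difference is ordering: the paper first rewrites $\frac{\d}{\d L}e^{S_L/\hbar}=\frac12\bigl(D-(Q+\hbar\BV_L)\QGF\bigr)e^{S_L/\hbar}$ and then adds the twist's derivative, whereas you combine them in one step; the plus-sign convention for $D$ that worried you is exactly the one the paper uses.
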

\begin{proof}
  First, by the renormalization group equation \eqref{eq:ergf},
   \begin{equation}\label{eq:ergf_inf}
       \frac{\d}{\d L} e^{S_L/\hbar} = - \hbar \frac{\d}{\d L} ( \partial_{P_{\varepsilon
,L}} ) e^{S_L/\hbar}.
   \end{equation}
   Using equation \eqref{eq:propagator_anticom} and the quantum master equation at scale $L$ \eqref{eq:qme_L} we get
    \begin{align*}
       \frac{\d}{\d L} e^{S_L/\hbar} & = -\frac12 (\QGF \hbar\BV_L+\hbar\BV_L \QGF ) e^{S_L/\hbar} \\
       & = -\frac12(-\QGF Q+\hbar\BV_L \QGF ) e^{S_L/\hbar}  \\
       & = -\frac12(-D +Q\QGF +\hbar\BV_L \QGF ) e^{S_L/\hbar}  \\
       & = \frac12(D-(Q+\hbar\BV_L ) \QGF )e^{S_L/\hbar}.
   \end{align*}
 Replacing $e^{S_L/\hbar}$ with the twist $e^{\tilde{S}_L/\hbar}$ and using the fact that $D$ commutes with both $Q$ and $\QGF $, we have
    \begin{equation*}
       \frac{\d}{\d L} e^{\tilde{S}_L/\hbar} =
       -\frac12 e^{-\frac12 L D} D e^{S_L/\hbar} + e^{-\frac12 L D}\frac12(D-(Q+\hbar\BV_L ) \QGF )e^{S_L/\hbar} 
       = -\frac12 (Q+\hbar \tilde{\BV}  ) {Q}^{GF}e^{\tilde{S}_L/\hbar}.
   \end{equation*}
\end{proof}

    \subsubsection{Finite dimension: homotopy to the effective action}
\label{sec:ergf_findim}

    Let us now specialize to the finite dimensional setting and $\qGF$ coming from special deformation retracts to put the results of this section in the context of homological perturbation theory from Section \ref{sec:hpl}.

    \smallskip
    
    A free BV theory is given by a finite-dimensional dg $(-1)$-symplectic vector space $(V,q)$ as in Section \ref{sec:linearBV}. The operator $\qGF$ can be chosen as the homotopy operator $k$ \eqref{eq:k} of some SDR, or equivalently it comes from a choice of a non-degenerate isotrope $I = \Im kq$. The conditions on existence of heat kernels are automatically satisfied; the operator $D = [q,\qGF ]$ reduces to the projector onto $\Im kq \oplus \Im qk$. On $\F{V}$, this operator counts the number of basis elements of the dual of $\Im kq \oplus \Im qk$ in a given monomial, when expressed in a basis respecting the decomposition. The scale-dependent BV operator $\Delta_L$ is $\Delta' + e^{-L} \Delta''$; the twisted operator $\tilde{\Delta}$ is equal to $\Delta$ at $L=0$ and thus at all $L$.

    \smallskip
   
     Given a quantum master action with a free part $Q$ and an interaction part $\Sint \in \F{V}$, we can define the effective interaction part at scale $L$ as
     \begin{equation}
        e^{S_L/\hbar} \define e^{-\hbar\partial_{P_{0,L}}}e^{\Sint/\hbar}.
   \end{equation}
   The actions $\tilde{S}_L$ satisfy the scale $0$ quantum master equation, and their flow \eqref{eq:homotopy_ergf_twisted_action} is an \say{honest} homotopy with respect to $Q+\Delta$.

    \smallskip
    
    For $\varepsilon=0, L=\infty$, equation \eqref{eq:propagator_chain} recovers equation \eqref{eq:BVischainmap} for the BV fiber integral. It can be shown that, indeed, the propagator gives the normalized BV fiber integral;
   \begin{equation}\label{eq:propagator_int}
    P e^{-\hbar\partial_{P_{0,\infty}}} = \frac{\int_{R(k)} e^{\Sfree/\hbar} (-) \dhalf V}{\int_{R(k)} e^{ \Sfree/\hbar} \dhalf V}.   
   \end{equation}
   To prove this, we use the following lemma.
\begin{lemma}\label{lemma:propagator_commutator}
With $\qGF  = k$, we can express the propagator as a graded commutator,
       \[\partial_{P_{0,\infty}} = \frac12 [\BV, \QGF ].\] 
\end{lemma}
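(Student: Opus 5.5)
Since $P_{0,\infty} = \int_0^\infty (\qGF\otimes 1)\K_l \d l$ and $\BV = \BV_0$, the plan is to integrate Lemma \ref{lemma:propagator_anticom} in $L$ and then pass to $L\to\infty$ using the explicit finite-dimensional form of $\BV_L$.

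\textbf{Step 1 (integrate the derivative formula).} In the finite-dimensional setting with $\qGF = k$, Lemma \ref{lemma:propagator_anticom} holds for every $L$ and also with $\varepsilon=0$, because $P_{0,L}$ is well defined. It gives
\[ \frac{\d}{\d L}\partial_{P_{0,L}} = \tfrac12[\BV_L,\QGF]. \]
As recalled in Section \ref{sec:ergf_findim}, $\BV_L = \BV' + e^{-L}\BV''$ with respect to the decomposition \eqref{eq:dec}. Since $P_{0,0}=0$, integrating from $0$ to $L$ yields
\[ \partial_{P_{0,L}} = \tfrac{L}{2}[\BV',\QGF] + \tfrac{1-e^{-L}}{2}[\BV'',\QGF]. \]

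\textbf{Step 2 (kill the $\BV'$ term).} We show $[\BV',\QGF]=0$. Because $\QGF$ is the derivation extension of $k$, the commutator $[\QGF,\BV']$ is, by \eqref{eq:QGF_partialP}, the second-order operator associated with $(k\otimes 1 + 1\otimes k)$ applied to the Poisson bivector of $V'=\Im ip$ (up to sign). Now $k$ annihilates $V'=\Im i$ because $ki=0$, so this tensor vanishes. With this, $\partial_{P_{0,L}} = \tfrac{1-e^{-L}}{2}[\BV'',\QGF]$.

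\textbf{Step 3 (take $L\to\infty$).} Letting $L\to\infty$ gives $\partial_{P_{0,\infty}} = \tfrac12[\BV'',\QGF]$. Since $\BV = \BV'+\BV''$ and $[\BV',\QGF]=0$ from Step 2, this equals $\tfrac12[\BV,\QGF]$, which is the claim. All limits are harmless in finite dimensions: on $V''$ the operator $D$ is the identity and on $V'$ it is zero, so $\K_l$ splits as (kernel of $V'$) $+\, e^{-l}$(kernel of $V''$). The integral $\int_0^\infty (k\otimes 1)\K_l\,\d l$ therefore converges, since its $V'$ part is annihilated by $k$ and its $V''$ part decays exponentially.

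\textbf{Main obstacle.} The delicate point is Step 2 together with the sign conventions, in particular identifying the kernel of $\BV_L$ with $\BV'+e^{-L}\BV''$ including the correct sign of $\partial_{\K_L}$. As a cross-check I would compute directly in adapted coordinates $\alpha,\beta,\gamma$ as in Section \ref{sec:hpl}. There $\QGF \sim \beta\partial_\gamma$ and $\BV''=\partial_\beta\partial_\gamma$, so $[\BV'',\QGF]\sim\partial_\gamma\partial_\gamma$, while $\QGF$ does not involve $\alpha$ at all. This confirms that $\tfrac12[\BV,\QGF]$ is the contraction with $(k\otimes 1)$ of the $V''$-part of the kernel, i.e.\ $\partial_{P_{0,\infty}}$.
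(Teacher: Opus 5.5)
Your argument is correct, but it takes a different route from the paper. The paper works at the single scale $L=0$. By \eqref{eq:QGF_partialP} it computes $[\BV,\QGF] = -[\partial_{\K_0},\QGF] = \partial_{2(\qGF\otimes 1)\K_0}$, which is the computation in the proof of Lemma \ref{lemma:propagator_anticom} evaluated at $L=0$. It then identifies $(k\otimes 1)\K_0$ with $P_{0,\infty}$, since both are integral kernels of $k$: $\K_0$ is the kernel of the identity, and $P_{0,\infty}*=\int_0^\infty k\,e^{-lD}\,\d l = k$, a fact cited from Costello.

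You instead integrate the flow identity of Lemma \ref{lemma:propagator_anticom} along the explicit scale dependence $\BV_L=\BV'+e^{-L}\BV''$. In place of the kernel identification you use the vanishing $[\BV',\QGF]=0$, which follows from $ki=0$, and then pass to the limit.

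Both proofs rest on the same two facts: $k$ annihilates $V'=\Im ip$, and $e^{-lD}$ acts by $e^{-l}$ on $V''$. The paper packages them as a tensor identity, you package them as an operator identity. The paper's version is shorter and involves no limit. However, it leans on the external fact $P_{0,\infty}*=k$ and on comparing tensors through their $*$-action, where the sign conventions have to be tracked. Your version reuses a lemma already proved in the paper and is self-contained. As a bonus it yields the all-scale formula $\partial_{P_{0,L}} = \tfrac{1-e^{-L}}{2}[\BV,\QGF] = (1-e^{-L})\,\partial_{P_{0,\infty}}$.
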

\begin{proof}
   Using equations \eqref{eq:Q_partialP} and \eqref{eq:Q_convolution},
    \begin{equation}
           [\BV, \QGF ] = -[ \partial_{\K_0}, \QGF ]
           = \partial_{2 (\qGF \otimes 1 )  \K_0}.
    \end{equation}
    \noindent
    As in \cite[Lemma~6.5.1]{costello:renormalisation_and_bv}, 
    $P_{0,\infty}* = k = \qGF$ and using that $\K_0$ is the kernel of the identity,
       \[
       [\BV, \QGF ] = 2 \partial_{P_{0,\infty} * \K_0} = 2 \partial_{P_{0,\infty}}.\qedhere
       \]
\end{proof}

The comparison with the normalized BV integral (via HPL) is a simple corollary.
\begin{prop}
    With $\qGF  = k$, the perturbed projection \eqref{eq:perturbed_P} can be expressed using Costello's propagator as
    \[
P' = P e^{-\hbar \partial_{P_{0,\infty}}}.
\]
\end{prop}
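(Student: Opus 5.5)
We compare the two closed-form expressions for $P'$ directly. By \eqref{eq:perturbed_P_commutator}, the perturbed projection from the homological perturbation lemma is
\[ P' = P e^{-\tfrac12 [\hbar\Delta, K_\mathrm{un}]}. \]
Here $K_\mathrm{un}$ is the extension of $k^t$ to $\F V$ by the Leibniz rule, i.e.\ the derivation extending $k$. With the choice $\qGF = k$, the operator $\QGF$ is by definition the derivation extension of $\qGF$ to functions, exactly as $k$ was extended in Section \ref{sec:hpl}, so $\QGF = K_\mathrm{un}$. Lemma \ref{lemma:propagator_commutator} then gives
\[ \hbar\,\partial_{P_{0,\infty}} = \tfrac12\hbar[\Delta,\QGF] = \tfrac12 [\hbar\Delta, K_\mathrm{un}], \]
and substituting yields $P' = P e^{-\hbar\partial_{P_{0,\infty}}}$.

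The steps, in order, are as follows. First I would check that the extension conventions agree, i.e.\ that $\QGF$ and $K_\mathrm{un}$ coincide including signs. Both are defined as the Leibniz extension of $\phi \mapsto (-1)^{\deg\phi}\phi\after k$, the paper saying explicitly that $q,\qGF$ are extended "as $q$ and $k$ in Section \ref{sec:hpl}". Second, I would note that the $\Delta$ in the HPL section (the canonical BV operator on functions) equals $\Delta_0 = -\partial_{\K_0}$ in the scale-dependent notation, since $\K_0$ is the kernel of the identity. This identification is already used in the proof of Lemma \ref{lemma:propagator_commutator}, and at $L=0$ we have $\Delta_L=\Delta$ as remarked in Section \ref{sec:ergf_findim}. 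Third, I would substitute as above. The exponential is well defined on $\F V$ because $[\hbar\Delta,K_\mathrm{un}]$ is a constant-coefficient second-order operator that lowers polynomial degree by two, so the series terminates on each monomial.

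The main obstacle is the bookkeeping behind \eqref{eq:perturbed_P_commutator}, namely that the normalized $K = K_\mathrm{un}/\#''$ in the HPL series resums into the exponential of the unnormalized commutator. I take this as established in Section \ref{sec:hpl}. Should one want to verify it independently, I would work in the adapted coordinates $\alpha,\beta,\gamma$ of \eqref{eq:dec}. There $[\Delta,K_\mathrm{un}]$ is a multiple of $\partial_\gamma\partial_\gamma$, which commutes with $\#''$ up to shifts. The factors $1/\#''$ on monomials with $2j$ $\gamma$'s then produce $1/(2j(2j-2)\cdots 2) = 1/(2^j j!)$, which matches the $k$-th term of $e^{-\frac12[\hbar\Delta,K_\mathrm{un}]}$. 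After applying $P$, which kills $\beta$ and $\gamma$, only these terms survive. The remaining concern is the sign conventions in Lemma \ref{lemma:propagator_commutator}, and the factor $\tfrac12$ there is exactly what is needed.
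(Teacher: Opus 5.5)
Your proposal is correct and is essentially the paper's own proof. The paper likewise substitutes Lemma \ref{lemma:propagator_commutator} into \eqref{eq:perturbed_P_commutator}, using the identification $\QGF = K_\mathrm{un}$ that you made explicit. Your extra checks on conventions and on the resummation of the normalized $K$ are consistent with the paper but not needed.
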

\begin{proof}
    By equation \eqref{eq:perturbed_P_commutator}, $P' = P e^{-\frac12 [\hbar\Delta, K_\mathrm{un}]}$, and Lemma \ref{lemma:propagator_commutator} finishes the proof.
\end{proof}

Finally, we get an alternative homotopy to \eqref{eq:homotopy_W_tilde_W}.
\begin{prop}
    The actions $e^{(\Sfree+\Sint)/\hbar}$ and $e^{(\Sfree + I(W))\hbar}$ are homotopic, i.e.\ they differ by a $\Delta$-exact term.
\end{prop}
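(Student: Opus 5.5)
The plan is to integrate the renormalization group flow homotopy \eqref{eq:homotopy_ergf_twisted_action} from $L=0$ to $L=\infty$. We work on functions with the differential $Q+\hbar\Delta$, in the finite-dimensional setting of Section \ref{sec:ergf_findim}. There $\tilde\Delta=\Delta$ and $\qGF=k$. At the end we twist by $e^{\Sfree/\hbar}\sqrt{dV}$ to pass to half-densities, using
\[(Q+\hbar\Delta)(F)\,e^{\Sfree/\hbar}=\hbar\Delta(Fe^{\Sfree/\hbar}).\]

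\textbf{Step 1: the endpoints.} At $L=0$ we have $e^{\tilde S_0/\hbar}=e^{\Sint/\hbar}$. For the limit $L\to\infty$, recall that $D$ is the projector onto $V''=\Im kq\oplus\Im qk$. On $\F V$ the operator $D$ counts the $V''$-coordinates in a monomial, so $e^{-\frac12 LD}$ multiplies a monomial with $n''$ such coordinates by $e^{-Ln''/2}$. As $L\to\infty$ this tends to the projection $IP$ onto functions of the $V'$-coordinates alone, which is $\mathcal F U$ pulled back along $p$. Meanwhile $e^{S_L/\hbar}\to e^{-\hbar\partial_{P_{0,\infty}}}e^{\Sint/\hbar}$. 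Hence
\[
\lim_{L\to\infty}e^{\tilde S_L/\hbar}=IPe^{-\hbar\partial_{P_{0,\infty}}}e^{\Sint/\hbar}=IP'(e^{\Sint/\hbar})=e^{I(W)/\hbar},
\]
by the preceding proposition $P'=Pe^{-\hbar\partial_{P_{0,\infty}}}$ and the definition of $W$.

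\textbf{Step 2: integrate the flow.} By \eqref{eq:homotopy_ergf_twisted_action},
\[
e^{I(W)/\hbar}-e^{\Sint/\hbar}=(Q+\hbar\Delta)\Phi,\qquad \Phi=-\tfrac12\int_0^\infty \QGF e^{\tilde S_L/\hbar}\,\d L.
\]
Twisting by $e^{\Sfree/\hbar}\sqrt{dV}$ then gives
\[e^{(\Sfree+I(W))/\hbar}\sqrt{dV}-e^{(\Sfree+\Sint)/\hbar}\sqrt{dV}=\hbar\Delta\bigl(e^{\Sfree/\hbar}\Phi\sqrt{dV}\bigr).\]

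\textbf{Main obstacle: convergence of $\Phi$.} I will argue termwise, in each fixed order of $\hbar$ and polynomial degree, where everything is finite-dimensional. $\QGF$ is the derivation extending $k^t$. It replaces a $V''$-coordinate of one type by another, or kills it, so it preserves or lowers $n''$. More precisely, it maps into monomials containing at least one $V''$-coordinate, since $k$ vanishes on $V'$ and its image lies in $V''$.

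The harder point is that $e^{S_L/\hbar}$ itself depends on $L$ only through $e^{-L}$: the propagator $P_{0,L}$ equals $(1-e^{-L})$ times a fixed tensor on $V''$. So at fixed order, $\QGF e^{\tilde S_L/\hbar}$ is a finite sum of terms $e^{-Ln''/2}e^{-mL}\times(\text{fixed monomial})$ with $n''\ge1$. Each such term is integrable on $[0,\infty)$, so $\Phi$ is a well-defined formal power series.

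The same bookkeeping justifies exchanging the limit and integral with \eqref{eq:homotopy_ergf_twisted_action}, i.e.\ the fundamental theorem of calculus holds degree by degree. It also justifies the limit computed in Step 1.
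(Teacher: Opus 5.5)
Your proposal is correct and follows the paper's proof. You integrate \eqref{eq:homotopy_ergf_twisted_action} over $L\in[0,\infty)$ with $\tilde\Delta=\Delta$, identify the endpoints as $e^{\Sint/\hbar}$ and $e^{I(W)/\hbar}$ via $P'=Pe^{-\hbar\partial_{P_{0,\infty}}}$, and then twist by $e^{\Sfree/\hbar}\sqrt{dV}$, and your degree-by-degree convergence argument (using $P_{0,L}=(1-e^{-L})P_{0,\infty}$ and that $\QGF$ lands in monomials with $n''\ge 1$), together with the explicit identification of $\lim_{L\to\infty}e^{-\frac12 LD}$ with $IP$, fills in details the paper leaves implicit.
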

\begin{proof}
Integrating the results of equation \eqref{eq:homotopy_ergf_twisted_action} from $0$ to $\infty$, we get a homotopy
   \begin{equation}\label{eq:homotopy_from_ergf}
    e^{\tilde{S}_\infty/\hbar} - e^{\tilde{S}_0/\hbar} = (Q+\hbar\BV)\left( -\frac12 \int_0^\infty \QGF e^{\tilde{S}_l/\hbar} \dif l \right)    
   \end{equation}
   between $\tilde{S}_0 = S_0 = \Sint$ and $\tilde{S}_\infty = W$. The last identification holds by equation \eqref{eq:propagator_int},
   \[
   \lim_{L\to\infty}e^{-\frac12 L D}e^{S_L/\hbar} = \proj_R e^{S_{\infty}/\hbar} = \proj_R e^{-\hbar\partial_{P_{0,\infty}}}e^{\Sint/\hbar} = e^{W/\hbar}.
   \]
   To get a homtopy of the half-densities in the sense of Definition \ref{def:homotopy}, i.e.\ find a primitive of $\Delta$, we can twist equation \eqref{eq:homotopy_from_ergf} by $e^{\Sfree/\hbar}\sqrt{\d V}$ like in Section \ref{sec:homotopiesfromHPL}.
   \qedhere
\end{proof}

   \subsubsection{Relation to Zucchini's work}\label{sec:zucchini}
   It was noted in the proof of Lemma 11.1.1 in \cite[Ch.~5]{costelloBook} that the BV Laplacian
   \[
   \BV_L + \d L \partial_{\frac{\d}{\d L}P_{\epsilon,L}} =  -\partial_{\K_L} + \d L \partial_{\qGF \K_L} 
   \]
    on $\F{V} \otimes \Omega^\bullet ( (0,\infty) )$ encodes both the quantum master equation and the renormalization group flow. In particular, $S_L \in \F{V} \otimes C^\infty ( (0,\infty) )$ solves the quantum master equation with respect to this BV Laplacian if and only if both the scale $L$ quantum master equation \eqref{eq:qme_L} and the infinitesimal version of the renormalization group equation \eqref{eq:ergf_inf} hold.

    \smallskip
    
    In \cite[Sec.~4]{zucchini:bv}, it is shown that if $S_L + S^\star_L \d L \in \F{V} \otimes \Omega^\bullet ( (0,\infty) )$, where $S^\star_L$ is a degree $-1$ element, satisfies the quantum master equation with respect to this BV Laplacian, the renormalization group flow of the scale $L$ effective interaction is $(Q+\BV_L)$-exact up to a certain \say{seed term}. It is discussed in \cite[Sec.~3.2.]{zucchini:bv} that up to this failure, this could be interpreted as a homotopy. In particular, from \cite[Eqs.~4.67, 4.84]{zucchini:bv} it can be seen that this failure is proportional to the action of the operator $D$ (denoted $\mathcal{H}$ in \cite[Sec.~4]{zucchini:bv}) on the scale $L$ effective interaction, as in the proof of equation \eqref{eq:homotopy_ergf_twisted_action}. We solved this issue by introducing the $e^{-\frac12 LD}$ twist and obtained a $(Q+\BV_0)$-exact flow of the twisted interaction $\tilde{S}_L$. We did not assume the existence of a degree $-1$ partner, in fact, it is constructed in equation \eqref{eq:homotopy_ergf_twisted_action}.

\section{The space of special deformation retracts} \label{section:parametrizing-gauge-fixings.}
Let us now characterize the space of possible gauge fixings in the linear finite-dimensional BV formalism. As we have seen in Section \ref{sec:linearBV}, such gauge fixings are given by cyclic special deformation retracts, or equivalently nondegenerate surjective Lagrangian relations. To give such relation $V\to U$, we have to specify two pieces of data:
\begin{itemize}
    \item A nondegenerate isotropic subspace $J\subset V$, or equivalently a \emph{strict} cyclic special deformation retract on $V$ (Def.~\ref{def:SDR}).
    \item A dg symplectic isomorphism $J^\omega/J \xrightarrow{\cong} U$.
\end{itemize}
In this section, we will first give a convenient parametrization of the space of cyclic SDRs, strict and nonstrict, using the \emph{Hodge decomposition} of $V$ \eqref{eq:dec}. With this parametrization, we will first characterize infinitesimal deformations of SDRs, recovering the result of Cattaneo and Mn\"ev \cite{cattaneo_mnev:remarks_on_cs}. Then we look at strict SDRs onto cohomology and prove that this space is contractible. This will allows us to conclude that all effective action on cohomology, computed using a strict SDR, are homotopy equivalent. We also explain in Section \ref{sec:shapeofSDR} that this is, in some sense, the best one can hope for.

\subsection{Parametrizing special deformation retracts}
    Let us begin by giving a minimal set of data and axioms needed to characterize special deformation retracts.
    \begin{prop}\label{prop:minimalpresentationSDR}
			A strict cyclic SDR on $V$ is equivalently given by degree $-1$ self-adjoint linear map  $k$ such that
			$k^2 = 0$ and $kqk = k$. 
            \medskip
            
            A non-strict cyclic SDR $V\to U$ is equivalently given by map $k$ as above and a chain map $p\colon V \to U$ such that\footnote{The relation $p p^\dagger = 1_R$ is equivalent to saying that $p$ is a Poisson map, or that $i = p^\dagger$ is symplectic.} $p p^\dagger = 1_U$ and $p^\dagger p = 1 - [q, k]$.
		\end{prop}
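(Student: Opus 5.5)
The proof has two directions. Forward, I derive the minimal axioms from Definition \ref{def:SDR}. Backward, I reconstruct the full SDR data from the minimal data and check every axiom. Throughout I use only linear algebra, together with the adjoint identities $q^\dagger=-q$ and $(ab)^\dagger = \pm b^\dagger a^\dagger$ with the Koszul sign from the footnote after \eqref{eq:dec}.

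\emph{Strict case, forward direction.} For a strict cyclic SDR, $k$ is self-adjoint of degree $-1$ and $k^2=0$ by definition. To get $kqk=k$, multiply $1-ip = kq+qk$ on the right by $k$ and use $pk=0$ and $k^2=0$; this gives $k = kqk$.

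\emph{Strict case, backward direction.} Given $k$ with $k^2=0$, $kqk=k$ and $k^\dagger = k$, set $\pi = kq+qk$. First I check that $\pi$ is idempotent:
\[
\pi^2 = kqkq + kq^2k + qk^2q + qkqk = kq+qk .
\]
Here the middle two terms vanish, and the outer two reduce by $kqk=k$. Next, $\pi$ commutes with $q$, since $q\pi = qkq = \pi q$. It is also self-adjoint: using $(kq)^\dagger = qk$ we get $\pi^\dagger=\pi$. So $1-\pi$ is a self-adjoint idempotent chain map. Its image $U$ is therefore a dg symplectic subspace, because the complementary images of self-adjoint projectors are $\omega$-orthogonal. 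I then take $i$ to be the inclusion of $U$ and $p$ the orthogonal projection onto it. The identities $1-ip = \pi$ and $pi = 1$ hold by construction. For the side conditions, $\pi k = kqk = k$ gives $(1-\pi)k=0$, hence $pk=0$; and $ki = k(1-\pi)|_U = 0$. The cyclicity condition $i^\dagger = p$ is exactly orthogonality of the projection.

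\emph{Non-strict case.} In the forward direction, $pp^\dagger = pi = 1$ and $p^\dagger p = ip = 1-[q,k]$, while $k$ satisfies the strict axioms by the argument above. In the backward direction, I set $i = p^\dagger$. Then $pi=1$ and $ip = 1-[q,k]$ are given. The map $i$ is a chain map because it is the adjoint of the chain map $p$ and $q$ is anti-self-adjoint. For the side conditions, I get $pk = p\,pp^\dagger\ldots$; more directly, $p^\dagger p k = (1-\pi)k = 0$, and left-multiplying by $p$ (using $pp^\dagger=1$) gives $pk=0$. Taking adjoints of $pk=0$ gives $ki=0$, up to sign.

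The main obstacle, modest as it is, is keeping the Koszul signs straight in the adjoint computations. Two places need care: showing $(kq)^\dagger = qk$, and showing that a chain map $p$ has a chain-map adjoint with respect to the degree $-1$ pairing. I would settle both by the footnote computation after \eqref{eq:dec}, evaluating $\omega(kqv,w)$ directly.
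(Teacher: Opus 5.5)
Your proof is correct, but it takes a different route from the paper's.

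\textbf{The paper's route.} The paper argues through the geometric picture. It observes that $J=\Im k$ is a non-degenerate isotrope and invokes the earlier bijection between non-degenerate isotropes (equivalently, surjective Lagrangian relations) and cyclic SDRs. It then asserts that the SDR induced by $J$ has $k$ as its homotopy, since the decomposition $J\oplus qJ\oplus(J\oplus qJ)^\omega$ coincides with $\Im kq\oplus\Im qk\oplus\Im t$. In the non-strict case, $p$ is a dg symplectic isomorphism $(\Im t,q_r)\to U$ with inverse $p^\dagger$.

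\textbf{Your route.} You avoid isotropes entirely. You build $U$ as the image of the self-adjoint idempotent chain map $1-[q,k]$ and then check every SDR axiom directly.

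\textbf{Comparison.} Your argument is self-contained and more explicit. It does not need the isotrope-to-SDR bijection, nor the facts that $\Im k$ is isotropic and $\Sfree$-non-degenerate. It also verifies the side conditions $pk=0$, $ki=0$ and cyclicity, which the paper leaves as ``easy to see.'' The paper's version is terser. Its advantage is that it ties $k$ immediately to the non-degenerate isotrope $\Im k$, which is the viewpoint used in the rest of that section (dimension counts, Cattaneo--Mn\"ev deformations).

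\textbf{Minor fixes.}
\begin{itemize}
\item Delete the aborted sentence ``$pk=p\,pp^\dagger\ldots$''. The ``more directly'' argument after it is the right one.
\item To justify ``equivalently,'' say that the two constructions are mutually inverse. In the strict case this holds because $U=\Im(ip)=\Im(1-[q,k])$ is forced by $k$.
\item The chain-map property of $i=p^\dagger$ uses $q_U^\dagger=-q_U$ as well as $q^\dagger=-q$. This holds because $U$ is itself dg $(-1)$-shifted symplectic.
\end{itemize}
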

        \begin{proof}  
            In the first case, it is easy to see that $J = \Im k$ is a non-degenerate isotrope, and the induced SDR has $k$ as its homotopy operator. The decomposition $J \oplus qJ \oplus (J \oplus qJ )^\omega$ agrees with $\Im k q \oplus \Im q k\oplus \Im t$, where $t = 1-[q, k]$.
			 
           In the non-strict case, $p$ is a dg symplectic isomorphism $(\Im t, q_r)$ to $(U, q_U)$ and $i = p^\dagger$ is the inverse.
        \end{proof}

	Let $k_0$ be a fixed homotopy operator of a strict special deformation retract on a dg vector space $(V, q)$. This SDR will serve as a reference against which we compare other SDRs. It induces a decomposition $V = \Im k_0 q \oplus \Im qk_0 \oplus \Im t_0$ where $t_0 = i_0p_0$ \eqref{eq:dec}. In this decomposition, the linear maps of the SDR can be written as 
	\begin{equation*}
		k_0 = \begin{pmatrix}
			0 & k_0 & 0 \\
			0& 0& 0 \\
			0 & 0& 0
		\end{pmatrix}, \quad
		q = \begin{pmatrix}
			0 & 0& 0\\
			q & 0 & 0 \\
			0 & 0 & q_r
		\end{pmatrix}, \quad		
		t_0 =  \begin{pmatrix}
			0 & 0& 0\\
			0 & 0 & 0 \\
			0 & 0 & 1
		\end{pmatrix}.		
	\end{equation*}
	where $k_0\colon \Im qk_0 \to \Im k_0q$ is an inverse to the isomorphism $q \colon \Im k_0q \to \Im qk_0$. The map $q_r = t_0 q t_0 \colon \Im t_0 \to \Im t_0$ is the differential on the reduced space $\Im t_0$; it is zero if and only if $qk_0q = q$, in which case the reduction is canonically isomorphic to the cohomology $H(V, q)$.  
    
    To get a non-strict cyclic SDR $V\to U$, we furthermore need a map $p_0 = (0, 0, p_0)^\text{T}$ where the component $p\colon \Im t_0 \to U$ is a symplectic dg isomorphism.
    \bigskip

    We now take a different SDR given by $k$ (and $p$ in the nonstrict case). Let us start with the condition that $k$ is self-adjoint. In general, an operator $O \colon V \to V$ can be written as  \[O = (k_0q + qk_0 + t_0) O (k_0q + qk_0 + t_0) = \text{nine terms}.\]
     These get shuffled around by taking adjoints as follows:
		% https://q.uiver.app/#q=WzAsMTUsWzEsMSwiXFxidWxsZXQiXSxbMiwyLCJcXGJ1bGxldCJdLFsyLDEsIlxcYnVsbGV0Il0sWzEsMiwiXFxidWxsZXQiXSxbMywzLCJcXGJ1bGxldCJdLFsyLDMsIlxcYnVsbGV0Il0sWzMsMSwiXFxidWxsZXQiXSxbMywyLCJcXGJ1bGxldCJdLFsxLDMsIlxcYnVsbGV0Il0sWzEsMCwiXFxJbSBrZCJdLFsyLDAsIlxcSW0gZGsiXSxbMywwLCJcXEltIHQiXSxbMCwxLCJcXEltIGtkIl0sWzAsMiwiXFxJbSBkayJdLFswLDMsIlxcSW0gdCJdLFswLDEsIiIsMCx7InN0eWxlIjp7InRhaWwiOnsibmFtZSI6ImFycm93aGVhZCJ9fX1dLFsyLDIsIiIsMCx7InJhZGl1cyI6MX1dLFszLDMsIiIsMCx7InJhZGl1cyI6MSwiYW5nbGUiOi05MH1dLFs0LDQsIiIsMCx7InJhZGl1cyI6MSwiYW5nbGUiOjEzNX1dLFs1LDYsIiIsMCx7InN0eWxlIjp7InRhaWwiOnsibmFtZSI6ImFycm93aGVhZCJ9fX1dLFs3LDgsIiIsMCx7InN0eWxlIjp7InRhaWwiOnsibmFtZSI6ImFycm93aGVhZCJ9fX1dXQ==
		\[\begin{tikzcd}
			& {\Im k_0q} & {\Im qk_0} & {\Im t_0} \\
			{\Im k_0q} & \bullet & \bullet & \bullet \\
			{\Im qk_0} & \bullet & \bullet & \bullet \\
			{\Im t_0} & \bullet & \bullet & \bullet
			\arrow[tail reversed, from=2-2, to=3-3]
			\arrow[from=2-3, to=2-3, loop, in=60, out=120, distance=5mm]
			\arrow[from=3-2, to=3-2, loop, in=150, out=210, distance=5mm]
			\arrow[tail reversed, from=3-4, to=4-2]
			\arrow[tail reversed, from=4-3, to=2-4]
			\arrow[from=4-4, to=4-4, loop, in=285, out=345, distance=5mm]
		\end{tikzcd}\]
		
	For example, $k_0qOt_0$ is sent to $(k_0qOt_0)^\dagger = t_0 O^\dagger qk_0$. Therefore, we can parametrize a self-adjoint $k$ as follows\footnote{This $D$ is not related to $D$ from Section \ref{sec:ergf}}
	\[ k = \begin{pmatrix}
		A & B & C \\
		D & A^\dagger & E \\
		E^\dagger & C^\dagger & F 
	\end{pmatrix}, \quad \text{where } B^\dagger = B, D^\dagger = D \text{ and } F^\dagger = F.\]
	Note that, when reading $B^\dagger = B$, we understand $B$ as a map $V \to V$ when computing its adjoint.

    Similarly, for non-strict SDRs, let us parametrize $p$ by
    \[ p = \begin{pmatrix}
        P_1 & P_2 & P_3 
    \end{pmatrix} , \quad \text{i.e.} \quad i = \begin{pmatrix} P_2^\dagger \\ P_1^\dagger \\ P_3^\dagger \end{pmatrix}. \] 
    
	Let us now record the non-linear conditions on $k$ and $p$ from Proposition \ref{prop:minimalpresentationSDR}. For strict SDRs, we have $kqk = k$ giving
	\begin{equation}
	\begin{pmatrix}
	BqA + Cq_rE^\dagger & BqB + Cq_rC^\dagger & BqC + Cq_r F \\
	A^\dagger q A + E q_r E & A^\dagger qB + E q_r C^\dagger & A^\dagger q C + E q_r F \\
	C^\dagger q A + Fq_rE^\dagger & C^\dagger qB + F q_r C^\dagger & C^\dagger q C + F q_r F
	\end{pmatrix} \stackrel{!}{=} \begin{pmatrix}
		A & B & C \\
		D & A^\dagger & E \\
		E^\dagger & C^\dagger & F 
	\end{pmatrix}
	\end{equation}
	and $k^2 =0$ giving
	\begin{equation}
	\begin{pmatrix}
		A^2 + BD+ CE^\dagger & AB + BA^\dagger+CC^\dagger & AC +BE+CF\\
		DA + A^\dagger D + E E^\dagger & DB + (A^\dagger)^2 + EC^\dagger & DC + A^\dagger E + EF \\
		E^\dagger A + C^\dagger D + FE^\dagger & E^\dagger B + C^\dagger A^\dagger + FC^\dagger & E^\dagger C + C^\dagger E + F^2
	\end{pmatrix} \stackrel{!}{=}0.
	\end{equation}
    For non-strict SDRs, we have in addition
    \begin{equation}
        P_1 P_2^\dagger + P_2 P_1^\dagger + P_3 P_3^\dagger = 1_U
    \end{equation}
    and 
    \begin{equation}
        \begin{pmatrix}
            P_2^\dagger P_1 & P_2^\dagger P_2 & P_2^\dagger P_3 \\
            P_1^\dagger P_1 & P_1^\dagger P_2 & P_1^\dagger P_3 \\
            P_3^\dagger P_1 & P_3^\dagger P_2 & P_3^\dagger P_3 
        \end{pmatrix}
        = \begin{pmatrix}
            1-Bq & 0 & -Cq_r \\
            -Aq -qA & 1-qB & -Eq_r - qC \\
            -C^\dagger q -q_r E^\dagger & -q_rC^\dagger & 1-Fq_r - q_r F
        \end{pmatrix}.
    \end{equation}
    These equations are too complicated to be practical. We will now study two special cases: when $k$ is close to $k_0$, and when both $k$ and $k_0$ are SDRs onto cohomology.

    \subsection{Infinitesimal deformations}\label{sec:infdef}
    Assuming that $k$ is a small deformation of $k_0$, we get that all parameters $A \dots P_3$ are first order in a deformation parameter, \emph{apart from} $B = k_0 + \delta_B$ and $P_3 = p_0 + \delta_P$ with $\delta_B$ and $\delta_P$ small. Keeping only the first order, we get the following equations:
    \begin{itemize}
        \item $kqk = k$ implies
	\[  
	\begin{pmatrix}
	A & 2\delta_B & C  \\
	0 & A^\dagger & 0 \\
	0  & C^\dagger  & 0
	\end{pmatrix} = \begin{pmatrix}
	A & \delta_B & C  \\
	D & A^\dagger & E \\
	E^\dagger  & C^\dagger  & F
	\end{pmatrix}.
	\]
    i.e.\ $\delta_B$, $D$, $E$ and $F$ are automatically zero.

    \item $k^2 =0$ gives one additional condition 
    \[ Ak_0 + k_0 A^\dagger = 0. \]
    \item $pp^\dagger=1$ gives
    \[ p_0 \delta_P^\dagger + \delta_P p_0^\dagger = 0. \] 
    \item $[q, k] = 1 - p^\dagger p$ implies
    \[ P_1 = -p_0C^\dagger q, \quad P_2 = -p_0 q_r C^\dagger .\]

    \end{itemize}
    \begin{prop}\label{prop:infinitesimal} Infinitesimal deformations of an SDR $V \to U$ are given by a triple of maps
    \[ A \colon \Im k_0 q \to \Im k_0 q , \quad C\colon \Im t_0 \to \Im k_0 q, \quad \delta_P \colon \Im t_0 \to U \]
    such that
    \[ Ak_0 + k_0 A^\dagger = 0, \quad  p_0 \delta_P^\dagger + \delta_P p_0^\dagger = 0, \]
    and $|A| = |C| = -1$, $|\delta_P| = 0$.
    \end{prop}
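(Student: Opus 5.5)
The plan is to linearize the characterization of Proposition~\ref{prop:minimalpresentationSDR} around the reference SDR $(k_0,p_0)$ in the block parametrization introduced above. An infinitesimal deformation is a first-order family $k = k_0 + \epsilon\,\dot k$, $p = p_0 + \epsilon\,\dot p$ satisfying the four conditions
\[ k^\dagger = k,\quad k^2=0,\quad kqk=k,\quad pp^\dagger = 1_U,\quad p^\dagger p = 1-[q,k] \]
modulo $\epsilon^2$. Self-adjointness is built into the parametrization by $A,\dots,F$. So $B = k_0+\delta_B$ and $P_3 = p_0+\delta_P$, while $A,C,D,E,F,P_1,P_2$ are all first order. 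The proof then consists of expanding each remaining matrix equation to first order and reading off which blocks are forced and which remain free.

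\textbf{Step 1: the equation $kqk=k$.} Since $q$ has only the blocks $q\colon \Im k_0q\to\Im qk_0$ and $q_r$, every product involving $q_r$ together with two first-order quantities drops out. The surviving terms are those with $B$ appearing at least once at zeroth order, and we use $k_0qk_0 = k_0$. This gives the displayed matrix identity.
\begin{itemize}
\item Comparing the $(2,1)$, $(2,3)$, $(3,1)$ and $(3,3)$ entries forces $D=E=F=0$.
\item The $(1,2)$ entry gives $2\delta_B = \delta_B$, so $\delta_B=0$.
\item The $A$, $A^\dagger$, $C$, $C^\dagger$ entries are automatically consistent.
\end{itemize}
One should check the $(1,1)$ entry, $BqA = k_0qA = A$, using that $A$ lands in $\Im k_0q$ and $k_0q$ is the identity there. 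The other entries are checked the same way.

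\textbf{Step 2: the equation $k^2=0$.} With $D=E=F=\delta_B=0$, the only first-order entry is $(1,2)$, namely $Ak_0+k_0A^\dagger=0$. Every other entry is a product of two small terms, or involves $k_0k_0 = 0$ blocks, which vanish by the block structure. For instance, $CC^\dagger$ is second order.

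\textbf{Step 3: conditions on $p$.} The condition $pp^\dagger=1$ linearizes to $p_0\delta_P^\dagger+\delta_Pp_0^\dagger=0$, since $P_1P_2^\dagger$ is second order. The condition $p^\dagger p = 1-[q,k]$, read in its $(1,3)$ and $(2,3)$ blocks, determines $P_1$ and $P_2$ in terms of $C$ as displayed, so they carry no new data. The remaining blocks should hold automatically. Checking this is the step I expect to be the main bookkeeping obstacle, because the signs from $q^\dagger=-q$ and the Koszul conventions for adjoints of odd maps must be tracked carefully. I would verify the $(1,1)$, $(1,2)$ and $(2,2)$ entries, and the diagonal $(3,3)$ entry, which reduces to $p_0^\dagger\delta_P+\delta_P^\dagger p_0 = 0$. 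The latter follows from the $pp^\dagger$ condition after conjugating by the isomorphism $p_0$.

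\textbf{Step 4: conclusion.} Conversely, given any $(A,C,\delta_P)$ with the two stated constraints, setting $D=E=F=\delta_B=0$ and defining $P_1,P_2$ by the formulas above satisfies all linearized equations. This gives the claimed bijection. The degrees follow because $k$ has degree $-1$ and $p$ has degree $0$.
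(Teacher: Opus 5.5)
You follow the paper's route: you linearize Proposition~\ref{prop:minimalpresentationSDR} block by block, and your first-order analysis of $kqk=k$, $k^2=0$ and $pp^\dagger=1_U$ is correct.

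The gap is in your list of equations. Proposition~\ref{prop:minimalpresentationSDR} also requires $p$ to be a \emph{chain map}, $pq=q_Up$, and you never impose it. In blocks, $pq=(P_2q,\;0,\;P_3q_r)$ and $q_Up=(q_UP_1,\;q_UP_2,\;q_UP_3)$. The first two components hold automatically for $P_1=-p_0C^\dagger q$ and $P_2=-p_0q_rC^\dagger$, using $p_0q_r=q_Up_0$ and $q_r^2=0$. The third component, however, linearizes to
\[ \delta_P q_r = q_U\delta_P, \qquad\text{equivalently}\qquad [\delta_P p_0^\dagger,\, q_U]=0. \]
This follows neither from $p_0\delta_P^\dagger+\delta_Pp_0^\dagger=0$ nor from any block of $p^\dagger p=1-[q,k]$. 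For example, take the trivial retract $k_0=0$, $p_0=1_V$ of a $V$ with $q\neq 0$. Every degree-$0$ $\delta_P$ with $\delta_P^\dagger=-\delta_P$ passes all of your checks. Yet $1+\epsilon\delta_P$ is a chain map only if $[\delta_P,q]=0$, which fails in general.

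So the converse in your Step~4 breaks down as soon as $q_r\neq 0$. The triple $(A,C,\delta_P)$ must additionally satisfy $\delta_Pq_r=q_U\delta_P$. The statement as written is accurate only when $q_U=0$, for instance for retracts onto cohomology. The paper's own linearization drops the chain-map condition in the same way. Indeed, its later assertion that the $X$ in \eqref{eq:XfromSDRchange} commutes with $q$ needs exactly this condition in the $(3,3)$ block. You have therefore reproduced the paper's argument together with its omission.

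A smaller point concerns the blocks of $p^\dagger p=1-[q,k]$ you propose to verify. The $(1,1)$, $(1,2)$ and $(2,2)$ blocks are trivial at first order. The nontrivial ones are $(3,3)$, which you handle, and $(2,1)$, which you do not mention. The $(2,1)$ block reads $qA+A^\dagger q=0$ to first order, and it does hold: it equals $q(Ak_0+k_0A^\dagger)q$, because $k_0q$ and $qk_0$ are the identity on $\Im k_0q$ and $\Im qk_0$ respectively. You should state this explicitly rather than fold it into ``the remaining blocks should hold automatically.''
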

    In other words, $Ak_0 \colon \Im qk_0 \to \Im k_0 q$ is self-adjoint and $\delta_P p_0^\dagger \colon U \to U$ is anti-self-adjoint.

    \subsubsection{Dimension of the space of special deformation retracts}
    The space of strict deformation retracts is decomposed into components according to the graded dimension of the non-degenerate isotropic subspace $J:=\Im k$. Since $\Sfree$ is a non-degenerate pairing on $J$, then $\dim J_i = \dim J_{-i}$; denote $x_i = \dim J_i$. The maximal possible dimension of $J_i$, denoted $d_i$, is such that the reduction along $J$ is onto cohomology; i.e.\ $d_i + d_{i-1} + b_i =  \dim V_i$,  where $b_i = \dim H_i(V, q)$ are the Betti numbers. It was proven in the thesis of Michal Vorobel \cite[Tvrdenie~13]{Vorobel} that these are sufficient conditions on the existence of $J$: for any sequence $x_i$ such that 
    \[ x_i = x_{-i} \quad \text{and} \quad x_i \le d_i,\]
    there is a non-degenerate isotrope $J$ with $\dim J_i = x_i$. This is proven by diagonalising the pairing $\Sfree$. Loc.\ cit.\ \cite[Veta~17]{Vorobel} also computes the dimension of the space of non-degenerate isotropes as
    \[ \sum_i x_i(\dim V_i - x_i - x_{i-1}/2). \]
    We can recover this formula from the number of free parameters in $C$ and $A$ as in Proposition \ref{prop:infinitesimal}. The dimension of the space of cyclic special deformation retracts has an additional term
    $\sum_i (\dim U_i)^2/2$
    coming from $\delta_P$.

    \subsubsection{Three types of deformations of Cattaneo and Mn\"ev}
    Let us now explain that these three parameters correspond to the three types of deformations of Cattaneo and Mn\"ev \cite[Statement~4]{mnev:discreteBF}, \cite[Eq.~(4)]{cattaneo_mnev:remarks_on_cs}. In these sources, the non-degeneracy is not assumed, but since it is an open condition,\footnote{I.e.\ it picks out an open subset.} it does not give additional conditions on small deformations we consider in this section.\footnote{ For convenience, we indicate how to translate from our notation to that of \cite{cattaneo_mnev:remarks_on_cs}
    \begin{align*}
    V \;&\leadsto\; \mathcal F \\
    i \colon U \hookrightarrow V \;&\leadsto\; \iota \colon \mathcal F'\hookrightarrow F \\
    V'' = i(U)^\omega = \Im k_0 q \oplus \Im q k_0 \subset  V \;&\leadsto\; \mathcal F'' \subset \mathcal F \\
    \Im k_0 q \subset V'' \;&\leadsto\; \mathcal L \subset F''.
    \end{align*}}

    \begin{itemize}
        \item The map $A$ is used to deform the two Lagrangians in $\Im qk_0 \oplus \Im k_0 q$ by defining new non-degenerate isotrope
        \[ [1 + A^\dagger q] (\Im k_0 q) \subset  \Im qk_0 \oplus \Im k_0 q,\]
        while $\Im q k_0 $ is not deformed, and also the symplectic decomposition $V = V' \oplus V''$ does not change. This is Type I of \cite{cattaneo_mnev:remarks_on_cs}. It can be parametrized by a gauge-fixing fermion $\Psi$, a quadratic function on $\Im k_0 q$.
        
        \item The map $C$ changes the embedding of $R$, and also necessarily the decomposition of its complement. This is Type II of \cite{cattaneo_mnev:remarks_on_cs}. Indeed, the inclusion is deformed to
        \[ i = i_0 - ( C q_r i_0, qC i_0, 0 )^\mathrm{T}, \]
        i.e.\ the \emph{perpendicular part} of \cite{cattaneo_mnev:remarks_on_cs} $\delta\iota_\perp=-C q_r i_0 - qC i_0 \colon U \to V''$. We see that $\delta\iota_\perp$ is parametrized by an unconstrained map $C$. The decomposition of $V''$ into $\Im k_0q$ and $\Im qk_0$ also changes as described in loc.\ cit., for example $\Im k_0q$ changes to 
        \[ [1+ C^\dagger q] \Im k_0 q  \subset \Im k_0 q \oplus \Im i_0 p_0 \]
        
        \item The map $1 + \delta_P p_0^\dagger$ acts by a symplectomorphism of $U$, i.e.\ the decomposition $V = \Im k_0q \oplus \Im qk_0 \oplus \Im t_0$ is unchanged, and only the isomorphism $\Im t_0 \to U$ is deformed by $1+\delta_P p_0^\dagger$. This is Type III of \cite{cattaneo_mnev:remarks_on_cs}.
    \end{itemize}

\subsubsection{Infinitesimal changes of SDRs can be induced by Hamiltonian flows}
Let us take an infinitesimal symplectic automorphism $1+X$ of $V$. It can always be seen as an infinitesimal Hamiltonian flow, the Hamiltonian being
\begin{equation} \label{eq:HamfromX} H_X = \frac{1}{2} (-1)^{|x^a|} \omega_{ab} x^a X^b_c x^c,\end{equation}
where $X(e_a) = X_a^b e_b$.

If we are given a small deformation of a SDR as parametrized above, we can always induce it by $X$ in the sense that the change of $k$ is $[X, k_0]$ and the change of $p$ is $-pX$; while $[X, q] = 0$ \cite[proof of Prop.~2]{cattaneo_mnev:remarks_on_cs}. This corresponds to precomposing the Lagrangian relation $R(k_0) \colon V \to U$ with the graph of the symplectomorphism $1-X$.
\smallskip

If we parametrize this change by $A$, $C$ and $\delta_p$, the explicit form of $X$ inducing such change of the SDR is
\begin{equation}\label{eq:XfromSDRchange}
    X=\begin{pmatrix}
        \xi & 0 & -Cq_r \\
        -qA & -\xi^\dagger & -qC \\
        -C^\dagger q & -q_r C^\dagger & -p^\dagger \delta_P
    \end{pmatrix},
\end{equation}
where $\xi\colon \Im k_0q \to \Im k_0 q$ is an arbitrary degree zero map satisfying $q\xi + \xi^\dagger q =0$.

\subsection{Parametrizing retracts onto cohomology}
    A strict deformation retract is onto cohomology if $qkq = q$.     
    A non-strict retract onto cohomology is parametrized by a strict retract plus a map $p \colon V \to H(V, q)$. The peculiarity of this case is that each strict retract defines \emph{canonical} non-strict retract onto cohomology: the map $p_k$ 
	\[  V \xrightarrow{\mathrm{proj}} \underbrace{\Im qk \oplus \Im t}_{\Ker q} \xrightarrow{\mod \Im q} H(V, q)\]
    i.e.
    \[ p_k(v) = [(1-kq)(v)] \]
	is a projection forming a cyclic SDR.

    We will now restrict to such cyclic SDRs $V\to H(V, q)$ coming from a strict SDR. The conditions for $k$ simplify, especially because $q_R$ is zero and we have an additional condition $qkq = q$. 
    \begin{prop}
    	All strict SDRs between $V$ and a subspace isomorphic to cohomology are parametrized by a pair of linear degree $-1$ maps $A\colon \Im k_0 q \to \Im k_0 q$ and $C \colon \Im t_0 \to \Im k_0 q$ which satisfy a single equation 
		\begin{equation} \label{eq:strictSDRhomology} Ak_0 + k_0 A^\dagger + C C^\dagger = 0. \end{equation}
    \end{prop}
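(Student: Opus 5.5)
We work in the block decomposition $V = \Im k_0 q \oplus \Im qk_0 \oplus \Im t_0$ given by the reference SDR $k_0$, which we also take to be onto cohomology, so that $q_r = 0$. Then $q$ has the single nonzero block $q\colon \Im k_0q \to \Im qk_0$. By Proposition~\ref{prop:minimalpresentationSDR}, a strict cyclic SDR is a self-adjoint degree $-1$ map $k$ with $k^2=0$ and $kqk=k$. Being onto cohomology adds the condition $qkq=q$. As before we write
\[ k = \begin{pmatrix} A & B & C \\ D & A^\dagger & E \\ E^\dagger & C^\dagger & F\end{pmatrix}, \qquad B^\dagger = B,\; D^\dagger = D,\; F^\dagger = F. \]
The plan is to impose the three conditions in the order $qkq=q$, then $kqk=k$, then $k^2=0$, and to see how much freedom survives.

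\textbf{Step 1: $qkq=q$.} Since $q$ only maps the first block to the second, $qkq$ has a single possibly nonzero block, namely $qBq\colon \Im k_0q\to\Im qk_0$. The condition therefore reads $qBq=q$. Because $q\colon\Im k_0q\to\Im qk_0$ is invertible with inverse $k_0$, this forces $B=k_0$.

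\textbf{Step 2: $kqk=k$.} With $q_r=0$ and $B=k_0$, the large matrix equation from the previous subsection simplifies. The entries $BqA=A$, $BqB=B$, $BqC=C$, $A^\dagger qB=A^\dagger$ and $C^\dagger qB=C^\dagger$ hold automatically. They use that $k_0q$ is the identity on $\Im k_0q$ (where $A$ and $C$ land), and that $qk_0$ is the identity on $\Im qk_0$ (where $A^\dagger$ and $C^\dagger$ start). The remaining entries are not constraints; instead they \emph{determine} the other blocks:
\[ D = A^\dagger q A,\qquad E = A^\dagger q C,\qquad F = C^\dagger q C. \]
We then check that these are compatible with self-adjointness: $D$ and $F$ come out self-adjoint, and $E^\dagger$ equals $C^\dagger q A$, which is the $(3,1)$ entry. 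This is a sign check using $q^\dagger=-q$ and the Koszul rule $(ab)^\dagger = \pm b^\dagger a^\dagger$. Hence $k$ is completely determined by $A$ and $C$.

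\textbf{Step 3: $k^2=0$.} The $(1,2)$ entry is $AB+BA^\dagger+CC^\dagger = Ak_0+k_0A^\dagger+CC^\dagger =: Z$, so \eqref{eq:strictSDRhomology} is necessary. For sufficiency, we substitute $B,D,E,F$ and insert the projectors $k_0q$ and $qk_0$ (for instance, write $A=k_0qA$). Every entry of $k^2$ should then factor as (row factor)$\cdot Z\cdot$(column factor), with factors drawn from $1$, $qA$, $qC$ and their adjoints. For example, the $(1,1)$ entry is
\[ A^2+k_0A^\dagger qA + CC^\dagger qA = Z\,qA. \]
Equivalently, one can show $k = N Z' N^\dagger$ for a suitable triangular $N$ and read off $k^2$ from that expression. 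Finally, any pair $(A,C)$ satisfying \eqref{eq:strictSDRhomology} defines, via these formulas, a self-adjoint $k$ satisfying all three conditions. The non-strict data is then the canonical $p_k$, so the parametrization is bijective.

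The main obstacle is the bookkeeping of Koszul signs in adjoints of odd maps: it enters both the compatibility check in Step 2 and the factorization of all nine entries of $k^2$ through $Z$ in Step 3. I would handle this by first verifying the factorization with signs suppressed, and then fixing the signs using $q^\dagger=-q$ and $k_0^\dagger=k_0$.
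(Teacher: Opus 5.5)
Your proposal is correct and follows the paper's proof step for step: $qkq=q$ forces $B=k_0$, then $kqk=k$ determines $D=A^\dagger qA$, $E=A^\dagger qC$, $F=C^\dagger qC$ with self-adjointness automatic, and the $(1,2)$ entry of $k^2=0$ is the stated equation. Your factorization of every entry of $k^2$ through $Z=Ak_0+k_0A^\dagger+CC^\dagger$ justifies the sufficiency claim that the paper only asserts. Concretely, $k=\iota r$ with $r=(A,\,k_0,\,C)$ and $\iota=(1,\,A^\dagger q,\,C^\dagger q)^{\mathrm T}$, so $k^2=\iota\,(Zq)\,r$.
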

    \begin{proof}
	Let us apply the conditions in the following convenient order:
    
    From $qkq = q$ we get one explicit condition
		\[ q B q = q. \]
		Since $q$ is an isomorphism $\Im k_0q \to \Im qk_0$, this equation implies that $B = q^{-1} = k_0$.
		
		The equation $k q k = k$ gives the matrix equality (using the previous point)
		\begin{equation*}
 			\begin{pmatrix}
			A & B & C \\
			A^\dagger d A & A^\dagger & A^\dagger qC \\
			C^\dagger qA & C^\dagger & C^\dagger q C
			\end{pmatrix}
			= 			\begin{pmatrix}
				A & B & C \\
				D & A^\dagger &E \\
				E^\dagger & C^\dagger & F
			\end{pmatrix}.
		\end{equation*}
		The components $D$, $E$ and $F$ are determined
		\[ D = A^\dagger q A, \quad E = A^\dagger q C, \quad F = C^\dagger q C, \]
		and there are no further conditions. Note that $k^\dagger=k$ is automatically satisfied.
		
		 The equation $k^2=0$ gives the vanishing of the following block matrix
		\begin{equation*}
			\begin{pmatrix}
				A^2 + k_0A^\dagger q A + C C^\dagger q A & Ak_0 + k_0 A^\dagger + C C^\dagger & AC + k_0 A^\dagger q C +CC^\dagger q C \\
				A^\dagger q A^2 + {A^\dagger}^2 q A + A^\dagger qCC^\dagger qA & \text{--} & A^\dagger q AC + {A^\dagger}^2 qC + A^\dagger q CC^\dagger cC \\
				\text{--} & \text{--} & C^\dagger qAC + C^\dagger A^\dagger qC + C^\dagger qCC^\dagger qC 
			\end{pmatrix}
		\end{equation*}
		The entries marked with -- don't give additional constraints since $k^\dagger = k$.
		Actually, the equation
		\begin{equation} Ak_0 + k_0 A^\dagger + C C^\dagger = 0. \end{equation}
		implies all the other equations.	
    \end{proof}	
	This leads us to the main result of this section.
	\begin{theorem}\label{thm:contractible}
		The space of strict SDRs with $qkq = q$ is contractible to a point.
	\end{theorem}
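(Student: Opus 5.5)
The plan is to use the parametrization of the preceding proposition to identify the space of strict SDRs with $qkq=q$ with a vector space, and then contract it linearly to the reference point $k_0$. By that proposition, every such SDR is determined by a pair $(A,C)$ of degree $-1$ maps $A\colon \Im k_0q\to\Im k_0q$ and $C\colon \Im t_0\to \Im k_0 q$ satisfying \eqref{eq:strictSDRhomology}. Conversely, every such pair produces a valid $k$: set $B=k_0$, $D=A^\dagger qA$, $E=A^\dagger qC$, $F=C^\dagger qC$. By the proof of that proposition, $k^\dagger=k$ and $kqk=k$ hold automatically, and $k^2=0$ follows from \eqref{eq:strictSDRhomology}. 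The map $(A,C)\mapsto k$ is polynomial, hence continuous, and injective, since $A$ and $C$ are blocks of $k$. Its inverse, reading off blocks, is linear. So the space in question is homeomorphic to the solution set
\[ \mathcal S=\{(A,C)\mid Ak_0+k_0A^\dagger+CC^\dagger=0\}. \]

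Next I will solve this equation explicitly. Put $M=Ak_0\colon \Im qk_0\to \Im k_0q$. Since $k_0\colon \Im qk_0\to\Im k_0q$ is an isomorphism with inverse $q$, we get $A=Mq$, so $A$ and $M$ determine each other linearly. Using $(Ak_0)^\dagger=k_0A^\dagger$, with the sign conventions of the paper to be double-checked, the equation becomes
\[ M+M^\dagger = -CC^\dagger . \]
The map $CC^\dagger\colon\Im qk_0\to\Im k_0q$ is self-adjoint in the sense used for $B$ and $D$. Hence the general solution is $M=-\tfrac12 CC^\dagger+N$, where $C$ is arbitrary and $N$ is an arbitrary degree $-1$ map $\Im qk_0\to\Im k_0q$ with $N^\dagger=-N$. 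This gives a homeomorphism (indeed a polynomial diffeomorphism) between $\mathcal S$ and the vector space $\{(C,N)\mid N^\dagger=-N\}$:
\[ (C,N)\longmapsto \bigl(A,C\bigr)=\bigl((-\tfrac12CC^\dagger+N)q,\ C\bigr). \]
Its inverse is $(A,C)\mapsto (C,\ Ak_0+\tfrac12CC^\dagger)$.

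A vector space is contractible, so the theorem follows. Explicitly, the homotopy
\[ h_s(A,C)=\bigl((-\tfrac12 s^2CC^\dagger+sN)q,\ sC\bigr),\qquad s\in[0,1],\quad N=Ak_0+\tfrac12CC^\dagger, \]
stays inside $\mathcal S$ for every $s$. It is continuous in $(s,A,C)$, equals the identity at $s=1$, and sends everything to $(0,0)$, i.e.\ to $k_0$, at $s=0$.

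The main obstacle is bookkeeping rather than conceptual. First, I have to confirm that the adjoint operation on block components, with its Koszul signs for odd maps, really makes $CC^\dagger$ and $M+M^\dagger$ lie in the same space of self-adjoint maps. Otherwise the splitting into a "symmetric part fixed by $C$" and a "free antisymmetric part $N$" could fail by a sign. Second, I have to confirm the claim from the preceding proof that \eqref{eq:strictSDRhomology} alone implies all remaining entries of $k^2=0$, so that $\mathcal S$ is exactly the space of strict SDRs onto cohomology. I would verify these directly by substituting $A=Mq$ into the remaining block entries; for example, the $(1,1)$ entry becomes $(M+k_0A^\dagger+CC^\dagger)qA=0$.
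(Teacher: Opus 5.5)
\textbf{Verdict: correct, but by a more explicit route than the paper's; one sign in your parametrization needs flipping, and your contraction itself is already valid.}

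\textbf{Comparison with the paper.} The paper never solves \eqref{eq:strictSDRhomology}. It only notes that the equation is weighted-homogeneous ($A$ of weight $2$, $C$ of weight $1$). Hence $(t^2A,\,tC)$ stays in the solution set and contracts it to $k_0$, with no adjoint bookkeeping at all. Solving the equation, as you do, gives more: a global linear chart identifying the space with a vector space. The space is then a smooth manifold diffeomorphic to $\mathbb R^n$, and its dimension matches the count of free parameters in $C$ and $Ak_0$ from Section~\ref{sec:infdef}. In your coordinates the paper's contraction is $N\mapsto t^2N$, while yours is $N\mapsto sN$.

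\textbf{The sign you flagged comes out the other way.} The paper's convention is $(ab)^\dagger=(-1)^{|a||b|}b^\dagger a^\dagger$ (cf.\ the footnote deriving $(kq)^\dagger=qk$). This gives
\begin{itemize}
\item $(Ak_0)^\dagger=-k_0A^\dagger$ and $(CC^\dagger)^\dagger=-CC^\dagger$;
\item both $Ak_0$ and $CC^\dagger$ are even, of degree $-2$, not $-1$.
\end{itemize}
So the equation reads $M-M^\dagger=-CC^\dagger$. The free parameter is a degree $-2$ map $N$ with $N^\dagger=N$, not $N^\dagger=-N$. This agrees with the paper's remark that $Ak_0$ is self-adjoint when $C=0$.

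With your stated constraint $N^\dagger=-N$, the map $(C,N)\mapsto((-\tfrac12CC^\dagger+N)q,\,C)$ gives $Ak_0+k_0A^\dagger+CC^\dagger=2N$, so it leaves $\mathcal S$. After the flip everything goes through.

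Your homotopy $h_s$ is in fact correct as written. Since $N=Ak_0+\tfrac12CC^\dagger$ is computed from an actual solution, it automatically satisfies $N-N^\dagger=Ak_0+k_0A^\dagger+CC^\dagger=0$.

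Your second worry is also fine. Each block of $k^2$ has the form $X\,(Ak_0+k_0A^\dagger+CC^\dagger)\,Y$; for example, the $(1,1)$ entry is $(Ak_0+k_0A^\dagger+CC^\dagger)qA$ and the $(3,3)$ entry is $C^\dagger q(Ak_0+k_0A^\dagger+CC^\dagger)qC$.
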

	\begin{proof} 
	   Relative to one such SDR (with homotopy operator $k_0$), other such SDRS are given by maps $A$ and $C$ satisfying \eqref{eq:strictSDRhomology}. Thus, we can always connect this SDR to the reference one by a family
	\[A_t = t^2 A, \quad C_t = t C\]
	which preserves the validity of \eqref{eq:strictSDRhomology}. 
	\end{proof}
    \begin{remark}
	The other components of $k$ scale as
	\[ B \sim t^0, \quad D \sim t^4, \quad E \sim t^3, \quad F \sim t^2.\]
	The fact that one can consistently choose scaling powers follows from the fact that for the following matrices
	\[ 
	\text{scalings of $k$} = \begin{pmatrix}
		t^2 &t^0 &t^1\\
		t^4 & t^2 & t^3 \\
		t^3 & t^1 & t^2 
	\end{pmatrix}
	\quad \text{and} \quad
	\text{scalings of $q$}=
		\begin{pmatrix}
		0 & 0 & 0\\
		1 & 0 & 0 \\
		0 & 0 & 0 
	\end{pmatrix}
	\]
	the products corresponding to $kqk$ and $qkq$ have homogeneous entries scaling as $k$ and $q$. This fails if we consider nonzero $q_r$; for example the first entry of $kqk$ is $BqA + Cq_RE^\dagger$, with terms scaling as $t^2$ and $t^4$.
    \end{remark}

    Expressing $k$ back from the parametrizing matrices, we obtain the following interesting interpolation between homotopy operators
    \begin{align*}
        k_t &= (k_0 q + t^2 qk_0 + t i_0p_0) k (t^2 k_0 q + q k_0 + t i_0 p_0)
    \\ &= (1-t)^2 k_0 + t^2 k  \\ &+ t(1-t) \left[ kqk_0 + k_0 qk  + t (k k_0 q + q k_0 k) + (t-1)(kqkk_0q + qk_0kqk_0) + t(t-1) qk_0kk_0q\right].
    \end{align*}
    Using that the canonical projection onto cohomology is given by $p_{k_t} = p_0 (1-k_tq)$, we get that 
    \[ p_{k_t} = p_0  (1 - t kq). \]

 Let us call a non-degenerate isotrope maximal if the image of the coisotropic reduction is isomorphic to cohomology (this is equivalent to being maximal in the poset of non-degenerate isotropes).
	\begin{cor}\label{cor:homologyequiv}
		Given two maximal non-degenerate isotropes, the effective actions on cohomology are homotopic.
	\end{cor}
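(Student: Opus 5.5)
The plan is to use Theorem \ref{thm:contractible} to connect the two gauge fixings by a path of strict SDRs onto cohomology, and then integrate the resulting family of effective actions to a homotopy. Let $k_0$ and $k_1$ be the homotopy operators of the strict cyclic SDRs determined by the two maximal non-degenerate isotropes $J_0 = \Im k_0$ and $J_1=\Im k_1$. Maximality means $q k_j q = q$. We regard both as non-strict SDRs onto $H(V,q)$ via the canonical projections $p_{k_j}(v) = [(1-k_jq)v]$, so that the effective actions $W_0, W_1$ live on the same space $H(V,q)$.

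\emph{Step 1: a path.} By Theorem \ref{thm:contractible} and the explicit formula following it, there is a smooth (polynomial in $t$) family $k_t$ of strict SDRs with $qk_tq=q$, together with $p_{k_t} = p_0(1-tk_1q)$. This family interpolates between $k_0$ and $k_1$.

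\emph{Step 2: infinitesimal homotopies.} At each $t$, the derivative $(\dot k_t, \dot p_{k_t})$ is an infinitesimal deformation of an SDR in the sense of Proposition \ref{prop:infinitesimal}. By the discussion around \eqref{eq:XfromSDRchange}, it is induced by an infinitesimal symplectic automorphism $1+X_t$ of $V$ commuting with $q$. This means $R(k_{t+\delta t}) = R(k_t)\circ \operatorname{graph}(1-\delta t\, X_t)$ to first order. The map $X_t$ has a quadratic Hamiltonian $H_{X_t}$ given by \eqref{eq:HamfromX}. Since $[X_t,q]=0$, the flow preserves $\Sfree$. By Section \ref{sec:Hamflows}, pulling $\rho=e^{(\Sfree+\Sint)/\hbar}\sqrt{dV}$ back along this flow changes it by the $\Delta$-exact term $\delta t\,\Lie_{\{H_{X_t},-\}}\rho = \pm\delta t\,\Delta(H_{X_t}\rho)$, using \eqref{eq:CartanMagicFormula}. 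Pushing forward along $R(k_t)$ and using the chain map property \eqref{eq:BVischainmap}, we get
\[ \tfrac{d}{dt}\int_{R(k_t)}\rho = \pm\Delta_H \int_{R(k_t)} H_{X_t}\rho .\]
The exactness of the normalization also has to be checked. The Gaussian normalization $\int_{R(k_t)}e^{\Sfree/\hbar}$ is a number, constant in $t$ by the same argument (it is closed on $H$ with zero differential and is itself changed by an exact term, which is zero on constants in degree reasons), so the normalized effective actions also differ by exact terms. Equivalently, one can phrase this in HPL language with $P'_t$ as in \eqref{eq:perturbed_P_commutator}.

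\emph{Step 3: integrate.} Integrating over $t\in[0,1]$ gives
\[ e^{W_1/\hbar} - e^{W_0/\hbar} = (Q_H+\hbar\Delta_H)\Phi, \qquad \Phi = \int_0^1 \varphi_t\,dt, \]
where $\varphi_t$ is the primitive from Step 2. After twisting by the Gaussian factor, this is a homotopy in the sense of Definition \ref{def:homotopy}. Here $Q_H = 0$.

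An alternative for Steps 2–3 is to use the HPL formula of Section \ref{sec:homotopiesfromHPL} directly. There, $\Im i_tp_t$ varies with $t$, so one would instead compose with the isomorphisms $p_{k_t}$ to $H$.

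The main obstacle is Step 2: verifying that $t\mapsto k_t$ stays within non-degenerate surjective relations, and that the derivative is genuinely realized by a Hamiltonian $X_t$ compatible with the changing canonical projection $p_{k_t}$. In particular, one must check that $\dot p_{k_t}$ matches $-p_{k_t}X_t$ with the $\delta_P$-component. I would handle this by recomputing Proposition \ref{prop:infinitesimal} relative to the base point $k_t$, rather than $k_0$, at each $t$; this is legitimate since Theorem \ref{thm:contractible} applies relative to any reference SDR onto cohomology.
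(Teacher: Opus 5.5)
Your proposal follows the paper's own proof. Both connect the two strict SDRs by the path of Theorem \ref{thm:contractible}, and both realize each infinitesimal change of $R(k_t)$ as precomposition with the graph of $1-\delta t\,X_t$, with $X_t$ the quadratic Hamiltonian \eqref{eq:HamfromX}. Both then use \eqref{eq:CartanMagicFormula} and the chain-map property to write $\tfrac{d}{dt}\int_{R(k_t)}\rho$ as $\Delta$ of $\int_{R(k_t)}H_{X_t}\rho$, and integrate over $t\in[0,1]$.

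The one sub-step whose justification fails is the constancy of the Gaussian normalization. Its $t$-derivative is $\Delta_H$ applied to the pushforward of $H_{X_t}e^{\Sfree/\hbar}\sqrt{dV}$. That pushforward is an at most quadratic function of degree $-1$ on $H$ times the reference half-density. $\Delta_H$ of such a quadratic is a degree-$0$ constant, and degree counting does not force it to vanish. On $T^*[-1]\mathbb R$ one has $\Delta(x\xi)=1$. Concretely, a type III change $x\mapsto \lambda x$, $\xi\mapsto\lambda^{-1}\xi$ of the identification with the target rescales the normalization by $|\lambda|^{\pm1}$.

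What actually saves you is specific to the canonical projections. For every $k$ in the family, $p_k$ restricted to $\ker q$ is just the quotient map $\ker q\to H(V,q)$, and $\Im t_{k_s}\subset\ker q$. So the type III parameter $\delta_P$ vanishes along the path. Choosing $\xi=0$ in \eqref{eq:XfromSDRchange} and using $q_r=0$, $X_t$ has vanishing diagonal blocks, so $\operatorname{str}X_t=0$. By the Berezinian computation of Example \ref{ex:ComputationFlow} this means $\Delta_0 H_{X_t}=0$. By \eqref{eq:DivergenceBV}, together with $[X_t,q]=0$, the flow then preserves $e^{\Sfree/\hbar}\sqrt{dV}$, so the normalization is constant in $t$.

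The paper's proof does not address the normalization at all. With this repair, your argument is a slightly more careful version of the same proof.
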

	\begin{proof}
		We can connect two such SDRs by a path as in Theorem \ref{thm:contractible}. As explained in \cite[Theorem~3]{mnev:simplicialBF} \cite[Statement~6]{mnev:discreteBF} \cite[Proposition 2]{cattaneo_mnev:remarks_on_cs} or in our Section \ref{sec:infdef}, a small change of the special deformation retract $R_t\colon V \to H(V, q)$ induces a canonical transformation of the effective action. Indeed, such change of the SDR comes from a small symplectomorphism of $V$ \eqref{eq:XfromSDRchange}, which is given by a Hamiltonian $H_t$ \eqref{eq:HamfromX}. A small change of $R_t$ is thus implemented by the relation $\Phi^\mathrm{rel}_{H_t dt}$ \eqref{eq:Lagrsuspensionlift}. Note that, at fixed time, the relation above is the graph of the inverse of the flow of $\{H, -\}$, which agrees with the fact that a small change of a SDR is given by precomposing with the graph of $1-X$. As we calculated in Section \ref{sec:Hamflows} using BV operators, the time derivative of the effective action $e^{W_t/\hbar}$ on $H(V,q)$ is $\Delta \int_{R(k_t)} H(t)e^{S/\hbar}$, see also \cite[Eq.~(20)]{cattaneo_mnev:BVpushforward}. The full homotopy is thus given by the integral over $t\in [0,1]$ of the half-density $\int_{R(k_t)} H(t)e^{S/\hbar}$. 
	\end{proof}
	
\begin{remark}
    It would be interesting to give a simple expression for the homotopy constructed by the previous proof, possibly along the lines of \cite[Statement~6]{mnev:discreteBF}. The direct application of the procedure outlined above leads to unwieldy formulas.
\end{remark}

\subsection{The space of special deformation retracts}\label{sec:shapeofSDR}
Let us summarize what we know about the shape of the space of cyclic special deformation retracts. If we fix $V$ and $U$, the space of all cyclic SDRs $V\to U$ is a principal bundle over the space of strict SDRs, with the structure group being the dg symplectic group of $U$. We note that this group can have non-trivial topology: for example for $U=T^*[-1]\mathbb R^n$ with $q_U=0$, the group is $GL_n(\mathbb R)$.

As we have seen, for $U\cong H(V, q)$, the base of this bundle is contractible. For bigger $U$ (i.e.\ smaller non-degenerate isotropes $J$), this is not the case. For example, for $V= T^*[-1]\mathbb R^n$, a differential is given by a quadratic form $\Sfree$ on $\mathbb R^n$, and the space of non-degenerate isotropic subspaces of dimension $k$ is the complement of the vanishing locus of a polynomial function on the Grassmanniann $\mathrm{Gr}_k(n)$. For $n=2$, the space of non-degenerate 1-dimensional isotropic subspaces is homeomorphic to $S^1$, $\mathbb R$ or $\mathbb R \sqcup \mathbb R$, depending on the signature of the quadratic form $\Sfree$ \cite[Príklad~7]{Vorobel}. 

If $V= T^*[-1]\mathbb R^n$ and $\Sfree$ is positive (or negative) definite, the space of non-degenerate isotropes of fixed dimension is always connected.

\begin{remark}
    In the non-symplectic setting, special deformation retracts onto cohomology are parametrized by a choice of complements to $\Ker d \subset V$ and $\Im Q \subset Q$. Since such splittings form an affine space, the space of strict special deformation retracts is always contractible. In our case, with a symplectic form on $V$, we require that the complements are isotropic and symplectic respectively \eqref{eq:dec}, which puts non-linear conditions on the splittings.
\end{remark}

\section{Spans} \label{sec:spans}

    Let us now shortly summarize previous developments. If we start with a quantum master action $S$ on a dg $(-1)$-shifted vector space $(V, \omega, q)$, we can compute effective action for any non-degenerate isotropic subspace $J\subset V$. This is compatible with composition of relations, and we get a whole poset of effective field theories, coming from the poset of nondegenerate isotropic subspaces. Furthermore, if we reduce to cohomology $H(V, q)$, i.e.\ as much as possible, the effective actions are homotopic and thus related by non-linear formal symplectomorphisms.

    \smallskip

    Let us now consider a complementary question. Given two quantum master actions $(V, S)$ and $(\tilde{V}, \tilde{S})$ on two different spaces and a (non-linear) isomorphism relating the effective actions on their cohomologies, \emph{does there exist a common ``ancestor'' vector space with a master action $(X, S_X)$ such that one can get $(V, S)$ and $(\tilde{V}, \tilde{S})$ as BV pushforwards of $S_X$?\footnote{This can be read as asking for a span of loop homotopy Lie or quantum $L_\infty$ algebras \cite{zwiebach:closed_string,markl2001loop}, i.e.\ in this section we answer a question we posed in the linear case in \cite[Remark~4.20]{jpz:lagr_rel_and_quantum_Linfty}. An analogous statement about (non-cyclic) $L_\infty$ algebras was proven by \cite{farahani_saemann_wolf:spans_of_Linfty}; we comment on their approach below in Remark \ref{rem:FSW}.}}
        \[\begin{tikzcd}[ampersand replacement=\&]
	\& {(X,S_X)} \\
	{(V,S)} \&\& {(\tilde{V},\tilde{S})}
	\arrow[from=1-2, to=2-1]
	\arrow[from=1-2, to=2-3]
\end{tikzcd}\]
    
    We will construct such $S_X$ using the non-linear symplectomorphism (a field redefinition) which relates two homotopic actions \cite[Ch.~5.~Sec~10.1.]{costelloBook} \cite[Sec.~4.3]{doubek_jurco_pulmann:quantum_L_infty_and_HPL}, which we now recall shortly: If \[e^{S_1/\hbar} - e^{S_0/\hbar} = \hbar \Delta F\] on $(V, \omega, q)$, then there exists a time-depenedent Hamiltonian such that its time $1$ flow $\Theta_F$ satisfies
    \begin{equation} \label{eq:ThetaF} (\Theta_F)_*(e^{S_0/\hbar} \sqrt{dV})= e^{S_1/\hbar}\sqrt{dV}.\end{equation} See \cite[Sec.~2.2]{doubek_jurco_pulmann:quantum_L_infty_and_HPL} for a more careful discussion of the convergence of these maps using the weight grading.
    \medskip
    
    Let us therefore start with the following diagram of BV pushforwards.
    % https://q.uiver.app/#q=WzAsNCxbMCwwLCIoViwgZV57KFxcU2ZyZWUgKyBTKS9cXGhiYXJ9XFxzcXJ0e2RWfSkiXSxbMSwwLCIoViwgZV57KFxcdGlsZGV7U31fXFx0ZXh0bm9ybWFse2ZyZWV9ICsgXFx0aWxkZXtTfSkvXFxoYmFyfVxcc3FydHtkXFx0aWxkZXtWfX0pIl0sWzAsMSwiKFUsIGVee1cvXFxoYmFyfVxcc3FydHtkVX0pIl0sWzEsMSwiKFxcdGlsZGV7VX0sIGVee1xcdGlsZGV7V30vXFxoYmFyfVxcc3FydHtkXFx0aWxkZXtVfX0pIl0sWzAsMiwiUiJdLFsxLDMsIlxcdGlsZGV7Un0iXSxbMiwzLCJcXFBzaSIsMV1d
\[\begin{tikzcd}
	{\left(V, e^{S/\hbar}\sqrt{dV}\right)} & {\left(V, e^{\tilde{S}/\hbar}\sqrt{d\tilde{V}}\right)} \\
	{(U, e^{W/\hbar}\sqrt{dU})} & {(\tilde{U}, e^{\tilde{W}/\hbar}\sqrt{d\tilde{U}})}
	\arrow["R"', from=1-1, to=2-1]
	\arrow["{\tilde{R}}", from=1-2, to=2-2]
	\arrow["\Psi"{description}, from=2-1, to=2-2]
\end{tikzcd}\]
   Here, $(V,S)$ and $(\tilde{V},\tilde{S})$ are quantum master actions, and $R\colon V \to U$, $\tilde{R}\colon \tilde{V} \to \tilde{U}$ are two non-degenerate linear Lagrangian relations relating the corresponding half-densities.  The relation $\Psi$ is the graph\footnote{This is where we are sloppy about (formal) graded geometry in this section; we do not specify in what sense $\Psi$ is a subspace of $\flip{U}\times \tilde{U}$.} a non-linear formal symplectomorphism $U \to \tilde{U}$ again relating the corresponding half-densities.\footnote{The reason we separate the non-linear symplectic diffeomorphism $\Psi$ is that $R$ and $\tilde{R}$ need to be linear Lagrangian relations. If we allowed non-linear Lagrangian relations, or restrict to linear isomorphisms \cite{jpz:lagr_rel_and_quantum_Linfty}, we could simply assume that $R$ and $\tilde{R}$ have the same target and the effective actions agree there.} Note that the BV pushforward along a graph of a symplectic isomorphism $\Psi$ is just the pushforward along $\Psi$. 

\begin{theorem}\label{thm:spans}
Let $S, \tilde{S}, R, \tilde{R}$ and $\Psi$ be as above.
 Then there exists a quantum master action $(X, S_X)$, a constant half-density $\sqrt{dX}$ and a pair non-degenerate \emph{non-linear} surjective Lagrangian relations $T\colon X\to V$, $\tilde{T}\colon X \to \tilde{V}$ such that 
    \[ \int_T e^{S_X/\hbar} \sqrt{dX} = e^{S/\hbar}\sqrt{dV}, \quad \int_{\tilde{T}} e^{S_X/\hbar}\sqrt{dX} = e^{\tilde{S}/\hbar} \sqrt{d\tilde{V}}\]
    and such that the following diagram of BV pushforwards commutes (we only write the quantum master actions for brevity). 
% https://q.uiver.app/#q=WzAsNSxbMiwwLCIoWCwgU19YKSJdLFswLDEsIihWLCBTKSJdLFsxLDIsIihVLCBXKSJdLFs0LDEsIihcXHRpbGRle1Z9LCBcXHRpbGRle1N9KSJdLFszLDIsIihcXHRpbGRle1V9LCBcXHRpbGRle1d9KSJdLFswLDEsIlIiLDJdLFsxLDIsIlQiLDJdLFswLDMsIlxcdGlsZGV7Un0iXSxbMyw0LCJcXHRpbGRle1R9Il0sWzIsNCwiXFxQc2kiLDFdXQ==
\begin{equation}\label{diag:spanpentagon}\begin{tikzcd}
	&& {(X, S_X)} && \\
	{(V, S)} &&&& {(\tilde{V}, \tilde{S})} \\
	& {(U, W)} && {(\tilde{U}, \tilde{W})}
	\arrow["T"', from=1-3, to=2-1]
	\arrow["{\tilde{T}}", from=1-3, to=2-5]
	\arrow["R"', from=2-1, to=3-2]
	\arrow["{\tilde{R}}", from=2-5, to=3-4]
	\arrow["\Psi"{description}, from=3-2, to=3-4]
\end{tikzcd}\end{equation}
%     There exists quantum master action $(X,S_X)$ and a span of quantum $\Linfty$ morphisms $X \to V$ and $X \to \tilde{V}$.
%     \[\begin{tikzcd}[ampersand replacement=\&]
% 	\& {(X,S_X)} \\
% 	{(V,S)} \&\& {(\tilde{V},\tilde{S})}
% 	\arrow[from=1-2, to=2-1]
% 	\arrow[from=1-2, to=2-3]
% \end{tikzcd}\]
\end{theorem}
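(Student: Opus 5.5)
Take the ancestor to be the product $X = V \times \tilde{V}$ with the constant half-density $\sqrt{dX}=\sqrt{dV}\sqrt{d\tilde V}$. The idea is to use the HPL decomposition on each factor, so that $V \cong U \times U^\perp$ and $\tilde V \cong \tilde U\times \tilde U^\perp$ as symplectic spaces, where $U^\perp=\Im kq\oplus\Im qk$ is the contractible part. Then $X \cong U \times \tilde U \times U^\perp\times\tilde U^\perp$.

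\textbf{Step 1: straighten the actions.} By Section \ref{sec:homotopiesfromHPL}, $e^{S/\hbar}\sqrt{dV}$ is homotopic to $e^{(\Sfree+I(W))/\hbar}\sqrt{dV}$. The same holds for the RG-flow homotopy of Section \ref{sec:ergf_findim}. By \eqref{eq:ThetaF} there is a formal symplectomorphism $\Theta$ of $V$ with $\Theta_*(e^{(\Sfree+I(W))/\hbar}\sqrt{dV}) = e^{S/\hbar}\sqrt{dV}$, and similarly $\tilde\Theta$ on $\tilde V$. In the split coordinates, $\Sfree + I(W)$ is simply $W$ on $U$ plus a nondegenerate quadratic form on $U^\perp$. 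The pushforward along $R$ is unchanged by $\Theta$: the homotopy is $\Delta$-exact and $\int_R$ is a chain map. A careful check is that $\int_R\Theta_*(\cdot)$ gives exactly $e^{W/\hbar}$ and not just a homotopic half-density. I would arrange this by using the homotopy on $V$ whose primitive satisfies $P'(K'-\tilde K')=0$, as in \eqref{eq:homotopy_W_tilde_W}.

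\textbf{Step 2: define $S_X$ and the relations.} Put
\[
e^{S_X/\hbar} \sqrt{dX} = (\Theta\times\tilde\Theta)_*\Bigl( e^{W/\hbar}\sqrt{dU}\;\delta_\Psi\;\cdots\Bigr),
\]
but $\delta$-functions must be avoided. Instead use $X = U^\perp \times \tilde U \times \tilde U^\perp$. This space is symplectic, but possibly not dg; its differential and master action come only through $S_X$. Define
\[
S_X = \Sfree|_{U^\perp} + \tilde W + \tilde\Sfree|_{\tilde U^\perp},
\]
and then transport it by the symplectomorphism $(\Psi^{-1}\times\mathrm{id})$ followed by $\Theta$. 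This realizes $X\cong V$ on one side, where $U\cong\tilde U$ via $\Psi$. Take $T$ to be the composition of the coisotropic reduction $X\to U^\perp\times\tilde U$ killing $\tilde U^\perp$ (a linear non-degenerate Lagrangian relation, since $\tilde\Sfree$ is nondegenerate there) with the graph of $\Theta\circ(\mathrm{id}\times\Psi^{-1})$. Then $\int_T e^{S_X/\hbar}$ integrates out the Gaussian on $\tilde U^\perp$, giving $e^{(\Sfree|_{U^\perp}+\Psi^*\tilde W)/\hbar}$ pushed by $\Theta$. Since $\Psi_*e^{W/\hbar}=e^{\tilde W/\hbar}$, this is $\Theta_*e^{(\Sfree+I(W))/\hbar}=e^{S/\hbar}$. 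Define $\tilde T$ symmetrically: reduce away $U^\perp$ and compose with the graph of $\tilde\Theta$. The quantum master equation for $S_X$ follows because its ingredients are closed and a product of closed half-densities is closed.

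\textbf{Step 3: commutativity of \eqref{diag:spanpentagon}.} Use BV Fubini \eqref{eq:BVFubini}. The composite $R\circ T$ pushes $e^{S_X/\hbar}$ down to $U$ and yields $e^{W/\hbar}$ by Step 1. Applying $\Psi$ gives $e^{\tilde W/\hbar}$, which equals $\int_{\tilde R\circ\tilde T}e^{S_X/\hbar}$. Normalization constants of the Gaussian integrals are fixed by a choice of constant $\sqrt{dX}$, and possibly rescaling by a nonzero constant.

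\textbf{Main obstacle.} The delicate point is Step 1. A generic homotopy $F$ only makes $\int_R\Theta_*\rho$ homotopic to $e^{W/\hbar}$, not equal to it. So I must verify that the time-dependent Hamiltonian built from the HPL primitive preserves the fibration over $U$ in the appropriate sense. Failing that, I would absorb the discrepancy into a modified $\Psi$ via Theorem \ref{thm:contractible} and Corollary \ref{cor:homologyequiv}. There are also formal-geometry issues: non-linear relations defined as graphs, and convergence in the weight grading. I would handle these by working order by order, as in \cite{doubek_jurco_pulmann:quantum_L_infty_and_HPL}.
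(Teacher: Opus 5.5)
This is the paper's construction with the roles of $U$ and $\tilde U$ exchanged. The paper takes $X = U\oplus T^*[-1]J\oplus T^*[-1]\tilde J$ with $S_X = W+\Sfree|_J+\tilde S_{\mathrm{free}}|_{\tilde J}$ and inserts $\Psi\times 1$ on the $\tilde T$ leg, whereas you keep $\tilde U$, $\tilde W$ and insert $\Psi^{-1}$ on the $T$ leg. Otherwise each leg is, in both versions, reduction along the other isotrope followed by the field redefinition $\Theta_F$ of \eqref{eq:ThetaF}, with $\sqrt{dX}$ fixed by dividing by the two Gaussian normalizations $N_J N_{\tilde J}$.

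The ``main obstacle'' you flag does not arise, so the fallback via Theorem~\ref{thm:contractible} is unnecessary. $\Theta_*$ produces $e^{S/\hbar}\sqrt{dV}$ exactly, and $\int_R e^{S/\hbar}\sqrt{dV}=e^{W/\hbar}\sqrt{dU}$ is the definition of $W$. Hence commutativity of \eqref{diag:spanpentagon} follows at once from $\int_T e^{S_X/\hbar}\sqrt{dX}=e^{S/\hbar}\sqrt{dV}$, its tilde analogue, and the hypothesis $\Psi_*(e^{W/\hbar}\sqrt{dU})=e^{\tilde W/\hbar}\sqrt{d\tilde U}$.
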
 
We will call the top half of this diagram, i.e.\ the pair of relations from $(X, S_X)$ to $V$ and $\tilde{V}$ a \emph{span}.\footnote{Note that it is by construction an \emph{orthogonal span} as in \cite[Def.~2.26]{jpz:lagr_rel_and_quantum_Linfty}.}

\begin{proof}
Let us take care of the constant half-density factor $\sqrt{dX}$ at the end of this proof.\smallskip

Recall that the non-degenerate reduction $R$ induces a symplectic vector space isomorphism  $V \cong U \oplus T^*[-1]J$ where $J = \ker{R}$; similarly $\tilde{V} \cong \tilde{U} \oplus T^*[-1]\tilde{J}$. In this decomposition, the differential on $V$ decomposes as $q|_J \colon J \to J^*[-1]$ and $q|_U \colon U \to U$. Then $\Sfree$, the kinetic part of the action $S$, splits similarly, with $\Sfree|_J$ being a non-degenerate quadratic polynomial on $J$.

Define 
\[  X \define U \oplus T^*[-1]J \oplus T^*[-1] \tilde{J} \]
and let 
\[S_X \define W + \Sfree|_J + \tilde{S}_\textnormal{free}|_{\tilde{J}},\] which satisfies the quantum master equation since the three terms are quantum master actions on three disjoint symplectic subspaces of $X$. 

The relation $T\colon X\to V$ is given by the following composition
\[ T = X \xrightarrow{T_\text{red}} U \oplus T^*[-1]J \xrightarrow{\cong} V. \]
Here $T_\text{red}$ is the reduction along the non-degenerate isotrope $\tilde{J}$. The second arrow is a non-linear symplectic isomorphism, given by $\Theta_F$ as in \eqref{eq:ThetaF}, coming from (as explained in Section \ref{sec:homotopiesfromHPL})
\[ e^{S/\hbar} - e^{(\Sfree|_J + p^*(W))/\hbar } = \hbar \Delta F,\]
 composed with the linear map $V \cong U \oplus T^*[-1]J$. These two relate the (exponentiated) actions $W+ \Sfree|_J$ and $S$.
 The relation $\tilde{T}$ is given similarly, but we need to insert $\Psi$ as well
\[\tilde{T} = X \xrightarrow{\tilde{T}_\text{red}} U \oplus T^*[-1]\tilde{J} \xrightarrow{\Psi\times 1} \tilde{U} \oplus T^*[-1]\tilde{J}\xrightarrow{\cong} \tilde{V}.\]
In total, we get the following diagram of effective actions.
\[\begin{tikzcd}[column sep=-1em]
	& {\left(X,S_X = W + \Sfree|_J + \tilde{S}_\textnormal{free}|_{\tilde{J}} \right)} & \\
	{\left(U\oplus T^*[-1]J, W + \Sfree|_J \right)} && {\left(U\oplus T^*[-1]\tilde{J}, W + \Sfree|_{\tilde{J}} \right)} \\
	&& {\left(\tilde{U}\oplus T^*[-1]\tilde{J},  \tilde{W} + \Sfree|_{\tilde{J}}\right)} \\
	{\left(V, S\right)} && {\left( \tilde{V}, \tilde{S}  \right)}
	\arrow["{T_\text{red}}"{description}, from=1-2, to=2-1]
	\arrow["{\tilde{T}_\textnormal{red}}"{description}, from=1-2, to=2-3]
	\arrow["{\Theta_F}"', from=2-1, to=4-1]
	\arrow["{\Psi \times 1}", from=2-3, to=3-3]
	\arrow["{\Theta_{\tilde{F}}}", from=3-3, to=4-3]
\end{tikzcd}\]
\emph{Let us now also find a correct normalization $\sqrt{dX}$.} First, note that $\sqrt{dU}$ is determined by the choice of $\sqrt{dV}$ and the relation $R$; indeed $\sqrt{dU} = \int_R \exp(\Sfree|_J / \hbar)\sqrt{dV}$ \cite[Def.~3.16]{jpz:lagr_rel_and_quantum_Linfty}. Let us also fix a constant half-density $\mu_J$ on $T^*[-1]J$. Then 
\[  \sqrt{dU} \otimes \mu_J = {N_J} \sqrt{dV} \quad \text{where} \quad N_J = \int_{J\subset T^*[-1]J} e^{\Sfree|_J /\hbar} \mu_J.\]
Then, let 
\[\sqrt{dX} := \frac{1}{N_J N_{\tilde{J}}} \sqrt{dU} \otimes \mu_J \otimes \mu_{\tilde{J}}. \]
When integrating along $T_\text{red}$, the factor $N_{\tilde{J}}$ cancels out and we are left with $\sqrt{dV}$, and similarly for $\tilde{T}_\mathrm{red}$.
\end{proof}

\begin{remark} \label{rem:FSW}
    Let us explain how this relates to the work of \cite{farahani_saemann_wolf:spans_of_Linfty}. They also construct a span of transfers from a pair of $L_\infty$ algebras $V_{1,2}$ with isomorphic minimal models on $H$. Their $X$ is bigger in size, it is the resolution
    \[ V_1 \oplus V_2 \oplus H[-1]. \]
    We note that this does not generalize immediately to the cyclic case, since $H[-1]$ does not have any symplectic form in general. Instead, we expect that their formula should work if we choose a Lagrangian splitting of the cohomology $H$ and then take the odd cotangent of a \cite{farahani_saemann_wolf:spans_of_Linfty}-type resolution. The quantum master action on this resolution needs a correction term, as explained in \cite[Prop.~2.5]{farahani_saemann_wolf:spans_of_Linfty}. It would be interesting to find a Lagrangian relation between such symplectic lift of their span and our span.

    In addition, we note that \cite{farahani_saemann_wolf:spans_of_Linfty} don't allow $L_\infty$ non-linear isomorphisms in their construction (e.g.\ they assume the minimal models coincide up to a strict $L_\infty$ map). The use of non-linear isomorphisms is crucial to our construction, but from our geometric viewpoint they are quite natural.
\end{remark}

\begin{remark} In \cite[Sec.~4.4]{doubek_jurco_pulmann:quantum_L_infty_and_HPL}, we proved that each BV pushforward $V\to U$ sending $S$ to an effective action $W$ also induces a non-linear Poisson map $V\to U$ which intertwines the $S$ and $W$-twisted BV differentials \eqref{eq:QME}. One can similarly convert the diagram \eqref{diag:spanpentagon} supplied by Theorem \ref{thm:spans} to a diagram of non-linear Poisson maps.\footnote{Note that spans of Poisson maps play a role in classical $0$-shifted Poisson geometry—they form \emph{dual pairs} and \emph{Morita equivalences}, see e.g.\ \cite{weinstein:local_structure_poisson,xu:morita_poisson}.}
\end{remark}

\subsection{Spans and composition of relations}

It is natural to ask whether the span construction admits a reasonable notion of composition. First, one can construct spans by iterating application of Theorem \ref{thm:spans}.
\[\begin{tikzcd}[ampersand replacement=\&,sep=small]
	\&\& (X,S_X) \&\& \\
	\& \bullet \&\& \bullet \\
	{(V_1,S_1)} \&\& {(V_2,S_2)} \&\& {(V_3,S_3)}
	\arrow[from=1-3, to=2-2]
	\arrow[from=1-3, to=2-4]
	\arrow[from=2-2, to=3-1]
	\arrow[from=2-2, to=3-3]
	\arrow[from=2-4, to=3-3]
	\arrow[from=2-4, to=3-5]
\end{tikzcd}\]
Another way is to use composition of relations. In \cite[Sec.~4.4]{jpz:lagr_rel_and_quantum_Linfty}, we considered Lagrangian relations which induce an isomorphism of effective actions, i.e.\ exactly the input to Theorem \ref{thm:spans}. For a pair of such relations $\smash{(V_1, S_1) \xrightarrow{R_1} (V_2, S_2) \xrightarrow{R_2} (V_3,S_3)}$, by \cite[Thm.~4.18]{jpz:lagr_rel_and_quantum_Linfty}, their composition $R_2 \after R_1$ also defines an isomorphism between the effective actions if $\ker R_1^T \perp \ker R_2$. We can then construct a span with a master theory $(Y,S_Y)$ over $R_2 \after R_1$.

These two constructions are compatible in the following weak sense: There is a BV pushforward from the iterated span to the span over $R_2 \after R_1$ making the following diagram commute.
\begin{equation}\label{diag:spans_commute}
    \begin{tikzcd}[ampersand replacement=\&,sep=small]
	\& (X,S_X) \& \\
	\\
	\& (Y,S_Y) \\
	{(V_1, S_1)} \&\& {(V_3,S_3)}
	\arrow[from=1-2, to=3-2]
	\arrow[bend right, from=1-2, to=4-1]
	\arrow[bend left, from=1-2, to=4-3]
	\arrow[from=3-2, to=4-1]
	\arrow[from=3-2, to=4-3]
\end{tikzcd}
\end{equation}
This BV pushforward is given by reduction along the isotropic subspace $\ker R_1^T \cap \ker R_2 \subset X$ composed with the graph of a certain non-linear isomorphism of the form $\Theta_F$, similarly to the proof of Theorem \ref{thm:spans}. For more details, see \cite{zika:thesis}.

\appendix
\section{Graded manifolds and sign conventions}
We are vague about the definition of a graded manifold. The reader is invited to think of $\mathbb Z_2$-graded manifolds with an Euler vector field, supplying the $\mathbb Z$ grading compatible with the parity.

We see forms on a (graded) manifold $X$ as functions on $T[1]X$, i.e.\ $dx$ has a degree $|x|+1$, which goes into the Koszul rule. This rule is broken when talking about degrees of symplectic forms: the degree of $\omega_{ij} dx^i dx^j$ is $|x^i|+|x^j|$ (for $|\omega_{ij}|$ of degree $0$). This means that for $(-1)$-shifted symplectic manifolds, $\omega_{ij}$ is \emph{symmetric} in its indices, since $dx^i$ and $dx^j$ have opposite degree. 
\medskip

Let us now specify the conventions we use for $\mathbb Z$-graded manifolds. We will completely follow the conventions presented in Appendix C.1 of \cite{Kupka2026}. Let us recall from there that Lie derivative is defined as $\Lie_X = [d, i_X]$, $L_X f = Xf$ for $f$ a function. Moreover, this implies $i_X df = (-1)^X f$. Hamiltonian vector fields are given by $i_{X_H} \omega = - dH$, with $df = dx^i \partial_i f$, and $\{f ,g\} = X_f g$.  We recommend a reader interested in signs to consult \cite{Kupka2026} carefully.
\medskip

Let us now add some additional formulas which follow from these conventions. If we write  $\omega = \tfrac 12 \omega_{ij} dx^i dx^j$ for our $(-1)$-shifted symplectic form, then the Hamiltonian vector field is
\begin{equation}\label{eq:Hamiltonian}
    X_f = (-1)^{|f||x^i|} \omega^{ij} \frac{\partial f}{\partial x^i}\partial_j
\end{equation}
where $\omega^{ij}\omega_{jk} = \delta^i_k$. The Poisson bracket is
\[ \{f, g\} = (-1)^{|f||x^i|} \frac{\partial f}{\partial x^i} \omega^{ij} \frac{\partial g}{\partial x^j}.\]

For constant $\omega_{ij}$, the BV operator on half-densities (and on functions, using $\rho_0=\sqrt{dx_i}$ as the reference half-density), is by \eqref{eq:Deltasecondorder}
\[ \Delta = \frac {1}{2} \omega^{ij} \frac{\partial^2 }{\partial x^i \partial x^j} . \]
Let us also justify the signs in the formula \eqref{eq:CartanMagicFormula} i.e.\ $\Lie_{\{H, -\}} = (-1)^{|H|}[\Delta, H\cdot ]$. The ambiguity in determining the sign of a Lie derivative of a (half-)density along an odd vector field. One way to fix a sign is by formally writing, for an arbitrary vector field\footnote{Odd vector fields which don't square to zero don't have flows, see e.g.\ \cite[Footnote~17]{Cattaneo2023BVsemidensities}.} $V$
\begin{equation} \label{eq:pullbackandLie}
(\Phi_V^t)^* = 1 + t \Lie_V + \dots\end{equation}
Applying such pullback to a half-density $\sqrt{dx}$, we get from \eqref{eq:berezinianpullback}
\begin{align*} (\Phi_V^t)^*\sqrt{dx} &= \Ber{ \frac{\partial (\Phi_V^t)^* x}{\partial x} }^{\tfrac 12} \sqrt{dx} \\ &= \Ber{\delta^i_j + (-1)^{|V||x^j|}t\partial_j V^i}^{\tfrac 12}\sqrt{dx}
 \\& = \sqrt{dx} + (-1)^{|V||x^j|+ |x^j|} \tfrac 12 t \partial_iV^i\sqrt{dx} \end{align*}
which gives, again by \eqref{eq:pullbackandLie},
that the Lie derivative of the coordinate half-density is the divergence\footnote{This also agrees with the signs in \cite[Lemma~A.1]{Cattaneo2023BVsemidensities}.}
\begin{equation} \label{eq:divergence} \Lie_V \sqrt{dx}=  (-1)^{(|V|+ 1)|x^i|}\frac 12 \partial_iV^i \sqrt{dx}.\end{equation}
Alternatively, one can check that this is the only sign factor depending on $|V|$ and $|x^i|$ which implies\footnote{For example, one can take \[\Lie_{\tau \partial_\tau} \tau f(t) |dtd\tau|, \quad \Lie_{t\partial_t}\tau f(t) |dt d\tau|, \quad \Lie_{\tau t \partial_t} f(t)|dt d\tau|, \text{ and } \Lie_{\tau_1\tau_2 \partial_{\tau_1}}\tau_1 g(t_1, t_2) |dt_1dt_2d\tau_1d\tau_2|,\] with $f(t)$ and $g(t_1, t_2)$ rapidly decreasing functions.} $\int \Lie_V \rho = 0$.

\smallskip

The Cartan magic formula for half-densities \eqref{eq:CartanMagicFormula} follows by combining \eqref{eq:Hamiltonian} and \eqref{eq:divergence}.

\subsection{Symplectic vector spaces} \label{app:lineartodifgeom}
The conventions for $(-1)$-shifted symplectic vector spaces require the symplectic form to be graded-antisymmetric. Let us explain how to see such a vector space as a graded manifold. Denote $(V, \omega^\text{lin}, q^\text{lin})$ the vector space, and let us choose a basis $e_i$; the dual basis $\phi^i$ generates polynomial functions on the graded manifold $V$. This manifold has the following symplectic form $\omega^\text{geom}$ and $Q^\text{geom}$
\begin{equation}\label{eq:omega_geom_lin}
    \omega^\text{geom} = \frac 12 (-1)^{|x^i|+1}\omega^\text{lin}(e_i, e_j) d\phi^i d\phi^j
\end{equation}
and homological vector field 
\begin{equation}
    Q^\text{geom} = (-1)^{i+1} Q^i_a \phi^a \partial_i
\end{equation}
where $Q(e_j) = Q^i_j e_i$. One can check that $q^\text{lin}$ being anti-self-adjoint is equivalent to $L_{Q^{\mathrm{geom}}} \omega^\mathrm{geom} = 0$. The BV operator on functions on $V$ is $(-1)^{|e_i|}\tfrac 12 (\omega^\text{lin})^{ij}\partial_i  \partial_j$ which agrees with the conventions for vector spaces \cite[Section~3]{jpz:lagr_rel_and_quantum_Linfty}. When working in coordinates, we only use $\omega^\text{geom}$ in this paper.

\section{Hamiltonian flows and suspension}\label{app:Suspensionproof}
Let us prove that the relation $\Phi_H^\mathrm{rel}= \{ (\Phi^t_H(m), m, t, -H(m)) \mid m \in M, t \in \mathbb R \} $, defined in \eqref{eq:Lagrsuspensionlift}, is Lagrangian and compute the corresponding BV fiber integral along this relation, supplying the two missing calculations from Section \ref{sec:Hamflows}.

We will construct $\Phi_H^\mathrm{rel}\colon M \to M\times T^*[-1]\mathbb R$ as a composition:
% https://q.uiver.app/#q=WzAsNCxbMiwwLCJNXFx0aW1lcyBUXipbLTFdXFxtYXRoYmIgUiJdLFszLDAsIk1cXHRpbWVzIFReKlstMV1cXG1hdGhiYiBSIl0sWzEsMF0sWzAsMCwiTSJdLFswLDEsIlxcUHNpIl0sWzMsMCwiXFx7KG0sIG0sIHQsIDApIFxcbWlkIG0gXFxpbiBNLCB0XFxpbiBcXG1hdGhiYiBSXFx9Il1d
\[\begin{tikzcd}[column sep = 3.5em]
	M & {} & {} & {M\times T^*[-1]\mathbb R} & {M\times T^*[-1]\mathbb R}
	\arrow["{\Phi_0^\mathrm{rel} =\{ (m, m, t, 0) \mid m \in M, t\in \mathbb R\}}", from=1-1, to=1-4]
	\arrow["\operatorname{graph}\Psi^{-1}", from=1-4, to=1-5]
\end{tikzcd}\]
where $\Psi$ is the time $1$ flow on $M\times T^*[-1]\mathbb R$ of the following Hamiltonian
\begin{equation}
    t H \in \mathcal O(M\times T^*[-1]\mathbb R).
\end{equation}
Let us denote the time variable parametrizing the flow by $s$. The time $s$ flow of $tH$ is the time $ts$ flow of $H$; hence its integral curves are $m \mapsto \Phi_H^{ts}(m)$ on the $M$ factor. The coordinate $t$ stays constant. Finally, $\tau$ satisfies
\[ \frac{\partial \tau}{\partial s} = \{tH, \tau\} = H(\Phi_H^{ts}(m)). \]
Since the Hamiltonian flow preserves the degree $-1$ Hamiltonian also in odd symplectic geometry ($\{H, H\}=0$ for symmetry reasons), the change of $\tau$ is constant. The flow of $tH$ is all together
\[ (m, t, \tau) \mapsto (\Phi_H^{ts}(m), t, \tau + sH(m)) \]
At $s=1$, we get a symplectomorphism $\Psi$ sends $(m, t, \tau)$ to $(\Phi^t_H(m), t, \tau + H(m))$ and thus the graph of its inverse can be written as 
\begin{equation}
    \operatorname{graph}\Psi^{-1} = \left( (\Phi^t_H(m), t, \tau, m, t, \tau-H(m)) \right).
\end{equation}
Composing with $\Phi_0^\mathrm{rel}$, we indeed get $\Phi_H^\mathrm{rel}$ from \eqref{eq:Lagrsuspensionlift}.
\medskip

Now for the BV pushforward of $\rho$ along $\Phi^\mathrm{rel}_H = \operatorname{graph}\Psi^{-1} \circ \Phi^\mathrm{rel}_0$. First, we compute the pushforward of $\rho\in \HalfDens(M)$ along $\Phi_0^\mathrm{rel}$, which is 
\[  \rho \tau  \sqrt{dt d\tau} \in \HalfDens(M\times T^*[-1]\mathbb R), \]
as one can easily check by pairing with a test half-density on $M\times T^*[-1]\mathbb R$
\begin{equation}\label{eq:referenceforintegralagainstest}\begin{split}
     \int_{M \times T^*[-1]\mathbb R}  \rho  \tau \sqrt{dt d\tau} \cdot (\mu_t + \tau \nu_t) \sqrt{dt d\tau} &= \int_M \int_\mathbb R\rho \mu_t dt  \\ & = \int_{\Phi^\mathrm{rel}_0\subset \overline{M}\times M \times T^*[-1]\mathbb R}  \rho \otimes (\mu_t + \tau \nu_t) \sqrt{dt d\tau} 
\end{split}
\end{equation}
As a second step, we use the simple relation
\[ 
\int_{\operatorname{graph}\Psi^{-1}} \sigma = \Psi^*\sigma.
\]
Our $\Psi$ is the $s=1$ flow of $tH$. We need to pullback $\rho\tau$:
\[ \Psi^* (\rho \tau) = \Psi^*(\rho) \Psi^*(\tau).\]
The pullback of the function $\tau$ is $\tau + H$; while the pullback of the half-density $\rho$ by $\Psi$ is equal to the pullback by $\Phi_H^{t}$; the flow in $T^*[-1]\mathbb R$ direction has no effect on $\rho$, by \eqref{eq:CartanMagicFormula}. Together, we get
\begin{equation}\label{eq:appendixBVpushforward}\int_{\Phi_H^\mathrm{rel}} \rho= (\Phi_H)_*(\rho) \cdot (\tau + H) \sqrt{dt d\tau}.\end{equation}
It would be interesting to generalize these calculations to time-dependent Hamiltonians $H_t$.

    \printbibliography
\end{document}